\def\00{\mathbf{0}} \def\11{\mathbf{1}}
\def\order{\text{order}}\def\eps{\epsilon} \def\rank{\text{rank}}
\newcommand{\E}{{\rm I}\kern-0.18em{\rm E}}
   \def\11{\mathbf{1}}
\def\wit{\mathsf{wit}}
\newtheorem{theorem}{Theorem}[section]
\newtheorem{lemma}[theorem]{Lemma}
\newtheorem{fact}[theorem]{Fact}
\newtheorem{corollary}[theorem]{Corollary}
\newtheorem{claim}[theorem]{Claim}
\newtheorem{definition}[theorem]{Definition}
\newtheorem{remark}[theorem]{Remark}
\def\obj{\textsf{obj}}
\def\calS{\mathcal{S}}
\def\calM{\mathcal{M}}
\def\calX{\mathcal{X}}
\def\arcs{\textsf{arcs}}
\def\cycles{\textsf{cycles}}
\def\rank{\textsf{rank}}
\def\adarcs{\arcs}
\def\len{\textsf{len}}
\newcommand{\valid}{valid }
\newcommand{\manolis}[1]{{\color{orange}Manolis:#1}}
\title{Smoothed Complexity of SWAP  in Local Graph Partitioning\vspace{0.2cm}}
\author{
Xi Chen\thanks{Supported by NSF IIS-1838154, CCF-2106429 and CCF-2107187.}\\
Columbia University\\
\tt{xichen@cs.columbia.edu}
\and
Chenghao Guo\thanks{Supported by NSF TRIPODS program award DMS-2022448 and by NSF Career Award CCF-1940205, CCF- 2131115.} \\
MIT\\
\tt{chenghao@mit.edu}
\and
Emmanouil V. Vlatakis-Gkaragkounis\thanks{Supported by Postdoctoral FODSI Simons-Fellowship.}\\
University California, Berkeley\\
\tt{emvlatakis@berkeley.edu}
\and
Mihalis Yannakakis\thanks{Supported by NSF CCF-2107187 and CCF-2212233.} \\
Columbia University\\
\tt{mihalis@cs.columbia.edu}
}
\begin{document}

\maketitle


\begin{abstract}
We give the first quasipolynomial upper bound $\phi n^{\text{polylog}(n)}$ for the smoothed complexity of the 
SWAP algorithm for local Graph Partitioning (also known as Bisection Width), 
where $n$ is the number of nodes in the graph and $\phi$ is a parameter that measures the magnitude of perturbations applied on its edge weights.
More generally, we show that the same quasipolynomial upper bound holds for the smoothed complexity of the 2-FLIP algorithm for any binary Maximum Constraint Satisfaction Problem, including local Max-Cut, for which similar bounds were only known for $1$-FLIP.
Our results are based on an analysis
  of cycles formed in long sequences of double flips,
  showing that it is unlikely for every move in a long sequence to incur a positive but small improvement in the cut weight.

\end{abstract}

\section{Introduction}\label{sec:intro}
\setlength{\parindent}{15pt}

\emph{Local search} has been a powerful machinery for a plethora of problems in combinatorial optimization,
  from the classical Simplex algorithm for linear programming to 
  the gradient descent method for modern machine learning problems, to effective heuristics (e.g. Kernighan-Lin) for basic combinatorial problems such as the Traveling Salesman Problem and Graph Partitioning.
A local search algorithm begins with an initial candidate solution and then follows a
path by iteratively moving to a better neighboring solution until a local optimum~in~its neighborhood is reached. The quality of the
obtained solutions depends of course on how rich is the neighborhood structure that is explored by the algorithm. Local search is a popular approach to optimization because of the general applicability of the method and the fact that the algorithms typically run fast in practice. 
In contrast to their empirical fast convergence, however, many local search algorithms have exponential running time in the worst case 
due to delicate pathological instances that one may never
encounter in practice.
To bridge this striking discrepancy, Spielman and Teng \cite{spielman2004smoothed} proposed the framework of \emph{smoothed analysis}, a hybrid of the classical worst-case and average-case analyses. 
They used it to provide  
 rigorous justifications for the empirical performance of the Simplex algorithm
 by showing its smoothed complexity to be polynomial.
Since then, the smoothed analysis of algorithms and problems from combinatorial optimization \cite{beier2003random,englert2016smoothed,roglin2007smoothed}, among many other research areas such as numerical methods \cite{dadush2018friendly,bhaskara2014smoothed,roglin2007smoothed,farrell2016smoothed}, machine learning \cite{blum2002smoothed,arthur2011smoothed,arthur2006worst,sivakumar2020structured} and algorithmic game theory\cite{boodaghians2020smoothed,coordination,beier2022smoothed,xia2020smoothed}, has been studied extensively .

In this paper we study the smoothed complexity of local search algorithms for the classical problem of \emph{Graph Partitioning} (also known as \emph{Bisection Width} in the literature).
In the problem we are given 
  edge weights $X=(X_e:e\in E_{2n})$
  of a complete graph  $K_{2n}=(V_{2n},E_{2n})$ with $X_e\in [-1,1]$,
  and the goal is to find a \emph{balanced} partition $(U,V)$ of $V_{2n}$ into two equal-size subsets $U$ and $V$ to minimize the weight of the corresponding cut (i.e., the sum of weights of edges with one node in $U$ and the other node in $V$). 
Graph Partitioning has been studied extensively, especially in practice. It forms the basis of divide and conquer algorithms and is used in various application
domains, for example in laying out circuits in VLSI.
It has also served as a test bed for 
  algorithmic ideas \cite{JAMS89}.


Given its NP-completeness \cite{NPC}, heuristics have been developed to solve Graph Partitioning in practice.
A commonly used approach is based on local search: starting with an initial balanced partition,
local improvements on the cut are made iteratively until a balanced partition that mimimizes the cut  within its \emph{neighborhood} is reached.
The simplest 
neighborhood is the SWAP neighborhood, where two balanced partitions
are neighbors if one can be obtained from the other by swapping two nodes, one from each part.
A locally optimal solution under the SWAP neighborhood can be found naturally by the SWAP algorithm, which keeps swapping two nodes  as long as the swap improves the cut.
A more sophisticated neighborhood structure, which yields much better locally optimal solutions in practice, is that of the Kernighan-Lin (KL) algorithm which performs in each move a sequence of swaps \cite{KL70}. 

These local search algorithms for Graph Partitioning typically converge fast in practice. 
(For a thorough experimental analysis of their performance, and comparison with simulated annealing, regarding both the quality of solutions and the running time, see \cite{JAMS89}.)
In contrast, it is also~known~that~the worst-case complexity is exponential. (Finding a locally optimal solution for  Graph~Partitioning  \mbox{under} the sophisticated Kernighan-Lin neighborhood, and even under the  SWAP neighborhood is complete in PLS \cite{johnson1988easy,schaffer1991simple}. 
The hardness reductions give instances on which these algorithms take exponential time to converge.)
\emph{This significant gap in our understanding motivates us to work on the smoothed complexity of the SWAP algorithm  for Graph Partitioning in this paper.}

\def\calZ{\mathcal{Z}}

We work on the full perturbation model, under which edge weights are drawn independently from a collection of distributions $\calX=(\calX_e:e\in E_{2n})$. Each $\calX_e$ is supported on $[-1,1]$, and has~its   density function bounded from above by a parameter $\phi>0$.
Our goal is to understand the expected number of steps the SWAP algorithm takes, as a function of $n$ and $\phi$, against any  edge weight distributions $\calX$.\footnote{Note that any upper bound under the full perturbation model applies to the alternative, simpler model where an adversary commits to a weight $w_e$ for each edge and then 
  all edge weights are perturbed independently by a random noise $\calZ_e$ (for example, drawn uniformly from a small interval), i.e. the weights are $\calX_e = w_e + \calZ_e$. The parameter $\phi$ in the full perturbation model is a bound on the pdf of the perturbations $\calZ_e$.}
Note that the SWAP algorithm, similar
to the Simplex algorithm, is a family of algorithms since one can implement it using  different pivoting rules, deterministic or
randomized, to pick the next pair of nodes to swap when more than one pairs  can improve
the cut. We would like to establish upper bounds that hold for any implementation of the SWAP algorithm.

\subsection{Related work: Smoothed analysis of 1-FLIP  for Max-Cut}

There has not been any previous analysis on SWAP under the smoothed setting, as far as we are aware.
In contrast, much progress has been made on the smoothed analysis of the~\emph{1-FLIP algorithm for Max-Cut} \cite{elsasser2011settling,etscheid2015smoothed,angel2017local,bibak2019improving,chen2020}.
The major challenge for the analysis of  SWAP, as we discuss in more details in Section \ref{sec:approach}, is to overcome substantial new obstacles posed by the richer neighborhood structure of SWAP, which are not present in the simpler \emph{$1$-change neighborhood} behind $1$-FLIP.


 
Recall  
in Max-Cut, we are given  edge weights $X=(X_e:e\in E_n)$ 
of a complete graph $K_n=(V_n,E_n)$ with $X_e\in [-1,1]$ and the goal is to find a (\emph{not necessarily balanced}) partition of $V_n$ to maximize
 the~cut.
\footnote{Since we allow weights in $[-1,1]$,
  maximizing the cut is the same as minimizing the cut after negating all edge weights.
Hence the only difference of Max-Cut, from Graph Partitioning,  is that the partition does not have to be balanced.}
The simplest neighborhood structure for local search on Max-Cut is the so-called \emph{$1$-change neighborhood}, where two partitions are neighbors if one can be obtained from the other by moving a single node to the other side. 
The $1$-FLIP algorithm finds such a locally optimal solution by keeping moving nodes, one by one, as
long as each move improves the cut. 
For the structured perturbation model, where~a~graph~$G$ (not necessarily a complete graph) is given and only weights of edges in $G$ are perturbed,
\cite{etscheid2015smoothed}~showed that the expected number of steps $1$-FLIP takes to  terminate is at most $\phi n^{\log n}$. Subsequently, the bound  was improved by \cite{angel2017local} to $\smash{\phi \cdot \text{poly}(n)}$ for the full perturbation model,
with further improvements in \cite{bibak2019improving} on the polynomial part of $n$. 
The upper bound of \cite{etscheid2015smoothed} for the structured model was recently  improved to $\smash{\phi n^{\sqrt{\log n}}}$ in \cite{chen2020}.

\subsection{Our Contributions}

We present the first smoothed analysis of the SWAP algorithm for Graph Partitioning.
Our main result for SWAP is a 
quasipolynomial upper bound on its expected running time:\hspace{0.04cm}\footnote{We did not make an attempt to optimize the constant $10$ in the $\text{polylog}$ exponent.} 

\begin{restatable}{theorem}{swapthm}\label{thm:swap}
Let $\calX=(\calX_e:e\in E_{2n})$ be   distributions of edge weights such that 
each $\calX_e$ is supported on $[-1,1]$ and has its density function bounded from above by a parameter $\phi>0$.
Then with probability at least $1-o_n(1)$ over the draw of edge weights $X\sim \calX$, any implementation of  SWAP 
takes at most $\phi n^{O(\log^{10} n)}$ steps to terminate.
\end{restatable}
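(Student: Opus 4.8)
The plan is to follow the now-standard "improvement-vector / rank argument" template from the smoothed analysis of $1$-FLIP, but adapted to cope with the fact that a SWAP move changes two nodes at once, so its effect on the cut weight is a degree-$2$ expression in the partition rather than a linear one. Concretely, fix any sequence $\sigma$ of $L$ consecutive SWAP moves that the algorithm could execute. Each move $t$ picks a pair $(u_t,v_t)$ (with $u_t$ on one side, $v_t$ on the other) and the change in cut weight $\Delta_t$ is a fixed $\pm1$ (or $0$) integer combination of the edge weights $X_e$ incident to $u_t$ or $v_t$ in the current partition. For the algorithm not to have terminated, every $\Delta_t$ must lie in $(0,\epsilon]$ for the relevant $\epsilon$; by a union bound over starting configurations and over sequences of length $L$ it suffices to show that, for a single fixed $\sigma$, the probability that all $L$ improvements simultaneously fall in $(0,\epsilon]$ is at most $(\phi\epsilon)^{\Omega(L/\text{polylog} n)} \cdot 2^{-\omega(n\log n)}$, after which the usual calculation (choosing $\epsilon = \phi^{-1} n^{-\Theta(\log^{10} n)}$ and summing the geometric tail) yields the claimed bound on the expected number of steps.

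The heart of the argument, as the abstract signals, is a \emph{cyclic structure} lemma: in a long enough sequence of double flips one can extract a sub-collection of moves whose associated improvement vectors (viewed as vectors of coefficients over the $X_e$) are linearly independent, so that conditioning on the values of all \emph{other} edge weights leaves a system of $k$ genuinely independent linear forms in the remaining free weights, each of which must land in an interval of length $\epsilon$ — an event of probability at most $(\phi\epsilon)^k$. The obstruction specific to SWAP is that the coefficient vector of move $t$ depends on \emph{which side each node currently sits on}, i.e. on the entire history of the sequence, and two moves touching disjoint node pairs can still be linearly dependent because the relevant edges got "recolored" in between. The first step is therefore a combinatorial classification of moves in $\sigma$ by the multiset of nodes they involve and by the local sign pattern; I would look at the multigraph on $V_{2n}$ whose edges are the swapped pairs $\{u_t,v_t\}$, argue that if $L$ is large (polylogarithmic dependence enters here) then this multigraph contains many short cycles or many repeated edges, and show that each such cycle or repetition forces either (i) a fresh independent improvement form, or (ii) a linear \emph{relation} among improvement forms that pins down an edge weight exactly and hence occurs with probability $0$ — so case (ii) can be discarded and only case (i) contributes.

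Carrying this out, the key quantitative claim will be: any sequence of $L$ non-degenerate improving SWAP moves yields at least $\Omega(L / \log^{c} n)$ pairwise independent improvement forms among the perturbed edge weights, for some constant $c$ (this is where the exponent $10$, or whatever constant, is spent — presumably a few logarithmic factors from a bucketing/pigeonhole over node degrees, a few more from handling the "recoloring" interactions, and the rest from the union bound over which subset of moves is the independent one). Granting that, we condition on the non-free weights, apply the interval-probability bound to the $k=\Omega(L/\text{polylog} n)$ independent forms, take a union bound over the at most $\binom{2n}{2}^L \le n^{2L}$ choices of move sequence and $2^{2n}$ starting partitions, and balance: the failure probability is at most $n^{2L}\cdot 2^{2n}\cdot (\phi\epsilon)^{\Omega(L/\text{polylog} n)}$, which is $o_n(1)$ once $L \ge n^{\Theta(\log^{10}n)}$ and $\epsilon$ is set as above; since every run of length $\ge L$ must exhibit some such all-small-improvement window and the total weight span is $O(n^2)$, no run exceeds $\phi n^{O(\log^{10} n)}$ steps with probability $1-o_n(1)$. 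The main obstacle, and where I expect essentially all the real work to live, is the independence-extraction lemma for double flips: making precise how cycles in the "swap multigraph" interact with the time-varying coloring of the nodes, ruling out the degenerate linear relations (or showing they have measure zero), and keeping the loss only polylogarithmic rather than polynomial in $n$.
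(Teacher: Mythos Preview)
Your high-level template is right, and you have correctly identified that the combinatorics of cycles in the ``swap multigraph'' is where the work lives. But there are two genuine gaps.

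First, the union-bound arithmetic does not close. You write that the failure probability is at most $n^{2L}\cdot 2^{2n}\cdot (\phi\epsilon)^{\Omega(L/\mathrm{polylog}\,n)}$ and that this is $o_n(1)$ once $L \ge n^{\Theta(\log^{10} n)}$. But if $L$ is quasipolynomial then $n^{2L}$ is doubly exponential in $n$ and no choice of $\epsilon$ kills it. In the paper the window length never exceeds $n\log^{10}n$; the quasipolynomial appears only as $n^2/\epsilon$ after $\epsilon$ is chosen. More importantly, the paper never union-bounds over all $2^{2n}$ starting partitions. It union-bounds only over the $2^{2\ell}$ configurations of the \emph{active} nodes in a window of length $\ell$, which is what lets the $(\phi\epsilon)^{\Omega(\ell/\mathrm{polylog}\,n)}$ term win.

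Second, and this is the idea that makes the previous point possible: the reason the paper looks at cycles is not merely to manufacture independent linear forms, but to manufacture linear combinations of the move-improvement vectors that have \emph{zero coefficient on every edge touching an inactive node}. A single move's improvement vector has nonzero entries on all $\Theta(n)$ edges incident to the swapped pair, and those entries depend on the current side of every node in the graph; this is exactly what forces the naive $2^{2n}$ union bound. The paper defines a \emph{dependent cycle}: a cycle $(c_1,\ldots,c_t)$ of 2-moves together with signs $b_j\in\{\pm 1\}$ satisfying a parity condition so that $\sum_j b_j\cdot\imp(c_j)$ vanishes on every edge with an inactive endpoint. Only these combinations are used for the rank count, and their rank (and indeed the vectors themselves, restricted to active-active edges) is independent of the initial configuration. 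Your case (ii) --- ``a linear relation pins down an edge weight exactly, probability $0$'' --- is not how the argument goes; the cycles are used constructively, and not every cycle in the swap multigraph is usable, only those satisfying the sign/parity criterion. The substantial technical work (Sections~4--5 of the paper) is then a window-selection argument plus a delicate ``split auxiliary graph'' construction that finds $\Omega(\ell/\log^{10}n)$ dependent cycles with linearly independent improvement vectors inside some window of length $\ell\le n\log^{10}n$; this is considerably more involved than a pigeonhole over node degrees.
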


The proof of Theorem \ref{thm:swap} for SWAP is based on techniques we develop for a more challenging problem: the smoothed analysis of \emph{$2$-FLIP  for Max-Cut}.
Starting with an initial partition (not necessarily balanced), in each round, $2$-FLIP can move either one node (like $1$-FLIP) or two nodes (not necessarily in different parts) as long as the cut is improved. If we restrict the algorithm to only use double flips in every move, then we call this variant {\em pure} 2-FLIP.
Feasible moves in SWAP are clearly feasible in pure 2-FLIP as well but not vice versa. Thus, an improving sequence of SWAP in the Graph Partitioning problem is also an improving sequence of pure 2-FLIP in the Max-Cut problem on the same instance.

We do not make again any assumption on the pivoting rule used by 2-FLIP (i.e., which move is selected in each step if there are multiple improving moves), except that if both single and double flips are allowed, then the algorithm never moves a pair of nodes when moving only one of the two nodes would yield a better cut. Clearly, any reasonable implementation of 2-FLIP  satisfies this property.
Our main result on 2-FLIP is a similar quasipolynomial upper bound on its expected running time. 
The same result holds also for any implementation of the pure 2-FLIP algorithm that performs only 2-flips.
This is the first smoothed analysis of $2$-FLIP:

\begin{restatable}{theorem}{mainthm}\label{thm:main-high-prob}
Let $\calX=(\calX_e:e\in E_n)$ be  distributions of edge weights such that 
each $\calX_e$ is supported on $[-1,1]$ and has its density function bounded from above by a parameter $\phi>0$.
Then with probability at least $1-o_n(1)$ over the draw of edge weights $X\sim \calX$, any implementation of the $2$-FLIP algorithm 
takes at most $\phi n^{O(\log^{10} n)}$ steps to terminate.
\end{restatable}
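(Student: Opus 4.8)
The plan is to adapt the ``no long run of tiny improvements'' strategy from the smoothed analysis of $1$-FLIP, supplying a new combinatorial core tailored to the double-flip neighborhood. Fix an arbitrary admissible implementation of $2$-FLIP; I will show that with probability $1-o_n(1)$ its run has fewer than $L_0:=\phi\, n^{C\log^{10}n}$ steps, for a sufficiently large absolute constant $C$. Write $\Delta_t$ for the increase in the cut weight at step $t$. Every move---a single flip of a node $v$, or a double flip of a pair $\{u,v\}$---changes the cut by an integer linear combination of the weights $(X_e:e\in E_n)$ with coefficients in $\{0,\pm1,\pm2\}$ (for a double flip of $\{u,v\}$ the weight $X_{uv}$ actually cancels), and a move is improving exactly when $\Delta_t>0$; since the cut weight lies in $[-n^2,n^2]$ and strictly increases, $\sum_t\Delta_t\le 2n^2$ over the whole run. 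If the run reaches step $L_0$, split the first $L_0$ steps into $\lfloor L_0/\ell\rfloor$ disjoint windows of $\ell:=\log^{K}n$ consecutive steps, for a suitable constant $K$; these windows cannot all have total improvement exceeding $\eps:=O(n^2\ell/L_0)$, so at least one window of $\ell$ consecutive improving moves has total improvement at most $\eps$. Thus it suffices to bound by $o_n(1)$ the probability that \emph{some} length-$\ell$ window among the first $L_0$ steps consists of improving moves with total improvement at most $\eps$.

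\textbf{Reduction to a rank claim.}
The improvements inside a window of $\ell$ moves are determined by a \emph{transcript}: the ordered list of which node or pair is flipped at each of the $\ell$ steps, together with the sides---at the moment the window begins---of the $O(\ell)$ vertices that the window touches. (The mild assumption that the implementation never moves a pair when moving a single one of its nodes would be strictly better is used here, to keep the set of transcripts that can arise small.) There are at most $n^{O(\ell)}$ transcripts, and for a fixed transcript $\tau$ every $\Delta_t$ of the window, and every fixed linear combination $\sum_t b_t\Delta_t$ of them, is a fixed integer linear form in $(X_e)$. The technical heart of the proof is the following \emph{rank claim}: for every realizable $\tau$ there is a family of $R=\Omega(\ell/\mathrm{polylog}(n))$ such combinations $M_1,\dots,M_R$, each with integer coefficients bounded in absolute value by $\mathrm{polylog}(n)$, whose $R\times n$ coefficient matrix has rank $R$. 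Granting the claim, pick $R$ variables $X_{e_1},\dots,X_{e_R}$ witnessing the rank and condition on all other weights: then $(M_1,\dots,M_R)$ is an invertible integer linear image of $(X_{e_1},\dots,X_{e_R})$, so its joint density is at most $\phi^R/|\det|\le\phi^R$, while on the window each $\Delta_t\in(0,\eps]$ forces each $M_j$ into an interval of length $\mathrm{polylog}(n)\cdot\eps$. Hence, for a fixed $\tau$, the probability that the window is improving with total improvement $\le\eps$ is at most $(\mathrm{polylog}(n)\cdot\phi\eps)^R$. A union bound over the $\le L_0$ starting positions and $\le n^{O(\ell)}$ transcripts, with $\eps=O(n^2\ell/L_0)$, gives $L_0\cdot n^{O(\ell)}\cdot(\mathrm{polylog}(n)\cdot\phi\eps)^R=n^{O(\ell)}\cdot(\mathrm{poly}(n)\,\phi)^R\cdot L_0^{1-R}$, which is $o_n(1)$ for $L_0=\phi\, n^{C\log^{10}n}$ once $C$ is large enough, because $R=\Omega(\ell/\mathrm{polylog}(n))$ is itself a fixed power of $\log n$ and hence so is $\ell/R$.

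\textbf{The combinatorial core: cycles of double flips.}
It remains to prove the rank claim, which is where I expect the real difficulty. Attach to a transcript its \emph{move multigraph} $H$ on $V_n$: a (multi-)edge $uv$ for each double flip of $\{u,v\}$, and a loop at $v$ for each single flip of $v$. After the cancellation of $X_{uv}$, a double flip of $\{u,v\}$ is a signed sum of ``boundary'' weights $X_{uw}$ and $X_{vw}$; a single flip of $v$ is a signed sum of boundary weights $X_{vw}$. If $H$ has $\Omega(\ell/\mathrm{polylog}(n))$ distinct non-loop edges, then a near-maximal such set can be processed one edge at a time: each new double flip $\{u,v\}$ contributes its ``private'' weight $X_{uv}$---which occurs in no previously chosen form---so the single forms of these moves already give rank $\Omega(\ell/\mathrm{polylog}(n))$. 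The hard regime is when $H$ has few distinct edges while $\ell$ is large, so some pair (or node) is flipped many times. Two flips of the same pair cannot be adjacent---an improving move strictly changes the cut---and more generally no partition repeats, so between repeated flips of a pair other moves must intervene and flip vertices in its neighborhood; tracing these forced dependencies produces \emph{cycles} of double flips in $H$. Along a cycle $u_1u_2\cdots u_ku_1$ the successive flips alternately re-flip the vertices $u_i$, so in the (signed) sum of the corresponding forms the boundary weights telescope and one is left with a form supported on the cycle's own edges $X_{u_iu_{i+1}}$, with nonzero bounded integer coefficients; when the cycle is an interval of the window, no extra ``in-between'' terms survive. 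A counting argument---few distinct edges together with a long window force many edge-disjoint (or otherwise independent) such cycles---then produces $R=\Omega(\ell/\mathrm{polylog}(n))$ cycle-combinations with independent coefficient matrices. The obstacles to doing this cleanly are that $H$ is not simple (double flips sharing an endpoint have interacting forms), that interleaved single flips must be accounted for, and that one must certify at every step that a genuinely fresh weight enters the span so that the rank actually grows; handling this bookkeeping is precisely the ``analysis of cycles formed in long sequences of double flips'' announced in the introduction, and is the step I expect to be the main obstacle.

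\textbf{From $2$-FLIP to SWAP.}
Finally, Theorem~\ref{thm:swap} requires no new probabilistic argument: an improving SWAP sequence on an instance of Graph Partitioning is an improving pure-$2$-FLIP sequence on the same edge weights in which every move is a \emph{balanced} double flip (one node taken from each side), so its windows form a subset of the transcripts already analyzed and the identical union bound applies, with $E_{2n}$ in place of $E_n$.
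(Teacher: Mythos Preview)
Your high-level template---union bound over transcripts, control each transcript via a rank lower bound---matches the paper, but two concrete errors break the argument as written.

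The main gap is the choice of window length $\ell=\log^{K}n$. Your rank claim is simply false at that scale: take the valid transcript of $\ell$ disjoint double flips $\{v_1,v_2\},\{v_3,v_4\},\ldots,\{v_{2\ell-1},v_{2\ell}\}$ on $2\ell$ distinct vertices. By \eqref{eq:2-move-improvement}, the improvement $\Delta_t$ depends on the side $\gamma(w)$ of \emph{every} vertex $w$, including the $n-2\ell$ inactive ones; since each active vertex occurs in exactly one move, no nontrivial linear combination of the $\Delta_t$'s has a coefficient vector that is independent of the inactive configuration. So either your transcript must record the full initial partition (and the union bound explodes to $2^{n}$ terms, not $n^{O(\ell)}$), or the usable rank in this window is $0$. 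This is precisely why the paper does \emph{not} fix a polylog window: Lemma~\ref{maintech} asserts only that \emph{some} window inside any valid sequence of length $\ge n\log^{10}n$ has high arc/cycle rank, and the scale $n$ is essential---it forces enough vertex repetitions for the bucketing argument of Section~\ref{sec:good-window} to produce a window in which many moves involve a vertex that appears $\Theta(\mathrm{polylog}\,n)$ times.

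The secondary error is in your ``easy case'': the weight $X_{uv}$ is \emph{not} a private coordinate of the double flip $\{u,v\}$---it cancels entirely (again \eqref{eq:2-move-improvement})---so ``many distinct non-loop edges'' does not buy rank via private weights. The paper's replacement for private weights is the arc/cycle machinery of Section~\ref{sec:setup}: Lemmas~\ref{lem:concellation-of-external-edges-arc} and~\ref{lem:concellation-of-external-edges-cycle} show that these specific combinations are supported only on edges among active vertices, which is what makes the $n^{O(\ell)}\cdot 2^{O(\ell)}$ union bound legitimate. The genuinely hard part---which you correctly anticipate---is then to manufacture $\Omega(\len(W)/\log^{10}n)$ dependent cycles with linearly independent improvement vectors, and this is what the split-auxiliary-graph construction and the tree-growing argument of Section~\ref{sec:finding-cycles} supply; the obstacle you name (that deleting a high-degree node to avoid reusing a witness can destroy too many edges) is exactly what the splitting of $V_2$-nodes into copies is designed to sidestep.
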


A more general class of problems that is related to Max-Cut is the class of {\em Maximum Binary Constraint Satisfaction Problems} 
(MAX 2-CSP).
In a general Max-2CSP, the input consists of a set of Boolean variables and a set of weighted binary constraints over the variables; the problem is to find an assignment to the variables that maximizes the weight of the satisfied constraints. Max-Cut is the special case  when every constraint is a $\neq$ (XOR) constraint. Other examples are Max 2SAT and Max Directed Cut (i.e.,~the~Max Cut problem on weighted directed graphs).
More generally, in a {\em Binary Function Optimization Problem}  (BFOP), instead of binary constraints the input has a set of weighted binary functions on the variables, and the objective is to find an assignment that maximizes the sum of the weights of the functions (see Section \ref{csp} for the formal definitions).
It was shown in \cite{chen2020} that the results for $1$-FLIP for Max-Cut   generalize to all Max 2-CSP and BFOP problems.
We prove that this is the case also with 2-FLIP.

We say an instance of Max 2-CSP or BFOP is {\em complete} if it includes a constraint or function for every pair of variables.

\begin{restatable}{theorem}{cspthm}\label{thm:csp}
Let $I$ be an arbitrary complete instance of a MAX 2-CSP (or BFOP) problem with $n$ variables and $m$ constraints (functions) with independent random weights in $[-1,1]$ with density at most $\phi>0$.
Then, with probability at least $1-o_n(1)$ over the draw of the weights, any implementation of  2-FLIP 
takes at most $m \phi n^{O(\log^{10} n)}$ steps to terminate.
\end{restatable}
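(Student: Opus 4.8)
The plan is to prove Theorem~\ref{thm:csp} by \emph{reducing the BFOP case to the structural core} of the proof of Theorem~\ref{thm:main-high-prob}, in the same spirit in which \cite{chen2020} lifts its $1$-FLIP analysis from Max-Cut to all Max 2-CSP and BFOP instances. Since every Max 2-CSP is a BFOP, it suffices to treat BFOP. First I would do some bookkeeping: merge all functions placed on a given unordered pair $\{i,j\}$ into a single effective function $F_{ij}$ (each of its four values is a linear combination, with bounded coefficients, of the original weights on that pair), and write $F_{ij}$ in multilinear form $F_{ij}(\sigma_i,\sigma_j)=A_{ij}+B_{ij}\sigma_i+C_{ij}\sigma_j+D_{ij}\sigma_i\sigma_j$ over the $\pm 1$ encoding of the variables. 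Summing over pairs, the objective equals a fixed constant plus a linear term $\sum_i L_i\sigma_i$ plus a quadratic term $\sum_{\{i,j\}}D_{ij}\sigma_i\sigma_j$. The constant is irrelevant to the dynamics; the only point of this is the following observation, which I would record as the key structural claim: for any set $S$ of at most two variables, flipping $S$ changes the objective by $\Delta_S(\sigma)=\sum_{k=1}^{m}w_k\, g_k(S,\sigma)$, where $g_k(S,\sigma)=f_k(\sigma^S)-f_k(\sigma)$ is a deterministic quantity that (a) is bounded, (b) vanishes unless $S$ meets the pair on which the $k$-th function is defined, and (c) depends on $\sigma$ only through the values of the at most two variables of that pair; moreover, whenever $g_k(S,\sigma)\neq 0$ it is bounded below in absolute value by a constant $c_0>0$ (using the boundedness/finiteness conventions of the BFOP model of Section~\ref{csp}).

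This is exactly the structure exploited in the proof of Theorem~\ref{thm:main-high-prob}: in the Max-Cut case one has $g_k(S,\sigma)\in\{0,\pm 2\}$ (up to a global scaling), determined locally by the current partition, with the weights $w_k$ independent and of density at most $\phi$. In the BFOP case the weights $w_1,\dots,w_m$ are still mutually independent with density at most $\phi$ — there is genuinely \emph{no dependency issue}, because every move-improvement is expressed directly in terms of the original weights rather than in terms of derived ``edge weights'' — and the coefficients $g_k(S,\sigma)$ are still bounded, locally determined, and bounded below when nonzero. Consequently I would re-run the entire argument of Theorem~\ref{thm:main-high-prob} essentially unchanged: the combinatorial setup (an improving sequence of $\le 2$-flips on the complete graph $K_n$ on the variables, with a function playing the role of each edge), the analysis of cycles formed in long sequences of double flips, and the anti-concentration/rank estimates showing a long run of positive-but-small improvements is exponentially unlikely. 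The only modifications are quantitative: the anti-concentration bound loses a factor $1/c_0=O(1)$, which merely rescales $\phi$ by a constant, and the union bound now ranges over the $m$ functions and over improvement values on a scale $O(m)$, producing the extra $\poly{m}$ factor and hence the stated $m\,\phi\, n^{O(\log^{10} n)}$ bound. Completeness of the instance is used precisely as in Theorem~\ref{thm:main-high-prob}, to guarantee that the graph whose edges are the functions is complete, which the cycle argument relies on. The same argument, restricted to improving sequences consisting only of double flips, gives the claim for \emph{pure} 2-FLIP.

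The main obstacle I anticipate is not conceptual but one of careful verification: making sure that no step in the proof of Theorem~\ref{thm:main-high-prob} secretly uses a Max-Cut-specific feature. Three points need attention. (i) In Max-Cut all quadratic coefficients equal $\pm 1$ after scaling and there is no linear term, whereas here the $D_{ij}$ differ and an external field $\sum_i L_i\sigma_i$ is present, so every place where uniformity of edge weights or absence of a field was (implicitly) invoked must be revisited and replaced by ``bounded, locally determined, bounded below when nonzero''. (ii) In Max-Cut, flipping both endpoints of an edge leaves that edge's contribution unchanged, while here flipping both variables of a pair $\{i,j\}$ can still change $F_{ij}$ by $w_{ij}(-2\sigma_iB_{ij}-2\sigma_jC_{ij})$; this adds a few new (but harmless) terms to the forms $\Delta_S$ and to the cycle constraints, and one must confirm the rank argument still absorbs them. (iii) One must check that the $\poly{m}$ blow-up is genuinely only polynomial, i.e. that the number of distinct improvement linear forms and the scale of improvements are controlled by $m$ and not by anything larger. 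As in \cite{chen2020}'s lift of the $1$-FLIP analysis, I expect all three points to go through without new ideas, but verifying them is the bulk of the work behind Theorem~\ref{thm:csp}.
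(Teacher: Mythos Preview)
Your approach is essentially the same as the paper's: both reduce BFOP to the Max-Cut analysis by observing that move-improvements are still linear in the independent weights with bounded, locally determined coefficients, and then re-run Lemma~\ref{maintech} and the union bound of Theorem~\ref{thm:main-high-prob}. The paper's execution is somewhat more direct than what you sketch, and in particular it pre-empts your concerns (i) and (ii). Rather than tracking the multilinear decomposition and an external field, the paper first replaces every separable binary function by its two unary pieces, so that every remaining binary $f_k$ satisfies $f_k(-1,-1)+f_k(1,1)\neq f_k(-1,1)+f_k(1,-1)$; completeness then means there is a \emph{nonseparable} function on every pair. The key observation is that the rank lower bounds in Lemma~\ref{maintech} are proved by exhibiting a lower-triangular submatrix with nonzero diagonal, so only the \emph{zero--nonzero pattern} of the arc/cycle improvement vectors matters, not the precise coefficients. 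For each witness edge $\{u,v\}$ one picks any nonseparable $f_k$ on that pair and checks, by the same parity-of-moves computation as in Max-Cut, that the relevant matrix entry is zero exactly when it was zero for Max-Cut, and is nonzero (equal to $\pm[f_k(a,b)+f_k(-a,-b)-f_k(-a,b)-f_k(a,-b)]$) exactly when it was nonzero for Max-Cut. This dispatches your concern (i) without ever invoking ``uniformity of edge weights,'' and it dispatches (ii) because in every case (arcs, the 2-cycles of Lemma~\ref{lemma54}, and the long cycles of Lemma~\ref{main}) the chosen witness pair $\{u,v\}$ is, by construction, never equal to the pair of a 2-move in the arc/cycle, so the ``both-endpoints-flip'' term you worry about never contributes to a witness entry. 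Finally, the extra factor of $m$ does not come from enlarging the union bound over functions; it comes only from the objective now ranging over $[-md,md]$ rather than $[-n^2,n^2]$ in the analogue of \eqref{eq:bound-of-length}.
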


For all the aforementioned problems, by controlling the tail-bound of the failure probability, we can strengthen our analysis to derive the same bound for the expected number of steps needed to terminate as in  the standard smoothed analysis prototype (See Corollary~\ref{thm:main-exp}).

\subsection{Our Approach}\label{sec:approach}

Here, we give an overview of our proof approach, focusing on the analysis of the $2$-FLIP algorithm for Max-Cut (Theorem \ref{thm:main-high-prob}). Many details are omitted in this subsection, to help the reader get an overall view of some of the key ideas and the structure of the proof. Note that 2-FLIP clearly subsumes 1-FLIP, since it explores a much larger neighborhood structure. For example, a 2-FLIP algorithm could apply improving 1-flips as long as possible, and only when the partition is locally optimal with respect to the 1-flip neighborhood apply an improving 2-flip. Therefore, the complexity (whether smoothed or worst-case) of 2-FLIP is clearly at least as large as the complexity of 1-FLIP, and could potentially be much larger. Similarly, the analysis of 2-FLIP has to subsume the analysis of 1-FLIP, but it needs to address many more challenges, in view of the larger space of possible moves in each step (quadratic versus linear).  In a sense, it is analogous to the difference between a two-dimensional and a one-dimensional problem.

First, let's briefly review the approach of previous work \cite{etscheid2015smoothed} on the simpler 1-FLIP problem. Since the edge weights are in $[-1,1]$, the weight of any cut is in $[-n^2,n^2]$. For the execution of the FLIP algorithm to be long, it must have many moves where the gain in the cut weight is very small, in $(0,\epsilon]$ for some small $\epsilon>0$. It is easy to see that any single move by itself has small probability ($\phi \epsilon$) of this being the case. If different moves were uncorrelated, then the probability that a sequence increases the weight of the cut by no more than $\eps$ would go down exponentially with the length of the sequence. Of course, different moves are correlated. However, the same effect holds if the improvements of the moves are linearly independent in the following sense.  For any sequence of the FLIP algorithm, the \emph{improvement vector} of one move is the vector indexed by the edges with entries in $\{-1,0,1\}$ indicating whether each edge is added or removed from the cut
as a result of the move.
Most work along this line of research is based on the following fact (see Corollary \ref{cor:rank} for the formal statement): If the rank of the set of improvement vectors is $\rank$, then the sequence has improvement at most $\epsilon$ with probability at most $(\phi \epsilon)^{\rank}$. 
On the other hand, if all sequences with length at most $\Theta(n)$ have an improvement of at least $\epsilon$, then the number of steps of FLIP is bounded by $\Theta(n)\cdot (2n^2/\epsilon)=\text{poly}(n)/\eps$, as the total improvement cannot exceed $2n^2.$
So a natural approach is to union bound over all possible sequences of length $\Theta(n)$ and all $2^n$ possible initial configurations, which yields a probability upper bound of 
$2^nn^{\Theta(n)}(\phi\epsilon)^{\rank}$.

Getting a quasi-polynomial complexity bound using the union bound above requires the \rank \ of any sequence of length $\Theta(n)$ to be at least $\Omega(n/\log n)$. However, this is not always true (consider, e.g., a sequence in which only $n^{0.1}$ distinct nodes moved). 
One key idea of \cite{etscheid2015smoothed} is to avoid union bound over all initial configurations and only union bound over initial configurations of \emph{active} nodes (nodes that move at least once in the sequence) by looking at \emph{arcs}. 
An arc is defined to be two adjacent moves of the same node. 
By taking the sum of improvement vectors of the two moves of an arc, edges that involve inactive nodes are cancelled, so the union bound over sequences of length $\ell$ becomes 
$2^\ell n^\ell (\phi\epsilon)^{\rank_{\arcs}}$.

To lower bound the rank of arcs of sequence $S$, $\rank_{\arcs}(S)$, they proved it is at least half of the number of nodes that appear more than once in the sequence, denoted $V_2(S)$.
The essential combinatorial claim made by \cite{etscheid2015smoothed} is that for any sequence of length $\Omega(n)$, there exists a substring of length $\ell$ with $V_2(S)$ at least $\Omega(\ell/\log n)$. This can be shown by bucketing arcs by length into buckets $[2^i,2^{i+1})$ and picking the largest bucket as length of the substring. On average, a random substring would contain $\Omega(\ell/\log n)$ arcs with similar length, and therefore, $\Omega(\ell/\log n)$ arcs with distinct nodes. The similar idea is used in Case 1 of our Section~\ref{sec:general-case} to handle 1-moves (moves that flip a single node).

Now let's return to the case of the 2-FLIP algorithm. A step now can move two nodes at the same time, and this fact poses qualitatively new challenges to the proof framework.
Now we have to deal not just with sets (e.g., the set of nodes that move more than once) but instead with relations (graphs). 
Define an \emph{auxiliary graph} $H$ for the sequence of moves that contains $K_n$ as vertices and an edge for each 2-move of the sequence. 
If we still want to eliminate the influence of inactive nodes in the improvement vector by summing or subtracting two moves as in the 1-FLIP case, the moves have to contain the \emph{exact} same pair of nodes. 
This happens too sparsely in the improving sequence of 2-FLIP to provide enough rank. 
To this end, we generalize the notion of arcs to \emph{cycles}. 
A cycle is a set of 2-moves of the sequence whose corresponding edges form a cycle in $H$. But not all cycles of $H$ are useful.
We are interested only in cycles for which there is a linear combination of the improvement vectors of the moves of the cycle that cancels all edges of $K_n$ that involve an inactive node (i.e., the corresponding entry in the linear combination is 0); these are the cycles that are useful to the rank and we call them \emph{dependent cycles}.

So the goal is to find a substring $S$ of length $\ell$ where we can lower bound $\rank_{\cycles}(S)$ by $\ell/\text{polylog} (n)$. 
The ideal case 
would be the case where all nodes have $O(\text{polylog} (n))$ but at least $2$ appearances in the substring, i.e., all nodes have degree between 2 and $O(\text{polylog} (n))$ in $H$. 
In this case, we can repeat the following process to find enough cycles. Find a dependent cycle in $H$, pick an edge in $K_n$ that is non-zero in the improvement vector of the cycle (we call this the \emph{witness} of the cycle) and delete both nodes of the witness from $H$.
This way the improvement vector of cycles of $H$ we pick in the future will not contain witnesses from previous cycles, and improvement vectors of cycles we pick would form a triangular matrix that has full rank. 
Since any node has $O(\text{polylog}( n))$ degree in $H$, each iteration deletes $O(\text{polylog}( n))$ edges. 
So the process can be repeated at least $\Omega(\ell/\text{polylog}( n))$ times. 

However, it is not hard to construct sequences with polynomial length, such that any substring consists mostly of moves involving {one high-degree node (with degree even $\Omega(\ell)$) and one degree-$1$ node, so deleting the high-degree node would have a significant impact on the graph and the process can only repeat for a few rounds and lead to a few cycles.}
So the challenge is to run a similar process, but reuse high-degree nodes carefully without repeating witnesses found in previous cycles. 
Suppose we find a cycle $C$ with witness edge $(u,v)$. 
To avoid including the edge in another cycle $C'$, a sufficient condition is that: (1) $u$ is not included in $C'$. (2) For any two adjacent edges (edges in $H$, not $K_n$) of $v$ in $C'$, $u$ never moved between the two corresponding moves of the edges. 
To meet condition 1, we can delete $u$ from $H$. 
To meet condition 2, we can make multiple copies of $v$ in $H$ where each copy corresponds to moves in $S$ where $u$ doesn't move between them.
We call this operation \emph{splitting} since the new graph is generated by deleting and splitting the original $H$.
The new graph after splitting is denoted by \emph{splitted auxiliary graph}.
Our algorithm for finding a large number of linearly independent cycles can be described as repeatedly performing the following process. 
Find a cycle in the splitted auxiliary graph with witness $(u,v)$ by a tree-growing argument, delete $u$ and split the graph by creating multiple copies of $v$. We have to choose carefully the witness edges $(u,v)$ and do the splitting, so that the number of nodes does not proliferate in this process.

Compared to the original auxiliary graph, the number of edges deleted and the number of new nodes introduced is proportional to the degree of $u$, so the number of cycles for a sequence of length $\ell$ we can find in the algorithm is bounded by $\ell/deg(u)$. 
To find $\ell/poly(\log n)$ cycles, we need a window where decent amount of moves involve a node $u$ that has $deg(u)$  bounded by $poly(\log n)$. 
The existence of such window in an arbitrary sequence that is long enough can be proven via a sophisticated bucketing and counting argument.

The overall argument then for 2-FLIP is that, given a sufficiently long sequence of improving moves (specifically, of length $n \cdot poly(\log n))$), we can find a window (a substring) such that the rank of the arcs and cycles in the window is within a $poly(\log n)$ factor of the length of the window.
As a consequence, with high probability the weight of the cut improves by a nontrivial amount $\epsilon$ (1/quasi-polynomial) during this window.
This can happen at most $n^2/\epsilon$ times, hence the length of the execution sequence of 2-FLIP is at most quasi-polynomial.

\subsection{Organization of the paper.}
The rest of the paper is organized as follows. Section~\ref{sec:setup} gives basic definitions of the problems and the smoothed model, defines the central concepts of arcs and cycles, their improvement vectors, and proves a set of basic lemmas about them that are used throughout in the subsequent analysis. Section~\ref{sec:main-lemma-and-thm} states the main lemma on the existence of a nice window in the move sequence such that the arcs and cycles in the window have high rank, and shows how to derive the main theorem from this lemma. Sections \ref{sec:good-window} and \ref{sec:finding-cycles} prove the main lemma in the case that all the moves are 2-moves (this is the more challenging case). First we show in Section \ref{sec:good-window} the existence of a nice window (in fact a large number of nice windows, since this is needed in the general case) such that many moves in the window have the property that both nodes of the move appear a substantial number of times in the window (at least polylog$(n)$ times), and one of them does not appear too many times (at most a higher polylog$(n)$). In Section \ref{sec:finding-cycles} we show how to find in such a nice window a large number of cycles whose improvement vectors are linearly independent. Section~\ref{sec:general-case} extends the proof of the main lemma to the general case where the sequence of moves generated by 2-FLIP contains both 1- and 2-moves. Finally, in Section \ref{csp} we extend the results to the class of Maximum Binary Constraint Satisfaction and Function Optimization problems.
\section{Preliminaries}\label{sec:setup}

We write $[n]$ to denote $\{1,\ldots,n\}$.
Given two integers $a\le b$, we write $[a:b]$ to denote $\{a,\ldots,b\}$.
Given $\gamma,\gamma'\in \{\pm 1\}^n$  we use $d(\gamma,\gamma')$ to denote the Hamming distance between $\gamma$ and $\gamma'$, i.e., the number of entries $i\in [n]$ such that $\gamma_i\ne \gamma_i'$.

\subsection{Local Max-Cut and the FLIP Algorithm}
Let $K_n=(V_n,E_n)$ with $V_n=[n]$ be the complete undirected graph over $n$ nodes. 
Given 
edge weights $X=(X_e: e\in E_n)$ with $X_e\in [-1,1]$,
  the \emph{$k$-local Max-Cut problem} is to find a partition of $V_n$ into two sets $V_1$ and $V_2$
  such that the weight of the corresponding cut 
  (the sum of weights of edges with one node in $V_1$ and the other in $V_2$)
  cannot be improved by moving no more than $k$ nodes to the other set.
Formally, the objective function of our interest
  is defined as follows:
Given any \emph{configuration} $\gamma\in \{\pm 1\}^n$
  (which corresponds to a partition $V_1,V_2$ with
  $V_1=\{u\in V_n:\gamma(u)=-1\}$ and $V_2=\{u\in V_n:\gamma(u)=1\}$), the objective function is
    
\begin{equation}\label{eqn:maxcut-obj}
\textsf{obj}_{X}(\gamma):= \sum_{(u,v)\in E_n}  {X_{(u,v)}}\cdot \mathbf{1}\big\{\gamma(u) \neq \gamma(v) \big\} = \frac{1}{2}\sum_{(u,v)\in E_n} X_{(u,v)}\cdot \big(1-\gamma(u)\gamma(v)\big).
\end{equation}
Our goal is to find a configuration $\gamma\in \{\pm 1\}^n$ that is a $k$-local optimum, i.e., $\textsf{obj}_{X}(\gamma)\ge \textsf{obj}_X(\gamma')$
  for every configuration $\gamma'\in \{\pm 1\}^n$ with Hamming distance no more than $k$ from $\gamma$.

A simple local search algorithm for $k$-Local Max-Cut  is the following $k$-FLIP algorithm: 
\begin{center}
\begin{minipage}{15.75cm}
\begin{flushleft}
\emph{Start with some initial configuration $\gamma=\gamma_0\in \{\pm 1\}^n$. 
While there exists a configuration 
  $\gamma'$\\ with $d(\gamma',\gamma)\le k$ such that $\emph{\textsf{obj}}_X(\gamma')>\emph{\textsf{obj}}_X(\gamma)$, select one such configuration $\gamma'$ 
   (according\\ to some
pivoting criterion), set $\gamma=\gamma'$ and repeat, until no such configuration $\gamma'$ exists.
}\end{flushleft} 
\end{minipage}
\end{center}
The execution of $k$-FLIP on  $K_n$ with edge weights $X$ depends on both the initial configuration $\gamma_0$ and the pivoting criterion used to select the next configuration in each iteration.
The larger the value of $k$, the larger the neighborhood structure that is being explored, hence the better the quality of solutions that is expected to be generated.
However, the time complexity of each iteration grows rapidly with $k$: there are $\Theta(n^k)$ candidate moves, and with suitable data structures we can determine in  $O(n^k)$ if there is an improving move and select one. Thus, the algorithm is feasible only for small values of $k$. For $k=1$, it is the standard FLIP algorithm. Here we are interested in the case $k=2$. 
We will not make any assumption on the pivoting criterion in our results, except that we assume that the algorithm does not choose to flip in any step two nodes when flipping only one of them would produce a strictly better cut. This is a natural property satisfied by any reasonable implementation of 2-FLIP.
For example, one approach (to optimize the time of each iteration) is to first check if there is an improving 1-flip ($n$ possibilities), and only if there is none, proceed to search for an improving 2-flip ($O(n^2)$ possibilities). Clearly any implementation that follows this approach satisfies the above property.
Also, the greedy approach, that examines all $O(n^2)$ possible 1-flips and 2-flips and chooses one that yields the maximum improvement, obviously satisfies the above property.

Our results hold also for the variant of 2-FLIP that uses only 2-flips (no 1-flips). We refer to this variant as {\em Pure 2-FLIP}.

\subsection{Graph Partitioning and the SWAP Algorithm}

In the Graph Partitioning (or Bisection Width) problem, we are given a graph $G$ on $2n$ nodes with weighted edges; the problem is to find a partition of the set $V$ of nodes into two equal-sized subsets $V_1, V_2$ to minimize the weight of the cut.\footnote{Since the weights can be positive or negative, there is no difference between maximization and minimization. The Graph Partitioning problem is usually stated as a minimization problem.} As in the Max Cut problem, in this paper we will assume the graph is complete and the edge weights are in $[-1,1]$.
A simple local search algorithm is the SWAP algorithm: Starting from some initial partition $(V_1, V_2)$ with $n$ nodes in each part, while there is a pair of nodes $u \in V_1, v \in V_2$ whose swap (moving to the other part) decreases the weight of the cut, swap $u$ and $v$. We do not make any assumption on the pivoting rule, i.e. which pair is selected to swap in each iteration if there are multiple pairs whose swap improves the cut. At the end, when the algorithm terminates it produces a locally optimal balanced partition, i.e. one that cannot be improved by swapping any pair of nodes. The SWAP algorithm is clearly a restricted version of Pure 2-FLIP 
(restricted because the initial partition is balanced, and in each step the 2-flip must involve two nodes from different parts of the partition).

The SWAP algorithm is the simplest local search algorithm for the Graph Partitioning problem, but it is a rather weak one, in the sense that the quality of the locally optimal solutions produced may~not be very good. For this reason, more sophisticated local search algorithms have been proposed and are typically used, most notably the Kernighan-Lin algorithm \cite{KL70}, in which a move from a partition to a neighboring partition involves a sequence of swaps. 
If a partition has a profitable swap, then Kernighan-Lin (KL) will perform the best swap; however, if there is no profitable swap then KL  explores a sequence of $n$ greedy steps, selecting greedily in each step the best pair of nodes to swap that have not changed sides before in the current sequence, and if this sequence of swaps produces eventually a better partition, then KL moves to the best such partition generated during this sequence. A related variant, to reduce the time cost of each iteration, was proposed by Fiduccia and Matheyses \cite{FM82}. This idea of guided deep neighborhood search is a powerful method in local search that was introduced first in the \cite{KL70} paper of Kernighan and Lin on Graph Partitioning, and was applied subsequently successfully to the Traveling Salesman Problem and other problems.

\subsection{Smoothed Analysis}

We focus on the $2$-FLIP algorithm from now on.
Under the smoothed complexity model, there is a family  $\mathcal{X}=(\mathcal{X}_e: e\in E_n)$ of probability distributions, one for each edge in $K_n=(V_n,E_n)$. The edge weights $X=(X_e:e\in E_n)$ are drawn independently with $X_e\sim \mathcal{X}_e$.
We assume that each $\mathcal{X}_e$ is a distribution supported on $[-1,1]$ and its density function is bounded from above by a parameter $\phi>0$. (The assumption that the edge weights are in $[-1,1]$ is no loss of generality, since they can be always scaled to lie in that range.) 
Our goal is to bound the number of steps the
  $2$-FLIP algorithm takes to terminate when running on $K_n$ with edge weights $X\sim\mathcal{X}$, in terms of $n$ and the parameter $\phi$.


\subsection{Move Sequences}
  
  

\def\calS{\mathcal{S}}
\def\len{\textsf{len}}
\def\imp{\textsf{imprv}}
We introduce some of the key definitions that will be used in the smoothed analysis of $2$-FLIP.

A \emph{move sequence} $\calS=(\calS_1,\ldots,\calS_\ell)$ is 
  an $\ell$-tuple for some $\ell\ge 1$ such that $\calS_i$ is a subset
  of $V_n$ of size either one or two.
We will refer to the $i$-th move in $\calS$ as a \emph{$1$-move} if 
  $|\calS_i|=1$ and a \emph{$2$-move} if $|\calS_i|=2$,
and write $\len(\calS):=\ell$ to denote its length. Additionally, let $\emph{$1$-move}(\calS)$ and $\emph{$2$-move}(\calS)$ denote the corresponding subsequence of single flip or double flips correspondingly.  
We say a node $u\in V_n$ is \emph{active} in $\calS$
  if $u$ appears in $\calS_i$ for some $i$, and is \emph{inactive} otherwise.
We write $V(\calS)\subseteq V_n$ to denote the set of active nodes in $\calS$.

Given $\gamma_0\in \{\pm 1\}^n$ as the initial configuration, a move sequence $\calS=(\calS_1,\ldots,\calS_\ell)$ naturally induces a sequence of configurations $\gamma_0,\gamma_1,\ldots,\gamma_\ell\in \{\pm 1\}^n$,
  where $\gamma_{i+1}$ is obtained from $\gamma_i$
  by flipping the nodes in $\calS_{i+1}$.
We say $(\gamma_0,\calS)$ is \emph{improving}
  with respect to edge weights $X$ if 
$$
\obj_X(\gamma_{i })>\obj_X(\gamma_{i-1}),\quad\text{for all $i\in [\ell]$}
$$  
and is \emph{$\eps$-improving}
  with respect to edge weights $X$, for some $\eps>0$, if
$$
\obj_X(\gamma_{i })-\obj_X(\gamma_{i-1})\in (0,\eps],\quad\text{for all $i\in [\ell]$.}
$$
For each $i\in [ \ell ]$,
  the change $\obj_{X}(\gamma_{i })-\obj_X(\gamma_{i-1})$ from the $i$-th move $\calS_i$ can be written as follows:
\begin{enumerate}
\item When $\calS_i=\{u\}$,
\begin{equation}\label{eq:1-move-improvement}
\obj_{X}(\gamma_{i })-\obj_X(\gamma_{i-1}) = \sum_{w\in V_n:w\not=u} \gamma_{i-1}(w)\gamma_{i-1}(u)X_{(u,w)}.
\end{equation}
 \input{execution-1flip-tikz}
\item When $\calS_i=\{u,v\}$,
\begin{equation}\label{eq:2-move-improvement}
\obj_{X}(\gamma_{i })-\obj_X(\gamma_{i-1}) = \sum_{w\in V_n: w\not\in\{u,v\}} ( \gamma_{i-1}(w)\gamma_{i-1}(u)X_{(w,u)} + \gamma_{i-1}(w)\gamma_{i-1}(v)X_{(w,v)}).
\end{equation}
\begin{figure}[h!]
    \centering
\tikzset{every picture/.style={line width=0.75pt}} 

\begin{tikzpicture}[x=0.75pt,y=0.75pt,yscale=-0.5,xscale=0.5]

\draw   (96,117.5) .. controls (96,80.22) and (114.58,50) .. (137.5,50) .. controls (160.42,50) and (179,80.22) .. (179,117.5) .. controls (179,154.78) and (160.42,185) .. (137.5,185) .. controls (114.58,185) and (96,154.78) .. (96,117.5) -- cycle ;
\draw   (209,115.5) .. controls (209,78.22) and (227.58,48) .. (250.5,48) .. controls (273.42,48) and (292,78.22) .. (292,115.5) .. controls (292,152.78) and (273.42,183) .. (250.5,183) .. controls (227.58,183) and (209,152.78) .. (209,115.5) -- cycle ;
\draw  [fill={rgb, 255:red, 0; green, 0; blue, 0 }  ,fill opacity=1 ] (132,85.5) .. controls (132,81.91) and (134.91,79) .. (138.5,79) .. controls (142.09,79) and (145,81.91) .. (145,85.5) .. controls (145,89.09) and (142.09,92) .. (138.5,92) .. controls (134.91,92) and (132,89.09) .. (132,85.5) -- cycle ;
\draw  [fill={rgb, 255:red, 245; green, 166; blue, 35 }  ,fill opacity=1 ] (145.5,105.5) .. controls (145.5,101.91) and (148.41,99) .. (152,99) .. controls (155.59,99) and (158.5,101.91) .. (158.5,105.5) .. controls (158.5,109.09) and (155.59,112) .. (152,112) .. controls (148.41,112) and (145.5,109.09) .. (145.5,105.5) -- cycle ;
\draw  [fill={rgb, 255:red, 0; green, 0; blue, 0 }  ,fill opacity=1 ] (224,140.5) .. controls (224,136.91) and (226.91,134) .. (230.5,134) .. controls (234.09,134) and (237,136.91) .. (237,140.5) .. controls (237,144.09) and (234.09,147) .. (230.5,147) .. controls (226.91,147) and (224,144.09) .. (224,140.5) -- cycle ;
\draw  [fill={rgb, 255:red, 0; green, 0; blue, 0 }  ,fill opacity=1 ] (241,82.5) .. controls (241,78.91) and (243.91,76) .. (247.5,76) .. controls (251.09,76) and (254,78.91) .. (254,82.5) .. controls (254,86.09) and (251.09,89) .. (247.5,89) .. controls (243.91,89) and (241,86.09) .. (241,82.5) -- cycle ;
\draw  [fill={rgb, 255:red, 245; green, 166; blue, 35 }  ,fill opacity=1 ] (122.5,139.5) .. controls (122.5,135.91) and (125.41,133) .. (129,133) .. controls (132.59,133) and (135.5,135.91) .. (135.5,139.5) .. controls (135.5,143.09) and (132.59,146) .. (129,146) .. controls (125.41,146) and (122.5,143.09) .. (122.5,139.5) -- cycle ;
\draw  [fill={rgb, 255:red, 0; green, 0; blue, 0 }  ,fill opacity=1 ] (259,120) .. controls (259,116.13) and (262.13,113) .. (266,113) .. controls (269.87,113) and (273,116.13) .. (273,120) .. controls (273,123.87) and (269.87,127) .. (266,127) .. controls (262.13,127) and (259,123.87) .. (259,120) -- cycle ;
\draw  [fill={rgb, 255:red, 0; green, 0; blue, 0 }  ,fill opacity=1 ] (224,109.5) .. controls (224,105.91) and (226.91,103) .. (230.5,103) .. controls (234.09,103) and (237,105.91) .. (237,109.5) .. controls (237,113.09) and (234.09,116) .. (230.5,116) .. controls (226.91,116) and (224,113.09) .. (224,109.5) -- cycle ;
\draw [color={rgb, 255:red, 208; green, 2; blue, 27 }  ,draw opacity=1 ]   (142,139.5) -- (230.5,140.5) ;
\draw [color={rgb, 255:red, 208; green, 2; blue, 27 }  ,draw opacity=1 ]   (158.5,105.5) -- (230.5,109.5) ;
\draw [color={rgb, 255:red, 208; green, 2; blue, 27 }  ,draw opacity=1 ]   (135.5,139.5) -- (230.5,109.5) ;
\draw [color={rgb, 255:red, 208; green, 2; blue, 27 }  ,draw opacity=1 ]   (135.5,139.5) -- (247.5,82.5) ;
\draw [color={rgb, 255:red, 208; green, 2; blue, 27 }  ,draw opacity=1 ]   (135.5,139.5) -- (266,120) ;
\draw [color={rgb, 255:red, 208; green, 2; blue, 27 }  ,draw opacity=1 ]   (158.5,105.5) -- (230.5,140.5) ;
\draw [color={rgb, 255:red, 208; green, 2; blue, 27 }  ,draw opacity=1 ]   (158.5,105.5) -- (266,120) ;
\draw [color={rgb, 255:red, 208; green, 2; blue, 27 }  ,draw opacity=1 ]   (158.5,104.5) -- (247.5,81.5) ;
\draw    (138.5,85.5) -- (247.5,82.5) ;
\draw    (138.5,85.5) -- (237,109.5) ;
\draw    (138.5,85.5) -- (266,120) ;
\draw    (138.5,85.5) -- (230.5,140.5) ;
\draw   (347,113.5) .. controls (347,76.22) and (365.58,46) .. (388.5,46) .. controls (411.42,46) and (430,76.22) .. (430,113.5) .. controls (430,150.78) and (411.42,181) .. (388.5,181) .. controls (365.58,181) and (347,150.78) .. (347,113.5) -- cycle ;
\draw   (460,111.5) .. controls (460,74.22) and (478.58,44) .. (501.5,44) .. controls (524.42,44) and (543,74.22) .. (543,111.5) .. controls (543,148.78) and (524.42,179) .. (501.5,179) .. controls (478.58,179) and (460,148.78) .. (460,111.5) -- cycle ;
\draw  [fill={rgb, 255:red, 0; green, 0; blue, 0 }  ,fill opacity=1 ] (383,81.5) .. controls (383,77.91) and (385.91,75) .. (389.5,75) .. controls (393.09,75) and (396,77.91) .. (396,81.5) .. controls (396,85.09) and (393.09,88) .. (389.5,88) .. controls (385.91,88) and (383,85.09) .. (383,81.5) -- cycle ;
\draw  [fill={rgb, 255:red, 0; green, 0; blue, 0 }  ,fill opacity=1 ] (475,136.5) .. controls (475,132.91) and (477.91,130) .. (481.5,130) .. controls (485.09,130) and (488,132.91) .. (488,136.5) .. controls (488,140.09) and (485.09,143) .. (481.5,143) .. controls (477.91,143) and (475,140.09) .. (475,136.5) -- cycle ;
\draw  [fill={rgb, 255:red, 0; green, 0; blue, 0 }  ,fill opacity=1 ] (492,78.5) .. controls (492,74.91) and (494.91,72) .. (498.5,72) .. controls (502.09,72) and (505,74.91) .. (505,78.5) .. controls (505,82.09) and (502.09,85) .. (498.5,85) .. controls (494.91,85) and (492,82.09) .. (492,78.5) -- cycle ;
\draw  [fill={rgb, 255:red, 0; green, 0; blue, 0 }  ,fill opacity=1 ] (510,116) .. controls (510,112.13) and (513.13,109) .. (517,109) .. controls (520.87,109) and (524,112.13) .. (524,116) .. controls (524,119.87) and (520.87,123) .. (517,123) .. controls (513.13,123) and (510,119.87) .. (510,116) -- cycle ;
\draw  [fill={rgb, 255:red, 0; green, 0; blue, 0 }  ,fill opacity=1 ] (475,105.5) .. controls (475,101.91) and (477.91,99) .. (481.5,99) .. controls (485.09,99) and (488,101.91) .. (488,105.5) .. controls (488,109.09) and (485.09,112) .. (481.5,112) .. controls (477.91,112) and (475,109.09) .. (475,105.5) -- cycle ;
\draw    (389.5,81.5) -- (498.5,78.5) ;
\draw    (389.5,81.5) -- (488,105.5) ;
\draw    (389.5,81.5) -- (517,116) ;
\draw    (389.5,81.5) -- (481.5,136.5) ;
\draw  [fill={rgb, 255:red, 245; green, 166; blue, 35 }  ,fill opacity=1 ] (489,157.5) .. controls (489,153.91) and (491.91,151) .. (495.5,151) .. controls (499.09,151) and (502,153.91) .. (502,157.5) .. controls (502,161.09) and (499.09,164) .. (495.5,164) .. controls (491.91,164) and (489,161.09) .. (489,157.5) -- cycle ;
\draw  [fill={rgb, 255:red, 245; green, 166; blue, 35 }  ,fill opacity=1 ] (510,142.5) .. controls (510,138.91) and (512.91,136) .. (516.5,136) .. controls (520.09,136) and (523,138.91) .. (523,142.5) .. controls (523,146.09) and (520.09,149) .. (516.5,149) .. controls (512.91,149) and (510,146.09) .. (510,142.5) -- cycle ;
\draw [color={rgb, 255:red, 74; green, 144; blue, 226 }  ,draw opacity=1 ]   (389.5,81.5) -- (510,143.5) ;
\draw [color={rgb, 255:red, 74; green, 144; blue, 226 }  ,draw opacity=1 ]   (389.5,81.5) -- (489,158.5) ;

\end{tikzpicture}
\caption{Example of a \emph{2-move} , showing edges in the cut only.}

\label{fig:flip}

\end{figure}
\end{enumerate}
For each $i\in [\ell]$, we write $\imp_{\gamma_0,\calS}(i)$ to denote 
  the improvement vector in $\{0,\pm 1\}^{E_n}$ such that 
\begin{equation}\label{eq:general-improvement}
\obj_X(\gamma_{i})-\obj_X(\gamma_{i-1})=\imp_{\gamma_0,\calS}(i)\cdot X.
\end{equation}

Next, let $E(\calS)$ denote the set of edges $(u,v)\in E_n$ such that 
  both $u$ and $v$ are active in $\calS$.
We write $\imp_{\gamma_0,\calS}^* (i)\in \{0,\pm 1\}^{E(\calS)}$ to denote the 
  projection of $\imp_{\gamma_0,\calS}(i)$ on entries that correspond to edges in $E(\calS)$.
We note that $\imp_{\gamma_0,\calS}^*(i)$ 
  only depends on the initial configuration of active nodes $V(S)$ in $\gamma_0$.
Given a (partial) configuration $\tau_0\in \{\pm 1\}^{V(S)}$ 
  of $V(S)$, we let 
$$
\imp_{\tau_0,\calS}(i):=\imp^*_{\gamma_0,\calS}(i)\in \{0,\pm 1\}^{E(\calS)},
$$
where $\gamma_0\in \{\pm 1\}^n$ is an arbitrary (full) configuration that is
  an extension of $\tau_0$.
(To aid the reader we will always use $\gamma$ to denote a full configuration and $\tau$ to denote
a  partial configuration in the paper.)

Note that if $\calS$ is a sequence of moves generated by an execution of the 2-FLIP algorithm then $\calS$ must be improving, because every move must increase the weight of the cut and therefore every $1-$ or $2-$ move is improving. On the other hand, if every move in $\calS$ increases the cut weight by no more than $\epsilon$ then we can not directly guarantee that after $poly(|\calS|,n,1/\epsilon)$ steps the algorithm would  certainly terminate. From probabilistic perspective, in order to provide a smoothed  upper bound on the running time of 2-FLIP method, it suffices to show that it is exponentially small probability for every move in a long enough sequence to incur only a $o(1/poly(n))$ improvement in our objective.

 Indeed, in an idealized scenario where the improvements of different moves of a sequence were disentangled, the event for a linear-length sequence to be at most $\eps-$improving would have exponentially small probability.
Unfortunately, going back to the 2-FLIP algorithm, there could be improving steps that are strongly correlated (as an extreme situation there could be two flips with almost the same improvement vector). Thus, as one may expect the probability exponential decay holds only for linearly  independent $\imp_{\tau_0,\calS}(\cdot)$, introducing the necessity of analysis of the $\rank\left(\{\imp_{\tau_0,\calS}(i)|i\in \calS'\}\right)$, for some neatly chosen subset $\calS'$ of moves from the sequence $\calS$.

\begin{corollary}[\cite{etscheid2015smoothed}]\label{cor:rank}
Let $X_1, . . . , X_m$ be independent real random variables and let $f_i:\mathbb{R}\to[0,\phi]$ for some $\phi>0$ denote the density of $X_i$ for each $i\in[m]$. Additionally, let $\mathcal{C}$ be a collection of k  not necessarily linearly independent integer row vectors, namely $\mathcal{C}=\{ V_1,\cdots,V_k\}$. Then it holds that for any interval $I\subset \mathbb{R}$
\[
\Pr[F_\epsilon]=\Pr\left[
\bigcap_{i\in[k]}\{ V_i \cdot X \in I \}\right]\leq (\phi\len(I))^{\rank(\mathcal{C})}
\]
\end{corollary}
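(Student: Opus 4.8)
The plan is to reduce the event to one governed by an invertible \emph{integer} matrix and then bound its probability by a volume computation. Write $r := \rank(\mathcal{C})$, and note that if $\len(I)=\infty$ the claim is trivial, so assume $I$ is a bounded interval. First I would pick $r$ linearly independent vectors among $V_1,\dots,V_k$; after relabeling assume these are $V_1,\dots,V_r$. Since
\[
F_\eps=\bigcap_{i\in[k]}\{V_i\cdot X\in I\}\ \subseteq\ \bigcap_{i\in[r]}\{V_i\cdot X\in I\},
\]
it suffices to bound the probability of the latter. Stacking $V_1,\dots,V_r$ as the rows of an integer matrix $V\in\Z^{r\times m}$ of rank $r$, column rank equals row rank, so there is a set $S\subseteq[m]$ with $|S|=r$ for which the submatrix $M$ of $V$ on the columns indexed by $S$ is invertible. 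As $M$ has integer entries, $\det M$ is a nonzero integer, hence $|\det M|\ge 1$; this is the one place where integrality is essential.

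Next I would condition on the coordinates $X_j$ for $j\notin S$. For any fixed values of those coordinates, the constraint $V_i\cdot X\in I$ becomes $\sum_{j\in S}(V_i)_j X_j\in I_i$, where $I_i:=I-\sum_{j\notin S}(V_i)_jX_j$ is an interval with $\len(I_i)=\len(I)$. Writing $X_S=(X_j)_{j\in S}$, whose conditional density is $\prod_{j\in S}f_j(x_j)\le\phi^{\,r}$ pointwise by independence and the density bound, the conditional probability is
\[
\Pr\Big[MX_S\in\textstyle\prod_{i\in[r]}I_i \,\Big|\, (X_j)_{j\notin S}\Big]
=\int_{\{x\in\R^{S}:\,Mx\in\prod_i I_i\}}\ \prod_{j\in S}f_j(x_j)\,\ud x
\ \le\ \phi^{\,r}\cdot\mathrm{Vol}\big(\{x\in\R^{S}:Mx\in\textstyle\prod_i I_i\}\big).
\]

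Finally I would evaluate the volume via the linear change of variables $y=Mx$, under which $\ud x=|\det M|^{-1}\,\ud y$, giving
\[
\mathrm{Vol}\big(\{x:Mx\in\textstyle\prod_i I_i\}\big)=\frac{1}{|\det M|}\prod_{i\in[r]}\len(I_i)=\frac{\len(I)^{r}}{|\det M|}\ \le\ \len(I)^{r},
\]
since $|\det M|\ge 1$. Hence the conditional probability is at most $(\phi\,\len(I))^{r}$ for every value of $(X_j)_{j\notin S}$, and averaging over those coordinates yields $\Pr[F_\eps]\le(\phi\,\len(I))^{\rank(\mathcal{C})}$. I do not expect a genuine obstacle here: the only points that need care are that the selected columns $S$ really do give an invertible submatrix (immediate from $\rank V=r$), that the conditioning/Fubini step is legitimate for these independent continuous variables, and that integrality is exactly what supplies $|\det M|\ge 1$ (without which the inequality fails for general real vectors).
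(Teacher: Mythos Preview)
Your argument is correct and is essentially the standard proof of this fact (as in \cite{etscheid2015smoothed}): restrict to a maximal linearly independent subset, pick an invertible integer $r\times r$ submatrix, condition on the remaining coordinates, and bound the resulting volume using $|\det M|\ge 1$. The paper itself does not supply a proof of this corollary; it is quoted as a known result from prior work, so there is no ``paper proof'' to compare against beyond the cited reference, whose argument your write-up faithfully reproduces.
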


However, one standard issue, which typically occurs with the direct usage of improvement vectors of  sequence's moves, is their dependence also on the initial configuration $\gamma$ of inactive nodes that do not appear in the sequence $\calS$. Their number may be much larger than the rank of the active nodes, and thus considering all their possible initial values in a union-bound will overwhelm the probability  $(\phi \epsilon)^r$. For these reasons, in the literature \cite{etscheid2015smoothed,bibak2019improving,chen2020} more complex combinatorial structures  have been proposed, like pairs of (consecutive) moves of the same node. Interestingly, for the case of 2-FLIP, new challenges have to be overcome due to the 2-move case. To alleviate these harnesses, we introduce the idea of dependent cycles whose role will be revealed in the case that our sequence abounds with 2-moves.


\subsection{Arcs and Cycles}
\begin{definition}\label{def:arc}
An \emph{arc} $\alpha$  in a move sequence $\calS=(\calS_1,\ldots,\calS_\ell)$ is  an ordered pair $(i,j)$  
  with $i< j\in [\ell]$ such that $\calS_i=\calS_j=\{u\}$ for some node $u\in V_n$ and for any $i<k<j$, $\calS_k\not=\{u\}$. 
\end{definition}

Let $\tau_0\in \{\pm 1\}^{V(\calS)}$ be a  configuration
  of active nodes in $\calS$, and let $\tau_0,\tau_1,\ldots,\tau_\ell\in \{\pm 1\}^{V(\calS)}$ be the sequence of configurations
  induced by $\calS$, i.e., $\tau_{i }$ is obtained
  from $\tau_{i-1}$ by flipping nodes in $\calS_i$.
We make the following observation:
\begin{restatable}{lemma}{CancellationOfExternalEdgesArc}\label{lem:concellation-of-external-edges-arc}
For any configuration $\gamma_0\in \{\pm 1\}^n$ that is an 
  extension of $\tau_0$, letting $\gamma_0,\gamma_1,\ldots,\gamma_\ell\in \{\pm 1\}^n$ be the sequence of configurations induced by $\calS$ and letting
$
w[u,i,j]:=\gamma_i(u)\cdot \emph{\imp}_{\gamma_0,\calS}(i)-\gamma_j(u)\cdot 
  \emph{\imp}_{\gamma_0,\calS} (j)
$,
we have that
\[
(w[u,i,j]_e)_{e\in E_n}=
\begin{cases}
(\tau_i(u)\cdot \emph{\imp}_{\tau_0,\calS}(i)-\tau_j(u)\cdot 
  \emph{\imp}_{\tau_0,\calS}(j))_e& \text{ for every entry $e\in E(\calS)$,} \\
0& \text{ otherwise.}
\end{cases}
\]
for any arbitrary choice of $u\in V(\calS)$.
\end{restatable}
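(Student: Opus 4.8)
I would prove the identity by unwinding the definition of the improvement vector of a single flip; write $u$ for the node flipped at both steps $i$ and $j$ (so $\calS_i=\calS_j=\{u\}$). First I would record, directly from \eqref{eq:1-move-improvement} and \eqref{eq:general-improvement}, that for any step $k$ with $\calS_k=\{u\}$ the vector $\imp_{\gamma_0,\calS}(k)$ is supported on the edges incident to $u$: its $(u,w)$-entry equals $\gamma_{k-1}(u)\,\gamma_{k-1}(w)$ for every $w\ne u$, and all its entries on edges not incident to $u$ are zero.

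Next I would apply this with $k=i$ and $k=j$. Since $u$ is flipped at both of those steps we have $\gamma_i(u)=-\gamma_{i-1}(u)$ and $\gamma_j(u)=-\gamma_{j-1}(u)$; using $\gamma_k(u)^2=1$ this gives, for every $w\ne u$,
\[
\gamma_i(u)\cdot\imp_{\gamma_0,\calS}(i)_{(u,w)}=-\gamma_{i-1}(w),\qquad
\gamma_j(u)\cdot\imp_{\gamma_0,\calS}(j)_{(u,w)}=-\gamma_{j-1}(w),
\]
so $w[u,i,j]_{(u,w)}=\gamma_{j-1}(w)-\gamma_{i-1}(w)$, while $w[u,i,j]_e=0$ for every edge $e$ not incident to $u$. (This computation uses only $\calS_i=\calS_j=\{u\}$, not the full arc condition that $u$ is idle strictly between $i$ and $j$.)

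Finally I would split on the type of $e$. If $e\notin E(\calS)$, then either $e$ is not incident to $u$, in which case $w[u,i,j]_e=0$ by the previous paragraph; or $e=(u,w)$ with $w\notin V(\calS)$, and then $w$ is inactive, so the $\gamma$-dynamics never flips $w$, hence $\gamma_{i-1}(w)=\gamma_{j-1}(w)=\gamma_0(w)$ and again $w[u,i,j]_e=0$. This establishes the ``otherwise $0$'' case. If $e\in E(\calS)$, I would use that, by definition, $\imp_{\tau_0,\calS}(k)$ is the restriction to $E(\calS)$ of $\imp_{\gamma',\calS}(k)$ for an arbitrary full extension $\gamma'$ of $\tau_0$, and that it is independent of the chosen extension; taking the extension to be the given $\gamma_0$ yields $\imp_{\tau_0,\calS}(k)_e=\imp_{\gamma_0,\calS}(k)_e$ for all $e\in E(\calS)$, and since $u$ is active $\tau_k(u)=\gamma_k(u)$, so $(\tau_i(u)\imp_{\tau_0,\calS}(i)-\tau_j(u)\imp_{\tau_0,\calS}(j))_e=w[u,i,j]_e$. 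This is exactly the first case, and the proof is complete.

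\textbf{Main obstacle.} There is no conceptual difficulty; the argument is definition-chasing, so the only ``hard'' part is bookkeeping. The point requiring care is the edge case analysis: one must distinguish entries of $\imp_{\gamma_0,\calS}(i),\imp_{\gamma_0,\calS}(j)$ that vanish by construction (edges not touching $u$) from entries that do not vanish individually but cancel in the difference $w[u,i,j]$ precisely because the far endpoint is inactive and therefore never changes sign under $\gamma$. One must also state cleanly the identification of the $\gamma$- and $\tau$-dynamics on the active coordinates, which is what makes $\imp_{\tau_0,\calS}(\cdot)$ literally the restriction of $\imp_{\gamma_0,\calS}(\cdot)$ to $E(\calS)$.
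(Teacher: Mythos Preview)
Your proposal is correct and follows essentially the same approach as the paper's proof: both arguments handle $e\in E(\calS)$ by invoking the definition of $\imp_{\tau_0,\calS}$ as the restriction of $\imp_{\gamma_0,\calS}$ together with $\tau_k(u)=\gamma_k(u)$, and handle $e\notin E(\calS)$ via the same two subcases (edges not touching $u$ are trivially zero; for $e=(u,w)$ with $w$ inactive, both compute the $(u,w)$-entry of $w[u,i,j]$ as $\gamma_{j-1}(w)-\gamma_{i-1}(w)$ and conclude it vanishes because $w$ never flips). Your side remark that the full arc condition on $u$ being idle between $i$ and $j$ is not actually needed here is accurate and not made explicit in the paper, but otherwise the proofs coincide.
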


\noindent
Motivated by Lemma \ref{lem:concellation-of-external-edges-arc},
 we define for an arc $\alpha=(i,j)$ of a node $u$,
\begin{equation}
\label{eq:arc-improvement}
\imp_{\tau_0,\calS}(\alpha):= \tau_i(u)\cdot \imp_{\tau_0,\calS}(i)-\tau_j(u)\cdot 
  \imp_{\tau_0,\calS}(j)\in \mathbb{Z}^{E(\calS)}.
\end{equation}
Let $\arcs(\calS)$ denote the set of all arcs in $\calS$.
We will be interested in the rank of 
\begin{equation}\label{simpleeq:1}Q_\arcs:=\Big\{\imp_{\tau_0,\calS}(\alpha):\alpha\in \arcs(\calS)\Big\}
\end{equation} 
It is easy to show that the rank does not depend on the choice
  of $\tau_0$ so we will denote it by
  $\rank_{\arcs}(\calS)$.
\begin{lemma}\label{lem:independence-init-rank-arcs}
The rank of the set of vectors in (\ref{simpleeq:1})
  does not depend on the choice of $\tau_0$.
\end{lemma}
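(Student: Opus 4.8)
The plan is to show that changing the initial configuration $\tau_0$ of the active nodes transforms the family $Q_\arcs$ by an invertible linear map, up to sign flips of individual vectors, and neither operation can change the rank of a set of vectors.

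First I would reduce to flipping a single coordinate. Any two configurations $\tau_0,\tau_0'\in\{\pm1\}^{V(\calS)}$ differ in a finite set of active nodes, so it suffices to prove the claim when $\tau_0'$ is obtained from $\tau_0$ by flipping the value of exactly one active node $u$, and then iterate over that set. Two easy observations set up the argument: (i) the combinatorial object $\arcs(\calS)$ and the ambient lattice $\mathbb{Z}^{E(\calS)}$ depend only on $\calS$, not on $\tau_0$, so only the vectors change, not the index set or the space in which rank is measured; and (ii) flipping $u$ in the initial configuration flips the value $\tau_i(u)$ at every step $i$ while leaving $\tau_i(v)$ unchanged for every $v\ne u$, since the move sequence $\calS$ itself is fixed.

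The heart of the argument is the equivariance identity $\imp_{\tau_0',\calS}(i)=D_u\,\imp_{\tau_0,\calS}(i)$ for every move $i\in[\ell]$, where $D_u$ is the diagonal $\pm1$ operator on $\mathbb{R}^{E(\calS)}$ that negates precisely the coordinates indexed by edges incident to $u$ (so $D_u^2$ is the identity, hence $D_u$ is invertible). This is read directly off the explicit formulas (\ref{eq:1-move-improvement}) and (\ref{eq:2-move-improvement}): after projecting onto $E(\calS)$, every nonzero entry of $\imp_{\tau_0,\calS}(i)$ on an edge $(a,b)$ is the product $\tau_{i-1}(a)\,\tau_{i-1}(b)$ of the step-$(i-1)$ values of its two endpoints, and by observation (ii) passing from $\tau_0$ to $\tau_0'$ negates exactly those entries whose edge is incident to $u$. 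One checks the four sub-cases $|\calS_i|\in\{1,2\}$ with $u\in\calS_i$ or $u\notin\calS_i$ separately; the only point worth noting is that the edge joining the two flipped nodes of a $2$-move carries entry $0$ in both configurations, so it never produces a discrepancy. Substituting this identity into the definition (\ref{eq:arc-improvement}) of the arc vector for an arc $\alpha=(i,j)$ of a node $w$, and using $\tau_i'(w)=\pm\tau_i(w)$, $\tau_j'(w)=\pm\tau_j(w)$ with sign $-1$ iff $w=u$, gives $\imp_{\tau_0',\calS}(\alpha)=\varepsilon_\alpha\, D_u\,\imp_{\tau_0,\calS}(\alpha)$ with $\varepsilon_\alpha\in\{\pm1\}$ ($\varepsilon_\alpha=-1$ exactly when the arc belongs to node $u$).

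Since $D_u$ is invertible and rescaling individual members of a vector family by nonzero scalars $\varepsilon_\alpha$ leaves the dimension of its span unchanged, the rank of $Q_\arcs$ computed from $\tau_0'$ equals the rank computed from $\tau_0$; iterating over the nodes where two arbitrary configurations differ finishes the proof. (Alternatively, the cancellation content of the key identity is already essentially Lemma~\ref{lem:concellation-of-external-edges-arc}, but the direct computation above is self-contained.) The only genuine work is the bookkeeping in the equivariance identity — confirming in each of the four sub-cases that exactly the $u$-incident coordinates change sign — which is elementary but must be carried out carefully; everything else is linear algebra.
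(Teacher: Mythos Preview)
Your proof is correct and follows essentially the same approach as the paper: both arguments show that changing the initial configuration transforms the arc-improvement matrix by a diagonal $\pm 1$ matrix on the left (acting on edge coordinates) and another on the right (acting on arc columns), which preserves rank. The only cosmetic difference is that you reduce to flipping a single node $u$ and iterate, whereas the paper handles an arbitrary pair $\tau_0,\tau_0'$ in one step via the diagonal matrices $\mathcal{D}[\tau_0,\tau_0']_{(a,b),(a,b)}=\tau_0(a)\tau_0'(a)\tau_0(b)\tau_0'(b)$ and $\mathcal{D}'[\tau_0,\tau_0']_{\alpha,\alpha}=\tau_0(w)\tau_0'(w)$; your $D_u$ and $\varepsilon_\alpha$ are exactly the single-flip specializations of these.
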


\begin{definition}\label{def:cycle}
A \emph{cycle} $C$ in a move sequence $\calS=(\calS_1,\ldots,\calS_\ell)$ is 
  an ordered tuple $C=(c_1,\ldots,c_t)$  for some $t\ge 2$ such that $c_1,\cdots c_t$ are distinct, and
  $\calS_{c_j}=\{u_j,u_{j+1}\}$ for all $j\in [t-1]$
  and $\calS_{c_t}=\{u_t,u_1\}$
  for some nodes $u_1,\ldots,u_t\in V_n$. (Every $\calS_{c_j}$ is a $2$-move. The same vertex may appear in multiple $\calS_{c_j}$'s). 
\end{definition}
\begin{definition}\label{def:dependent-cycle}
Given a configuration $\tau_0\in \{\pm 1\}^{V(\calS)}$,
we say a cycle $C=(c_1,\ldots,c_t)$ in $\calS$ is \emph{dependent} with respect to $\tau_0$ if there exists $b\in \{\pm 1\}^t$ such that 
$$ \text{For all $j\in [t-1]$ we have that }
b_j\cdot \tau_{c_j}(u_{j+1}) + b_{j+1} \cdot \tau_{c_{j+1}}(u_{j+1})=0
\text{ and } b_t \cdot \tau_{c_t}(u_1)+b_1\cdot \tau_{c_1}(u_1)=0,
$$ 
where $\tau_0,\tau_1,\ldots,\tau_\ell$ are 
  configurations induced by $\calS$ starting from $\tau_0$.
\end{definition}

We note that such a vector $b$, if it exists,  it has the form
$b=b_1\cdot\left(1,\cdots,(-1)^{k-1}{\prod_{i\in[2:k]}\tau_{c_{i-1}}(u_i)}{\tau_{c_{i}}(u_i)},\cdots\right)^\top
$
%
and hence
it is unique if we further require $b_1=1$.
After elimination of the above equations, we see the following equivalent criterion:
\begin{remark}[Dependence Criterion]\label{remark:criterion}
A cycle $C$ is \emph{dependent} $\Leftrightarrow$ 
$(-1)^{t}=\tau_{c_t}(u_1)\tau_{c_1}(u_1)\cdot {\prod_{i=2}^{t}\tau_{c_{i-1}}(u_i)}{\tau_{c_{i}}(u_i)}.$ 
\end{remark}

\noindent We will  refer to the unique vector $b\in \{\pm 1\}^t$ as the \emph{cancellation vector} of $C$.
Notice that whether a cycle $C$ in $\calS$ is dependent or
  not actually does not depend on the choice of $\tau_0$.

\begin{lemma}\label{lem:independence-init-dependent-cycles}
If a cycle $C$ of $\calS$ is dependent with respect to some $\tau_0\in \{\pm 1\}^{V(\calS)}$ using $b$ as its cancellation vector, then it is dependent with respect to every configuration $\tau_0'\in \{\pm 1\}^{V(\calS)}$ 
using the same $b$ as its cancellation vector.
\end{lemma}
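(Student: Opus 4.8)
The plan is to show that the system of equations defining a cancellation vector, once rewritten, does not mention $\tau_0$ at all; the dependence of $C$ on the choice of $\tau_0$ then disappears. The point is that any induced configuration $\tau_i$ differs from $\tau_0$ only by a flip-parity factor that is a function of $\calS$ alone. Concretely, for every active node $u \in V(\calS)$ and every $i \in \{0,1,\ldots,\ell\}$ set
\[
\sigma_i(u) := (-1)^{\,|\{\,k \in [i] \,:\, u \in \calS_k\,\}|} \in \{\pm 1\},
\]
so that $\tau_i(u) = \tau_0(u)\cdot \sigma_i(u)$, where $\sigma_i(u)$ depends only on $\calS$ and not on the choice of $\tau_0$.

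Next I would substitute this identity into Definition~\ref{def:dependent-cycle}. For each $j \in [t-1]$ the $j$-th equation becomes
\[
b_j \cdot \tau_0(u_{j+1})\,\sigma_{c_j}(u_{j+1}) \;+\; b_{j+1}\cdot \tau_0(u_{j+1})\,\sigma_{c_{j+1}}(u_{j+1}) \;=\; 0 ,
\]
and since the two occurrences are of the \emph{same} node $u_{j+1}$, the common factor $\tau_0(u_{j+1}) \in \{\pm 1\}$ is nonzero and cancels, so the equation is equivalent to $b_j \cdot \sigma_{c_j}(u_{j+1}) + b_{j+1}\cdot \sigma_{c_{j+1}}(u_{j+1}) = 0$. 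In exactly the same way, the closing equation $b_t \cdot \tau_{c_t}(u_1) + b_1 \cdot \tau_{c_1}(u_1) = 0$ is equivalent to $b_t \cdot \sigma_{c_t}(u_1) + b_1 \cdot \sigma_{c_1}(u_1) = 0$. None of these rewritten equations involves $\tau_0$.

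Hence, if $b$ satisfies the system in Definition~\ref{def:dependent-cycle} for some particular $\tau_0$, then (passing through the $\tau_0$-free reformulation and back) it satisfies it for every $\tau_0' \in \{\pm 1\}^{V(\calS)}$, which is exactly the claim that $C$ is dependent with respect to $\tau_0'$ with the same cancellation vector $b$. Equivalently, multiplying all $t$ equations together recovers precisely the criterion of Remark~\ref{remark:criterion} with each $\tau$ replaced by the corresponding $\sigma$ (legitimate since $\tau_0(u)^2 = 1$ for the repeated-node products), namely $(-1)^t = \sigma_{c_t}(u_1)\sigma_{c_1}(u_1)\prod_{i=2}^{t}\sigma_{c_{i-1}}(u_i)\sigma_{c_i}(u_i)$, which is manifestly a statement about $\calS$ alone. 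The only care needed is the sign bookkeeping in $\tau_i(u) = \tau_0(u)\sigma_i(u)$ together with the observation that each defining equation pairs two occurrences of one and the same node, so the $\tau_0$-factors coincide and cancel; beyond this there is no real obstacle.
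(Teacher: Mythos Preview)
Your proof is correct and takes essentially the same approach as the paper: both rely on the observation that $\tau_i(u)=\tau_0(u)\cdot\sigma_i(u)$ for a $\tau_0$-independent sign $\sigma_i(u)$ (the paper phrases this as $\tau_0(u)\tau_0'(u)=\tau_i(u)\tau_i'(u)$), so the common factor $\tau_0(u_{j+1})$ in each defining equation cancels. The paper packages the same cancellation as left-multiplication of the coefficient matrix $\Delta(\tau_0)$ by a full-rank diagonal matrix, but the underlying idea is identical.
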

  
As a result, we can refer to cycles of $\calS$ as dependent cycles without specifying a configuration $\tau_0$; the same holds for  cancellation vectors.
Next we prove a lemma that is similar to Lemma~\ref{lem:concellation-of-external-edges-arc} for arcs:
\begin{restatable}{lemma}{CancellationOfExternalEdgesCycle}\label{lem:concellation-of-external-edges-cycle}
Let $C=(c_1,\ldots,c_t)$ be a dependent cycle of $\calS$ and 
  let $b$ be its cancellation vector.
Then for any configurations $\tau_0\in \{\pm 1\}^{V(\calS)}$ 
  and $\gamma_0\in \{\pm 1\}^n$ such that $\gamma_0$ is an 
  extension of $\tau_0$, letting 
$
w[C]:=\sum_{j\in [t]} b_j\cdot \emph{\imp}_{\gamma_0,\calS}(c_j),
$
we have that
\[
(w[C]_e)_{e\in E_n}=
\begin{cases}
(\sum_{j\in [t]} b_j\cdot \emph{\imp}_{\tau_0,\calS}(c_j))_e
& \text{ for every entry $e\in E(\calS)$,} \\
0& \text{ otherwise.}
\end{cases}
\]
\end{restatable}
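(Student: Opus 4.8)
The statement I need to prove is Lemma~\ref{lem:concellation-of-external-edges-cycle}, which says that for a dependent cycle $C=(c_1,\ldots,c_t)$ with cancellation vector $b$, the combination $w[C] := \sum_{j\in[t]} b_j\cdot \imp_{\gamma_0,\calS}(c_j)$ vanishes on every edge of $E_n$ incident to an inactive node, and agrees with $\sum_j b_j\cdot\imp_{\tau_0,\calS}(c_j)$ on the active edges $E(\calS)$. The approach is to carry out the cancellation edge-by-edge, exactly as in the proof of the arc version (Lemma~\ref{lem:concellation-of-external-edges-arc}), but now using the structure of the cycle and the defining equations of the cancellation vector. First I would fix the extension $\gamma_0$ of $\tau_0$ and the induced configurations $\gamma_0,\ldots,\gamma_\ell$. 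The part about entries $e\in E(\calS)$ is immediate: by definition $\imp^*_{\gamma_0,\calS}(c_j)$ restricted to $E(\calS)$ equals $\imp_{\tau_0,\calS}(c_j)$ (it only depends on the initial configuration of the active nodes), so restricting $w[C]$ to $E(\calS)$ gives $\sum_j b_j\cdot\imp_{\tau_0,\calS}(c_j)$ with no further work.

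The substance is showing $w[C]_e = 0$ for every edge $e=(x,y)\in E_n$ with at least one endpoint inactive, say $y\notin V(\calS)$. Using the improvement-vector formula~\eqref{eq:2-move-improvement} for a $2$-move $\calS_{c_j}=\{u_j,u_{j+1}\}$ (indices mod $t$), the $e$-entry of $\imp_{\gamma_0,\calS}(c_j)$ is nonzero only if $x\in\{u_j,u_{j+1}\}$ (since $y$ is inactive, $y$ is never flipped, so the only way edge $(x,y)$ is toggled at step $c_j$ is if $x$ is one of the two flipped nodes). In that case — say $x=u_j$ — the entry equals $\gamma_{c_j-1}(y)\gamma_{c_j-1}(u_j)$. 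Now I would split the contributions to $w[C]_e$ according to which cycle-edges of $H$ the node $x$ sits on. By the cycle structure, $x$ appears as an endpoint of $\calS_{c_j}$ exactly for those $j$ where $x\in\{u_j,u_{j+1}\}$; adjacent such occurrences are precisely the situation the cancellation-vector equations $b_j\cdot\tau_{c_j}(u_{j+1}) + b_{j+1}\cdot\tau_{c_{j+1}}(u_{j+1}) = 0$ are designed for. The key point is that between two consecutive occurrences of $x$ in the cycle-edges (say at positions $c_j$ and $c_k$ with $c_j < c_k$, no cycle-edge of $x$ strictly in between), the value $\gamma(x)$ changes parity exactly according to how many times $x$ is flipped in $\calS$ between these steps, and crucially $\gamma(y)$ never changes since $y$ is inactive — so $\gamma_{c_j-1}(y)=\gamma_{c_k-1}(y)$. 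Then the two contributions $b_j\cdot\gamma_{c_j-1}(y)\gamma_{c_j-1}(x)$ and $b_k\cdot\gamma_{c_k-1}(y)\gamma_{c_k-1}(x)$ cancel precisely because the cancellation vector was built (see the explicit product formula $b = b_1(1,\ldots,(-1)^{k-1}\prod_{i\in[2:k]}\tau_{c_{i-1}}(u_i)\tau_{c_i}(u_i),\ldots)^\top$) so that $b_j\cdot\gamma_{c_j-1}(x) = -b_k\cdot\gamma_{c_k-1}(x)$ when $x$ is the shared node of consecutive cycle-edges. Summing around the cycle, the contributions pair up and telescope to zero; the wrap-around equation $b_t\cdot\tau_{c_t}(u_1)+b_1\cdot\tau_{c_1}(u_1)=0$ closes the last pair.

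I would organize this bookkeeping by introducing, for the fixed inactive-incident edge $(x,y)$, the sublist of indices $j_1 < \cdots < j_r$ (cyclically ordered) of cycle-edges containing $x$, together with the particular endpoint role ($u_{j}$ vs.\ $u_{j+1}$) that $x$ plays. The point is that consecutive pairs $(j_s, j_{s+1})$ in this sublist either correspond to a shared-node equation directly among the $b$'s, or — if they are not adjacent in the cycle — the intermediate cycle-edges don't involve $x$, and $\gamma(x)$ evolves the same way in the relevant product so that the explicit formula for $b$ still forces cancellation between the two terms after accounting for the flips of $x$ in $\calS$ between steps $c_{j_s}$ and $c_{j_{s+1}}$. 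Since $\gamma(y)$ is constant (inactivity of $y$), every such pair cancels and the cyclic sum is $0$.

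\textbf{Main obstacle.} The delicate part is precisely this last bookkeeping: matching the signs. The cancellation vector $b$ is defined only through equations at the \emph{shared vertices $u_{j+1}$ of consecutive cycle-edges}, but the edge $(x,y)$ we are cancelling may have its active endpoint $x$ appear in cycle-edges that are \emph{not} consecutive in the cycle, and $x$ may also be flipped by $1$-moves or other $2$-moves of $\calS$ in between (those don't change $b$ but do change $\gamma(x)$). I need to verify that the product $\prod \tau_{c_{i-1}}(u_i)\tau_{c_i}(u_i)$ in the closed form of $b$, together with the flips of $x$ recorded by the $\tau$-sequence between the relevant steps, conspires to give exactly the sign needed for term-by-term cancellation of $\gamma_{c_{j_s}-1}(x)$ against $\gamma_{c_{j_{s+1}}-1}(x)$. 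This is the same kind of telescoping identity already implicitly used in Lemma~\ref{lem:concellation-of-external-edges-arc}, and Remark~\ref{remark:criterion} (the parity criterion $(-1)^t = \tau_{c_t}(u_1)\tau_{c_1}(u_1)\prod_{i=2}^t \tau_{c_{i-1}}(u_i)\tau_{c_i}(u_i)$) is exactly the global consistency condition that guarantees the pairing closes up around the full cycle rather than leaving a leftover sign. So the proof reduces to carefully tracking these parities; there is no conceptual difficulty beyond that, but the indexing must be done with care.
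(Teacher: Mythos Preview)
Your approach is correct and is essentially the paper's approach, but the ``main obstacle'' you flag is illusory. The point you are missing is that the set of cycle-positions $j\in[t]$ at which $x$ occurs (i.e.\ $x\in\calS_{c_j}=\{u_j,u_{j+1}\}$) is \emph{automatically} partitioned into cyclically consecutive pairs $\{i-1,i\}$, one pair for each $i\in[t]$ with $u_i=x$: if $x=u_j$ then $j$ is the second element of the pair for $i=j$, and if $x=u_{j+1}$ then $j$ is the first element of the pair for $i=j+1$; these cases are mutually exclusive since $u_j\ne u_{j+1}$, and distinct $i$'s with $u_i=x$ give disjoint pairs because $u_i\ne u_{i\pm 1}$. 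So there are no ``non-adjacent'' occurrences left over to worry about, no telescoping through intermediate cycle-edges, no tracking of how $\gamma(x)$ evolves between far-apart moves of $\calS$, and no need to invoke the global parity criterion of Remark~\ref{remark:criterion}. Each pair cancels in one line: using $\gamma_{c_{i-1}-1}(y)=\gamma_{c_i-1}(y)=\gamma_0(y)$ (since $y$ is inactive) and $\gamma_{c_j-1}(x)=-\gamma_{c_j}(x)$ (since $x\in\calS_{c_j}$), the pair for $i$ contributes
\[
-\gamma_0(y)\bigl(b_{i-1}\,\tau_{c_{i-1}}(u_i)+b_i\,\tau_{c_i}(u_i)\bigr)=0
\]
directly by the defining equation of the cancellation vector at the shared node $u_i$. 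That is the whole proof; the paper does exactly this in two lines.
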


\def\cycles{\textsf{cycles}}
\noindent
Given $\tau_0\in \{\pm 1\}^{V(\calS)}$ and a dependent cycle $C$ of $\calS$ with
  $b$ as its cancellation vector,
  we define
\begin{equation}
\label{eq:cycle-improvement}    
\imp_{\tau_0,\calS}(C):=\sum_{j\in [t]} b_j\cdot \imp_{\tau_0,\calS}(c_j).
\end{equation}
Let $\cycles(\calS)$ denote the set of all dependent cycles in $\calS$.
We will be interested in the rank of 
\begin{equation}\label{simpleeq:2}
Q_\cycles:=\Big\{\imp_{\tau_0,\calS}(C):C\in \cycles(\calS)\Big\}
\end{equation}
Similarly we note that the rank
  does not depend on the choice of $\tau_0$ so we  denote it by 
  $\rank_{\cycles}(\calS)$.

\begin{lemma}\label{lem:independence-init-rank-cycles}
The rank of the set of vectors in (\ref{simpleeq:2})
  does not depend on the choice of $\tau_0$.
\end{lemma}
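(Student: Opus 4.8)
The plan is to imitate the proof of Lemma~\ref{lem:independence-init-rank-arcs}: show that changing the base configuration acts on \emph{all} the dependent‑cycle improvement vectors simultaneously by one fixed invertible diagonal $\pm 1$ transformation, which cannot change the rank of their span. First I would record how the induced configurations transform. Fix two configurations $\tau_0,\tau_0'\in\{\pm1\}^{V(\calS)}$ and let $\tau_0,\dots,\tau_\ell$ and $\tau_0',\dots,\tau_\ell'$ be the two sequences induced by $\calS$. Since both are obtained from their initial configurations by applying the \emph{same} flips $\calS_1,\dots,\calS_\ell$, a trivial induction shows that for every active node $w\in V(\calS)$ the sign $s_w:=\tau_0'(w)\tau_0(w)\in\{\pm1\}$ satisfies $\tau_i'(w)=s_w\cdot\tau_i(w)$ for all $i\in[0:\ell]$ (the corrective sign is global, independent of the time step $i$).

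Next, plug this into formula~\eqref{eq:2-move-improvement} for a $2$-move $\calS_c=\{u,v\}$ — recall that by Definition~\ref{def:cycle} every move appearing in a cycle is a $2$-move. The $(w,u)$-entry of $\imp_{\tau_0',\calS}(c)$ equals $\tau_{c-1}'(w)\tau_{c-1}'(u)=s_ws_u\cdot\tau_{c-1}(w)\tau_{c-1}(u)$, i.e.\ it is $s_ws_u$ times the $(w,u)$-entry of $\imp_{\tau_0,\calS}(c)$; likewise for $(w,v)$-entries, and the $(u,v)$-entry is $0$ in both (flipping both endpoints leaves $(u,v)$ on the same side of the cut). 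Hence, letting $D$ be the diagonal matrix on $\R^{E(\calS)}$ with $D_{e,e}=s_as_b$ for $e=(a,b)$, we obtain $\imp_{\tau_0',\calS}(c)=D\cdot\imp_{\tau_0,\calS}(c)$ for every $2$-move $c$, with one and the same $D$ for all of them. Now invoke Lemma~\ref{lem:independence-init-dependent-cycles}: a cycle $C=(c_1,\dots,c_t)$ that is dependent with respect to $\tau_0$ with cancellation vector $b$ is also dependent with respect to $\tau_0'$ with the \emph{same} $b$; in particular $\cycles(\calS)$ is the same set for both base configurations, and the coefficients $b_j$ in \eqref{eq:cycle-improvement} are unchanged. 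Combining this with the displayed identity and linearity of $D$,
\[
\imp_{\tau_0',\calS}(C)=\sum_{j\in[t]}b_j\cdot\imp_{\tau_0',\calS}(c_j)=\sum_{j\in[t]}b_j\cdot D\,\imp_{\tau_0,\calS}(c_j)=D\cdot\imp_{\tau_0,\calS}(C).
\]

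Finally, $D$ has diagonal entries in $\{\pm1\}$, so it is invertible (indeed $D^{-1}=D$) and $v\mapsto Dv$ is a linear bijection of $\R^{E(\calS)}$. Therefore it maps the set $Q_\cycles$ built from $\tau_0$ bijectively onto the set $Q_\cycles$ built from $\tau_0'$ and preserves the dimension of the span; hence $\rank_{\cycles}(\calS)$ is the same whether computed with $\tau_0$ or $\tau_0'$. Since $\tau_0,\tau_0'$ were arbitrary, the rank is independent of the base configuration. I do not expect a genuine obstacle here beyond bookkeeping; the only two points that need care are (i) that the cancellation vector $b$ is the same for both base configurations, which is exactly Lemma~\ref{lem:independence-init-dependent-cycles}, and (ii) that a \emph{single} diagonal $D$ works simultaneously for every move of every cycle, which holds because the sign corrections $s_w$ do not depend on the time step.
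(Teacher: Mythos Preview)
Your proposal is correct and takes essentially the same approach as the paper: you show that passing from $\tau_0$ to $\tau_0'$ multiplies every move-improvement vector by a single invertible diagonal $\pm1$ matrix $D$ (your $D$ coincides with the paper's $\mathcal{D}[\tau_0,\tau_0']$), invoke Lemma~\ref{lem:independence-init-dependent-cycles} to keep the cancellation vector $b$ unchanged, and conclude that each $\imp_{\tau_0',\calS}(C)=D\cdot\imp_{\tau_0,\calS}(C)$, hence the ranks agree. The paper packages the same computation via matrix products $M_{\cycles}=\mathcal{M}_{\text{2-move}}\cdot\mathcal{B}$, but the content is identical.
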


 For the sake of readability we defer the proofs of initial configuration invariance for the rank of improvement vectors of arcs and cycles to Appendix~\ref{sec:rank-invariance}. 
Having defined the sets of $\arcs(\calS)$ and $\cycles(\calS)$, we conclude this section by showing that for a fixed parameter $\eps>0$, a move sequence $\calS$ and an initial configuration $\tau_0\in \{\pm 1\}^{V(\calS)}$, if either $\rank_{\arcs}(\calS)$ or $\rank_{\cycles}(\calS)$ is high,
  then most likely (over $X\sim \calX$) $(\gamma_0,\calS)$ is not $\eps$-improving for every $\gamma_0\in \{\pm 1\}^n$ that is an extension of $\tau_0$.

\begin{lemma}\label{problemma}
Let $\eps>0$. 
With probability at least
$$
1-\left(2\len(\calS)\cdot \phi \eps\right)^{\max
\big(\emph{\rank}_{\emph{\arcs}}(\calS),\hspace{0.05cm}\emph{\rank}_{\emph{\cycles}}(\calS)\big)} 
$$
over $X\sim \calX$, we have that 
$(\gamma_0,\calS)$ is not $\eps$-improving for every $\gamma_0\in \{\pm 1\}^n$ that is an extension of $\tau_0$.
\end{lemma}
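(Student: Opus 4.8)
The plan is to show that the only way $(\gamma_0,\calS)$ can be $\eps$-improving for \emph{some} extension $\gamma_0$ of $\tau_0$ is for a large family of fixed linear forms in $X$ — one per arc, one per dependent cycle — to all land in one short interval, and then to invoke the rank bound of Corollary~\ref{cor:rank}. Concretely, set $\ell:=\len(\calS)$ and $I:=[-\ell\eps,\ell\eps]$, so $\len(I)=2\ell\eps$. First I would record the deterministic implication: if $(\gamma_0,\calS)$ is $\eps$-improving then, by~\eqref{eq:general-improvement}, every per-move improvement $\imp_{\gamma_0,\calS}(i)\cdot X=\obj_X(\gamma_i)-\obj_X(\gamma_{i-1})$ lies in $(0,\eps]$, hence has absolute value at most $\eps$.

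Next I would feed this into Lemmas~\ref{lem:concellation-of-external-edges-arc} and~\ref{lem:concellation-of-external-edges-cycle}. For an arc $\alpha=(i,j)$ of a node $u$, Lemma~\ref{lem:concellation-of-external-edges-arc} says $w[u,i,j]=\gamma_i(u)\imp_{\gamma_0,\calS}(i)-\gamma_j(u)\imp_{\gamma_0,\calS}(j)$ agrees with $\imp_{\tau_0,\calS}(\alpha)$ on the coordinates $E(\calS)$ and vanishes on all others, so $\imp_{\tau_0,\calS}(\alpha)\cdot X$ equals a $\pm$-signed difference of two quantities of absolute value $\le\eps$, hence lies in $[-2\eps,2\eps]\subseteq I$ (using $\ell\ge 2$, which holds whenever $\arcs(\calS)\ne\emptyset$). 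Likewise, for a dependent cycle $C=(c_1,\dots,c_t)$ with cancellation vector $b$, Lemma~\ref{lem:concellation-of-external-edges-cycle} gives $\imp_{\tau_0,\calS}(C)\cdot X=\sum_{j\in[t]}b_j\big(\imp_{\gamma_0,\calS}(c_j)\cdot X\big)$, a $\pm$-signed sum of $t\le\ell$ terms each of absolute value $\le\eps$, hence in $[-\ell\eps,\ell\eps]=I$. The point I would stress is that, by those two lemmas, the forms $\imp_{\tau_0,\calS}(\alpha)$ and $\imp_{\tau_0,\calS}(C)$ depend only on $\tau_0$ and not on which extension $\gamma_0$ happened to be $\eps$-improving; this is exactly what lets us avoid a union bound over the $2^{n-|V(\calS)|}$ configurations of the inactive nodes.

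Finally I would apply Corollary~\ref{cor:rank} to the independent variables $(X_e:e\in E(\calS))$ and the fixed interval $I$, once with $\calC=Q_\arcs$ and once with $\calC=Q_\cycles$: the event ``$(\gamma_0,\calS)$ is $\eps$-improving for some extension of $\tau_0$'' is contained in $\bigcap_{\alpha\in\arcs(\calS)}\{\imp_{\tau_0,\calS}(\alpha)\cdot X\in I\}$ and also in $\bigcap_{C\in\cycles(\calS)}\{\imp_{\tau_0,\calS}(C)\cdot X\in I\}$, so its probability is at most $(2\ell\phi\eps)^{\rank_{\arcs}(\calS)}$ and at most $(2\ell\phi\eps)^{\rank_{\cycles}(\calS)}$. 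Since the statement is vacuous when $2\ell\phi\eps\ge 1$, one may assume $2\ell\phi\eps<1$, in which case the smaller of the two bounds corresponds to the larger exponent, giving $(2\len(\calS)\phi\eps)^{\max(\rank_{\arcs}(\calS),\rank_{\cycles}(\calS))}$; complementing yields the lemma. There is no real obstacle here — the substance has been front-loaded into Corollary~\ref{cor:rank} and the two cancellation lemmas; the only points needing a bit of care are keeping the interval length uniform across arcs and cycles (hence the factor $\len(\calS)$ rather than a constant) and the observation above that the relevant linear forms are functions of $\tau_0$ alone.
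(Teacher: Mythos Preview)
Your proposal is correct and follows essentially the same approach as the paper's proof: show that $\eps$-improving forces every arc/cycle improvement form to land in $[-\len(\calS)\eps,\len(\calS)\eps]$, then apply Corollary~\ref{cor:rank} to each of $Q_{\arcs}$ and $Q_{\cycles}$ and take the better bound. You are more explicit than the paper on two points it leaves implicit --- invoking Lemmas~\ref{lem:concellation-of-external-edges-arc} and~\ref{lem:concellation-of-external-edges-cycle} to justify that the linear forms depend only on $\tau_0$ (not on the particular extension $\gamma_0$), and noting that the passage from $\min$ of two probability bounds to the $\max$ exponent uses $2\len(\calS)\phi\eps<1$ (the complementary case being vacuous) --- but the argument is the same.
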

\begin{proof}
Let $\mathcal{E}_{moves}$  be the event of a given $(\gamma_0,\calS)$ being  $\eps$-improving  with respect to edge weights $X$, for some fixed $\eps>0$:
\[
\mathcal{E}_{moves}:\left\{\imp_{\gamma_0,\calS}(i)\cdot X\in (0,\eps],\quad\text{for all $i\in [\ell]$}\right\}
\]
where $\imp_{\gamma_0,\calS}(i)$ correspond to the improvement vector of $\calS_i$ move(See (\ref{eq:1-move-improvement}),(\ref{eq:2-move-improvement})). 
Now, notice that  the improvement vector of an arc (See (\ref{eq:arc-improvement})) or of a dependent cycle (See (\ref{eq:cycle-improvement})) can be written as the $\{-1,0,1\}$ sum of all the improvement vectors of either 1 or 2-moves in $\calS$.
Thus, we define the corresponding event for cycles and arcs for a given sequence $(\gamma_0,\calS)$ with respect to edge weights $X$:
\[
\mathcal{E}_{arcs/cycles}:\left\{\imp_{\gamma_0,\calS}(\beta)\cdot X\in [-\len(\calS)\eps,\len(\calS)\eps],\quad\text{for any $\beta\in \arcs(\calS)/\cycles(\calS)$}\right\}
\]
So it is easy to see that  $\mathcal{E}_{moves}$ implies $\mathcal{E}_{arcs/cycles}$, or equivalently $\Pr[\mathcal{E}_{moves}]\leq \min\{\Pr[\mathcal{E}_{arcs}], \Pr[\mathcal{E}_{cycles}]\}$. Thus, by leveraging Corollary~\ref{cor:rank} for vectors in $Q_\arcs$ and $Q_\cycles)$, we get that:
\[\Pr[\text{$(\gamma_0,\calS)$ being an $\eps$-improving sequence}] \leq \left(2\len(\calS)\cdot \phi \eps\right)^{\max
\big(\emph{\rank}_{\emph{\arcs}}(\calS),\hspace{0.05cm}\emph{\rank}_{\emph{\cycles}}(\calS)\big)}
 \]
This finishes the proof of the lemma.\end{proof}

\section{Main Lemma and the Proof of Theorem \ref{thm:main-high-prob} and Theorem \ref{thm:swap}} \label{sec:main-lemma-and-thm}

We start with the  definition of \emph{\valid} move sequences:

\begin{definition}\label{def:goodsequence}
We say a move sequence $\calS=(\calS_1,\ldots,\calS_\ell)$ is \emph{\valid} if it satisfies the following property: 
For every $i<j\in [\ell]$,  at least one node $w\notin \calS_i$  appears an odd number of times in $\calS_{i},\ldots,\calS_{j}$.
\end{definition}

\begin{lemma}\label{lem:goodseq}
The move sequence generated by 2-FLIP (or by pure 2-FLIP), for any pivoting rule and any instance, is \valid.
\end{lemma}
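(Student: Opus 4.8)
The plan is to argue by contradiction, using only the fact that a sequence produced by 2-FLIP (or pure 2-FLIP) is improving, together with the stated restriction on the pivoting rule. Suppose $\calS=(\calS_1,\ldots,\calS_\ell)$ is generated by an execution of 2-FLIP from some initial configuration, inducing configurations $\gamma_0,\gamma_1,\ldots,\gamma_\ell$, and suppose $\calS$ is \emph{not} valid: there exist $i<j$ in $[\ell]$ such that every node $w\notin\calS_i$ is flipped an even number of times among $\calS_i,\ldots,\calS_j$ (equivalently among $\calS_{i+1},\ldots,\calS_j$, since $w\notin\calS_i$). The first step is to see what this forces on $\gamma_j$: a coordinate $w\notin\calS_i$ has $\gamma_i(w)\neq\gamma_j(w)$ iff $w$ is flipped an odd number of times in $\calS_{i+1},\ldots,\calS_j$, which by assumption never happens; hence $\gamma_j$ agrees with $\gamma_i$, and therefore also with $\gamma_{i-1}$, on every coordinate outside $\calS_i$. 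So $\gamma_j$ is obtained from $\gamma_i$ by flipping some subset of $\calS_i$: if $\calS_i=\{u\}$ it equals $\gamma_i$ or $\gamma_{i-1}$, and if $\calS_i=\{u,v\}$ it equals one of $\gamma_i$, $\gamma_{i-1}$ (both nodes flipped back), or $\gamma_i$ with exactly one of $u,v$ flipped.

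Next, because $\calS$ is improving and $0\le i-1<i\le j\le\ell$, we have the strict chain $\obj_X(\gamma_{i-1})<\obj_X(\gamma_i)<\cdots<\obj_X(\gamma_j)$, which immediately rules out $\gamma_j=\gamma_i$ and $\gamma_j=\gamma_{i-1}$. For \emph{pure} 2-FLIP this already suffices: the total number of (node, step) incidences in $\calS_{i+1},\ldots,\calS_j$ is $\sum_{k=i+1}^{j}|\calS_k|=2(j-i)$, an even number; the nodes outside $\calS_i$ contribute an even number of these (each does individually), so the two nodes of $\calS_i$ together are flipped an even number of times in $\calS_{i+1},\ldots,\calS_j$, hence with equal parity, forcing $\gamma_j\in\{\gamma_i,\gamma_{i-1}\}$ — already contradicted. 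Thus pure 2-FLIP always produces a valid sequence.

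It remains to handle general 2-FLIP, where $1$-moves are allowed and the parity count above fails. The only surviving case is that $\calS_i=\{u,v\}$ is a $2$-move and $\gamma_j$ equals $\gamma_i$ with exactly one of $u,v$ flipped; call this configuration $\sigma$. Then $\sigma$ is obtained from $\gamma_{i-1}$ by flipping exactly the other one of $u,v$, a single node of the pair that the algorithm chose to flip at step $i$; and from the chain above, $\obj_X(\sigma)=\obj_X(\gamma_j)>\obj_X(\gamma_i)>\obj_X(\gamma_{i-1})$. So this $1$-flip both strictly improves the current cut and yields a strictly better cut than the $2$-flip $\calS_i$ that 2-FLIP actually performed, contradicting the assumption that the algorithm never flips a pair of nodes when flipping only one of them would produce a strictly better cut. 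This completes the argument. The one place requiring care is precisely this last case: it is the only step where a hypothesis beyond "improving" is used, pure and general 2-FLIP need different arguments there (incidence/Hamming parity versus the pivoting rule), and the pivoting assumption should be applied by comparing the single flip against the cut \emph{after} the chosen $2$-flip — which, conveniently, is also strictly above the cut before it, so the conclusion is robust to how one reads that assumption.
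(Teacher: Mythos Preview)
Your proof is correct and follows essentially the same approach as the paper's: both argue by contradiction, use the improving property to rule out $\gamma_j\in\{\gamma_{i-1},\gamma_i\}$, dispatch pure 2-FLIP by a parity count, and handle general 2-FLIP via the pivoting-rule assumption applied to the single flip of one node of $\calS_i$. The only cosmetic difference is that the paper phrases things in terms of the set $O$ of odd-parity nodes in $\calS_i,\ldots,\calS_j$, whereas you track the configuration $\gamma_j$ directly; the logic is the same.
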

\begin{proof}
Let $\calS$ be a move sequence generated by 2-FLIP (or pure 2-FLIP). If there are two moves $\calS_i, \calS_j$, $i \leq j$, such that no node appears an odd number of times in $\calS_i, \dots , \calS_j$, then the configurations before $\calS_i$ and after $\calS_j$ are the same, contradicting the fact that all the moves increase the weight of the cut. Therefore, 
the set $O$ of nodes that appear an odd number of times in $\calS_i, \dots , \calS_j$ is nonempty.
Suppose that $i<j$ and $O \subseteq \calS_i$.
If $O = \calS_i$, then the set of nodes that appear an odd number of times in $\calS_{i+1}, \dots , \calS_j$  would be empty,
a contradiction to the above property.
Therefore, $O \neq \calS_i$.

In the case of pure 2-FLIP, since all moves are 2-flips, $O$ has even size, and hence
$O \neq \emptyset$ and $O \neq \calS_i$ imply the claim.

In the case of 2-FLIP, $O \neq \emptyset$, $O \neq \calS_i$ and $O \subseteq \calS_i$ imply that $\calS_i$ has size 2, say $\calS_i = \{ u,v \}$ and $O = \{u\}$ or $O = \{v\}$.
If $O = \{u\}$ then the configuration $\gamma_j$ differs from $\gamma_{i-1}$ only in that node $u$ is flipped. Thus, at configuration $\gamma_{i-1}$, flipping node $u$ results in configuration $\gamma_j$ which has strictly greater cut than the configuration $\gamma_i$ that results by flipping the pair $\{u,v\}$, contradicting our assumption about 2-FLIP.
A similar argument holds if $O = \{v\}$.
In either case we have a contradiction to $O \subseteq \calS_i$. The claim follows.
\end{proof}

Given a move sequence $\calS$, a \emph{window} $W$ of $\calS$ is a substring of $\calS$, i.e., $W=(\calS_i,\ldots,\calS_j)$ for some $i<j\in [\ell]$ (so $W$ itself is also a move sequence).
Our main technical lemma below shows that every long enough \valid move sequence has a window $W$ such that either $\rank_{\arcs}(W)$ or $\rank_{\cycles}(W)$ is large relative to $\len(W)$.

\begin{lemma}\label{maintech}
Let $\calS$ be a \valid move sequence with $\emph{\len}(\calS)\ge n\log^{10} n$.
Then $\calS$ has a window $W$ such that 
\begin{equation}\label{hiddenconst}
\max\Big(\emph{\rank}_{\emph{\arcs}}(W),\emph{\rank}_{\emph{\cycles}}(W)\Big)\ge \Omega\left(\frac{\emph{\len}(W)}{\log^{10} n}\right).
\end{equation}
\end{lemma}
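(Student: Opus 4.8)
\emph{The plan.} I would first split on whether $1$-moves or $2$-moves carry the bulk of $\calS$, and put essentially all the work into the $2$-move case. If at least half the moves of $\calS$ are $1$-moves, then since there are only $n$ nodes and $\len(\calS)\ge n\log^{10}n$, the number of $1$-move arcs in $\calS$ is $\Omega(\len(\calS))$, and I would run the bucketing argument of \cite{etscheid2015smoothed}: sort the arcs into the $O(\log n)$ length classes $[2^i,2^{i+1})$, take the most populous class (it has $\Omega(\len(\calS)/\log n)$ arcs), and use that every arc of length in $[L,2L)$ lies inside $\Omega(L)$ windows of length $\Theta(L)$ to extract, by averaging, a window $W$ of length $\Theta(L)$ carrying $\Omega(\len(W)/\log n)$ arcs of length $\Theta(L)$. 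In such a window a single node can be the hub of only $O(1)$ arcs of length $\ge L$, so these arcs use $\Omega(\len(W)/\log n)$ distinct nodes, each making at least two $1$-moves in $W$; the bound $\rank_{\arcs}(W)\ge\tfrac12|\{\text{nodes making}\ge 2\text{ }1\text{-moves in }W\}|$, which follows from Lemma~\ref{lem:concellation-of-external-edges-arc} exactly as in \cite{etscheid2015smoothed}, then yields \eqref{hiddenconst}.

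\emph{The $2$-move case; the nice window.} In the complementary case I would, for exposition, treat the clean situation that \emph{all} moves of $\calS$ are $2$-moves (the interspersed $1$-moves of the general case enter only through the ``$u$ did not move in between'' condition used in splitting below, and can be carried along). For a window $W$, form the \emph{auxiliary graph} $H(W)$ on vertex set $V(W)$ with one edge per $2$-move of $W$. The first and hardest task is to produce a \emph{nice} window: one in which a constant fraction of the $2$-moves $\{u,v\}$ are \emph{good}, meaning both $u$ and $v$ occur at least $\polylog{n}$ times in $W$ (so that, after peeling off low-degree vertices, $H(W)$ has far more edges than vertices, hence circuit rank $\Omega(\len(W))$), while at least one of $u,v$ occurs at most a larger $\polylog{n}$ number of times (so that this bounded endpoint is cheap to delete and to split around). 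I would establish existence by a two-level bucketing argument — over window length scales and over the pair of endpoint occurrence-counts of a move — using $\len(\calS)\ge n\log^{10}n$ together with the bound $|V_n|=n$ to force a large average node-multiplicity, and using the \valid property (Definition~\ref{def:goodsequence}, guaranteed for $2$-FLIP by Lemma~\ref{lem:goodseq}) to rule out degenerate windows.

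\emph{Extracting linearly independent dependent cycles.} Given a nice window $W$, I would build the cycles one at a time by a delete-and-split process on (copies of) $H(W)$. In each round I grow a BFS forest in the current graph until a non-tree edge closes a cycle, and then, controlling parities via Remark~\ref{remark:criterion} (combining with a second cycle sharing part of its support if the first is not dependent, at the cost of a factor $2$), I obtain a \emph{dependent} cycle $C$. By Lemma~\ref{lem:concellation-of-external-edges-cycle} its improvement vector $\imp_{\tau_0,W}(C)$ vanishes outside $E(W)$ and is nonzero there; I pick a \emph{witness} edge $e=(u,w)\in E(W)$ on which it is nonzero with $u$ an endpoint of a good move of $C$ of $H$-degree $O(\polylog{n})$. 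I then delete $u$ from the graph — this kills only $O(\deg_H(u))=O(\polylog{n})$ edges and guarantees no later cycle flips $u$, hence no later improvement vector has a nonzero $e$-coordinate — and replace the other endpoint of that move by the $O(\polylog{n})$ copies obtained by cutting its occurrence list at the occurrences of $u$, so the splitting condition is met automatically. Each round thus spends only $O(\polylog{n})$ of the $\Omega(\len(W))$ edges of $H(W)$, so the process runs for $\Omega(\len(W)/\polylog{n})$ rounds; the recorded vectors $\imp_{\tau_0,W}(C_1),\imp_{\tau_0,W}(C_2),\dots$ are triangular with respect to the distinct witnesses $e_1,e_2,\dots$ and hence linearly independent, giving $\rank_{\cycles}(W)=\Omega(\len(W)/\polylog{n})$, i.e.\ \eqref{hiddenconst}.

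\emph{Main obstacle.} The crux is the $2$-move case, and within it the tension between the two properties of a nice window: I need enough good moves on vertices of moderate degree in $H(W)$ that $H(W)$ genuinely has circuit rank $\Omega(\len(W))$, yet also need one endpoint of each good move pushed down to degree only $\polylog{n}$ so a round of delete-and-split is affordable — while an adversarial $\calS$ will try either to pile all its mass on a few very-high-frequency nodes or to spread it so thinly that no window repeats any node. Showing that $\len(\calS)\ge n\log^{10}n$ already precludes both extremes \emph{at a common length scale}, and then running the delete-and-split accounting so that witnesses are never reused and the remaining graph stays cycle-rich through all $\Omega(\len(W)/\polylog{n})$ rounds, is where essentially all the difficulty lies; this is carried out in Sections~\ref{sec:good-window} and~\ref{sec:finding-cycles}, while the reduction of the mixed $1$-/$2$-move case to these two pure cases in Section~\ref{sec:general-case} is comparatively routine but must be handled with care since a window can interleave both move types.
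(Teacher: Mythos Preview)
Your plan matches the paper's strategy closely: the same $1$-/$2$-move case split (the paper thresholds at $N/\log^5 n$ rather than $N/2$, which matters only for the mixed case in Section~\ref{sec:general-case}), the same windowing argument to locate a $W$ in which $\Omega(\len(W)/\log n)$ of the $2$-moves have both endpoints of multiplicity $\ge\log^3 n$ and one endpoint of multiplicity $\le\log^7 n$ (Lemma~\ref{lem:sequence-selection}), the same bipartite auxiliary structure $V_1\cup V_2$ with the low-degree side $V_1$ (Lemma~\ref{hehelemma51}), and the same delete-and-split bookkeeping to produce cycles with a triangular witness matrix (Lemma~\ref{main} and Lemma~\ref{mainlemma1}).

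The one substantive gap is your handling of \emph{dependence}. Your proposal ``combining with a second cycle sharing part of its support if the first is not dependent, at the cost of a factor~$2$'' is not justified: unwinding Remark~\ref{remark:criterion}, a cycle is dependent iff it contains an even number of same-sign edges, so dependence is a $\Z_2$-linear functional on the cycle space; the symmetric difference of two non-dependent cycles has an even number of same-sign edges in total but may decompose into two cycles each with an odd number, so you are not guaranteed a dependent cycle this way, let alone one carrying the witness structure you need for the tree-growing step. The paper sidesteps this entirely by keeping only the same-sign (or only the different-sign) edges of $E'$ before looking for cycles; in the resulting bipartite graph every cycle is even and automatically dependent (Corollary~\ref{cor:simple-to-dependent}), at the cost of a single factor~$2$ in edge count. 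A secondary imprecision: your witness $\{u,w\}$ is an edge of $K_n$, not a move of $W$, so ``the other endpoint of that move'' is ambiguous; in the paper one deletes every $V_1$-endpoint of a past witness and splits each $V_2$-endpoint $v$ at all occurrences of its accumulated $V_1$-partners $\wit_1(v)$, and it is the bound $\#_W(u)\le\log^7 n$ for $u\in V_1$ that keeps the total vertex blowup at $O(s\log^7 n)$ after $s$ rounds.
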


We prove Lemma \ref{maintech} when $\calS$ consists of $2$-moves only in Section \ref{sec:good-window} and \ref{sec:finding-cycles}, and then generalize the proof to work with general move sequences in Section \ref{sec:general-case}.
Assuming Lemma \ref{maintech}, we use it to establish our main theorem, restated below:

\def\calF{\mathcal{F}}

\mainthm*

\begin{proof}
Let $\eps>0$ be specified as follows:
$$
\eps:=\frac{1}{\phi n^{c_1\log^{10} n}}
$$
for some large enough constant $c_1>0$ to be specified later.
We write $\calF$ to denote the following event on the draw of edge weights $X\sim \calX$: 
\begin{flushleft}\begin{quote}
Event $\calF$: 
  For every move sequence $W$ of length at most $n\log^{10} n$ such that (letting $a>0$ be the constant hidden in (\ref{hiddenconst}))
\begin{equation}\label{eqhehe}
\max\Big(\rank_{\arcs}(W),
\rank_{\cycles}(W)\Big)\ge \frac{a}{\log^{10} n}\cdot \len(W).
\end{equation}
and every configuration $\gamma_0\in \{\pm 1\}^n$,
  $(\gamma_0,W)$ is not $\eps$-improving with respect to $X$.
\end{quote}\end{flushleft}
 We break the proof of the theorem into two steps. First we show that $\calF$ occurs with probability at least $1-o_n(1)$.
Next we show that when $\calF$ occurs, any implementation
  of $2$-FLIP must terminate in at most $\phi n^{O(\log^{10} n)}$ many steps.

For the first step, we apply Lemma \ref{problemma} on every 
  move sequence $W$ of length at most $n\log^{10} n$ that satisfies 
  (\ref{eqhehe}) and every configuration $\tau_0\in \{\pm 1\}^{V(W)}$.
It then follows from a union bound that $\calF$ occurs with probability at least
$$
1-\sum_{\ell\in [n\log^{10} n]} n^{2\ell}\cdot 2^{2\ell}\cdot 
  (\ell \phi\eps)^{\frac{a \ell}{\log^{10} n}}
=1- \sum_{\ell\in [n\log^{10} n]} \left((2n)^{\frac{2\log^{10} n}{a}}
  \cdot \ell \phi \eps\right)^{\frac{a\ell}{\log^{10} n}}=1-o_n(1),
$$
where the factor $n^{2\ell}$ is an upper bound for the number of
  $W$ of length $\ell$ and $2^{2\ell}$ is an upper bound for the number
  of configurations $\tau_0$ since $|V(W)|\le 2\ell$.
The last equation follows by setting the constant $c_1$ sufficiently large.

For the second step, we assume that the event $\calF$ occurs, and
  let $\gamma_0,\ldots,\gamma_N\in \{\pm 1\}^n$ be a sequence of $N$ configurations that is the result of the execution of some implementation of $2$-FLIP under $X$.
Let $\calS=(\calS_1,\ldots,\calS_N)$ denote the move sequence induced by
  $\gamma_0,\ldots,\gamma_N$.  
So $(\gamma_0,\calS)$ is improving with respect to edge weights $X$.
By Lemma \ref{lem:goodseq}, $\calS$ is a \valid sequence.

We use the event $\calF$ to bound the length of $\calS$.
Because of $\calF$ and that $\calS$ is a \valid move sequence, 
  it follows from Lemma \ref{maintech} that the objective function
  gets improved by at least $\eps$ for every $n\log^{10} n$ consecutive moves in $\calS$.
Given that the objective function lies between $[-n^2,n^2]$, we have 
\begin{equation}
 \len(\calS)\le  n\log^{10} n\cdot \frac{2n^2}{\eps}\le \phi n^{O(\log^{10} n)}.
\label{eq:bound-of-length}
\end{equation}
 
This finishes the proof of the theorem.
\end{proof}

\begin{corollary}
Under the same setting of Theorem~\ref{thm:main-high-prob},  the same result holds for Pure 2-FLIP. 
\end{corollary}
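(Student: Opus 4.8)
The plan is to observe that nothing in the proof of Theorem~\ref{thm:main-high-prob} used any feature of $2$-FLIP beyond two properties of the move sequence $\calS$ it generates: that $(\gamma_0,\calS)$ is improving with respect to the realized weights $X$, and that $\calS$ is \valid. Both properties hold for any implementation of Pure 2-FLIP as well. The first is immediate from the definition of the algorithm: every step of Pure 2-FLIP performs a $2$-flip that strictly increases the cut weight, so the induced configurations satisfy $\obj_X(\gamma_i)>\obj_X(\gamma_{i-1})$ for all $i$. The second is exactly the statement of Lemma~\ref{lem:goodseq}, whose proof already covers the pure case (using that every move has even size, so once $O\neq\emptyset$ and $O\subseteq\calS_i$ are known, $|O|=1$ is impossible).

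Given these two facts I would replay the proof of Theorem~\ref{thm:main-high-prob} verbatim. Fix $\eps:=1/(\phi n^{c_1\log^{10}n})$ and the event $\calF$ exactly as defined there, and note that $\calF$ is a statement about the edge weights $X$ only and is completely agnostic to which local-search procedure we run, since it quantifies over \emph{all} move sequences $W$ of length at most $n\log^{10}n$ satisfying the rank condition (\ref{eqhehe}) and all partial configurations. Hence the union bound giving $\Pr[\calF]\ge 1-o_n(1)$ is unchanged. Now condition on $\calF$ and let $\calS$ be the sequence produced by an arbitrary implementation of Pure 2-FLIP; it is improving and \valid, so every window of $\calS$ of length $n\log^{10}n$ is itself a \valid move sequence and, by Lemma~\ref{maintech}, contains a sub-window with $\max(\rank_{\arcs},\rank_{\cycles})\ge\Omega(\len(\cdot)/\log^{10}n)$; by the definition of $\calF$ that sub-window, and therefore the enclosing length-$n\log^{10}n$ window, is not $\eps$-improving. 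Thus the objective gains more than $\eps$ over every $n\log^{10}n$ consecutive moves, and since $\obj_X\in[-n^2,n^2]$ we get $\len(\calS)\le n\log^{10}n\cdot 2n^2/\eps=\phi n^{O(\log^{10}n)}$, as in (\ref{eq:bound-of-length}).

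There is no genuine obstacle here; the only point worth flagging is that a Pure 2-FLIP sequence consists entirely of $2$-moves, so we only invoke the $2$-moves-only case of Lemma~\ref{maintech} (proved in Sections~\ref{sec:good-window}--\ref{sec:finding-cycles})---the general-case extension of Section~\ref{sec:general-case} is not even needed. Finally, since the SWAP algorithm is a restricted form of Pure 2-FLIP (balanced initial partition, each $2$-flip across the partition) as noted in Section~\ref{sec:approach}, this corollary simultaneously re-derives Theorem~\ref{thm:swap}; but that is a bonus, not part of the argument.
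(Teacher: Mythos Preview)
Your proposal is correct and follows exactly the paper's approach: the paper's proof is the one-line observation that the only property of the move sequence used in the proof of Theorem~\ref{thm:main-high-prob} is that it is \valid, and Lemma~\ref{lem:goodseq} already establishes this for Pure 2-FLIP. Your additional remarks (that only the $2$-moves-only case of Lemma~\ref{maintech} is needed, and the connection to SWAP) are accurate but go beyond what the paper records here.
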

\begin{proof}
The only property of the sequence of moves used in the proof of Theorem~\ref{thm:main-high-prob} is that it is a valid sequence, and this property holds for Pure 2-FLIP as well.  
\end{proof}

Notice that by twining the constant $c_1$ in the exponent,  we can control the tail-bound of the failure probability.
Thus, we can strengthen our proof to get the same bound for the expected number of steps needed to terminate as in  the standard smoothed analysis prototype :
{
\begin{corollary}\label{thm:main-exp}
Under the same setting of Theorem~\ref{thm:main-high-prob}, any implementation of the 2-FLIP algorithm (or Pure 2-FLIP) takes at most $\phi n^{O\left( \log^{10} n\right)}$ many steps to terminate on expectation.
\end{corollary}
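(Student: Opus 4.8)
The plan is to upgrade the high-probability statement of Theorem~\ref{thm:main-high-prob} to a bound on the \emph{expected} number of steps, by pairing it with the trivial deterministic bound and a sharper estimate of the failure probability. The deterministic bound is immediate: every move of $2$-FLIP (and of pure $2$-FLIP) strictly increases $\obj_X$, so no configuration is ever revisited, and hence for every draw of the weights the number of steps $Z$ satisfies $Z\le 2^n$.

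The crux is to re-run the proof of Theorem~\ref{thm:main-high-prob} with the constant $c_1$ in $\eps=1/(\phi n^{c_1\log^{10}n})$ left free, and to observe that the failure event $\neg\calF$ can be made \emph{super-exponentially} unlikely -- small enough to beat the $2^n$ above. For this one needs to read off from the proof of Lemma~\ref{maintech} in Sections~\ref{sec:good-window}--\ref{sec:finding-cycles} that the window $W$ it extracts from an $n\log^{10}n$-move block of the execution is actually long, $\len(W)=\Omega(n)$, and hence $\max(\rank_{\arcs}(W),\rank_{\cycles}(W))=\Omega(\len(W)/\log^{10}n)=\Omega(n/\log^{10}n)$. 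Restricting the definition of $\calF$ to windows of length $\Omega(n)$ (harmless, as these are the only windows Lemma~\ref{maintech} ever outputs) turns the union bound of the proof into $\sum_{\ell}n^{2\ell}2^{2\ell}(2\ell\phi\eps)^{\Omega(\ell/\log^{10}n)}$, summed over $\ell$ from $\Omega(n)$ up to $n\log^{10}n$. Since $\phi\eps=n^{-c_1\log^{10}n}$, for every admissible $\ell$ the per-window probability $(2\ell\phi\eps)^{\Omega(\ell/\log^{10}n)}$ is smaller than $(n^{2\ell}2^{2\ell})^{-1}$ by a super-polynomial margin once $c_1$ exceeds a fixed absolute constant, so the geometric series sums to $\Pr[\neg\calF]\le 2^{-\Omega(n\log n)}$, while $\calF$ still forces $Z\le\phi n^{O(\log^{10}n)}$ exactly as in the original proof.

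Given this, one concludes $\E[Z]=\E[Z\,\mathbf{1}_{\calF}]+\E[Z\,\mathbf{1}_{\neg\calF}]\le \phi n^{O(\log^{10}n)}+2^n\cdot 2^{-\Omega(n\log n)}=\phi n^{O(\log^{10}n)}$. Since the only property of the move sequence used anywhere is validity (Lemma~\ref{lem:goodseq}), the same bound holds for pure $2$-FLIP, and via the reduction behind Theorem~\ref{thm:csp} it carries over to every complete MAX 2-CSP and BFOP instance, with the extra factor $m$.

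I expect the main obstacle to be exactly the second step. Because the deterministic fallback is as large as $2^n$, one truly needs $\Pr[\neg\calF]$ to decay faster than $2^{-n}$, whereas the naive estimate of the failure probability is dominated by short high-rank windows and is only quasi-polynomially small in $n$ -- which, combined with the $2^n$ fallback, would give only an exponential bound on $\E[Z]$. Overcoming this hinges on the window produced by Lemma~\ref{maintech} being not merely of good rank-to-length ratio but genuinely long (length $\omega(n/\log n)$ already suffices), so that each admissible window contributes a factor $(\phi\eps)^{\Omega(n/\log^{10}n)}$ overwhelming the $n^{O(n)}\cdot 2^{O(n)}$ choices of window and of initial configuration; pinning down this long-window property is the delicate part. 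Should it be cleaner to avoid it, a fallback is a layered analysis: run the above with a decreasing family of thresholds $\eps_1>\eps_2>\cdots$ and events $\calF_1,\calF_2,\dots$, bound $\Pr[Z\ge t]$ by $\Pr[\neg\calF_{k(t)}]$ for the index $k(t)$ with $\phi n^{c_1 k\log^{10}n}$ just below $t$, and sum $\E[Z]=\sum_{t\ge 1}\Pr[Z\ge t]$; the crux there is to make the resulting tail converge to a quasi-polynomial value.
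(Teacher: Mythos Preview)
Your primary approach has a genuine gap: the claim that the window $W$ produced by Lemma~\ref{maintech} has length $\Omega(n)$ is not supported by the proof and is in general false. In Section~\ref{sec:good-window} the window length $L$ is chosen by pigeonholing over the $O(\log n)$ scales $\ell$, with $L\approx(1+\delta)^\ell$; the definition of an $\ell$-good index only demands that the window capture between $\log^3 n$ and $\log^7 n$ occurrences of the relevant node, so for sequences in which some nodes recur very frequently the selected $\ell$ (hence $L$) can be as small as $\text{polylog}(n)$. Thus the union bound over windows really is dominated by short windows, and with a single fixed $\eps$ you cannot drive $\Pr[\neg\calF]$ below $2^{-n}$; paired with the $2^n$ deterministic fallback this yields only an exponential bound on $\E[Z]$, exactly the failure mode you flagged.

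Your fallback, on the other hand, is essentially what the paper does and goes through. The paper parameterizes $\calF$ by $\eps$ and observes that every term in the union bound carries at least one factor of $\phi\eps$ (since $\lceil a\ell/\log^{10}n\rceil\ge 1$), so that for $\eps<\eps_0$ one has $\Pr[\neg\calF_\eps]\le C\,\eps/\eps_0$; no long-window property is needed, only the per-$\ell$ slack $n^{-3}$ already present in the proof of Theorem~\ref{thm:main-high-prob}. Since $\calF_\eps$ forces $Z\le L(\eps)=2n^3\log^{10}n/\eps$, this gives the tail bound $\Pr[Z>\rho L(\eps_0)]\le O(1/(n\rho))$, and summing $\sum_{s\le 2^n}\Pr[Z\ge s]$ yields $\E[Z]=O(L(\eps_0))=\phi n^{O(\log^{10}n)}$. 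You should commit to this route and drop the long-window attempt.
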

\begin{proof}
We let $\calF_\eps$ denote the event $\calF$ in the proof of Theorem~\ref{thm:main-high-prob} with a specified $\eps > 0$. Let
$\eps_0=1/\left(\phi\cdot n^{c_1\log^{10} n} \right)$, where $c_1>0$ is 
a constant to be fixed shortly. For any $\eps<\eps_0$, we have that
\begin{align*}\Pr[\lnot \calF_{\eps}]&\leq \sum_{\ell\in [n\log^{10} n]} n^{2\ell}\cdot 2^{2\ell}\cdot 
(\ell \phi\eps)^{ \max\Big(\emph{\rank}_{\emph{\arcs}}(W),\emph{\rank}_{\emph{\cycles}}(W)\Big)}\leq \sum_{\ell\in [n\log^{10} n]} n^{2\ell}\cdot 2^{2\ell}\cdot 
(\ell \phi\eps \cdot \frac{\eps_0}{\eps_0})^{\lceil \frac{a \ell}{\log^{10} n}\rceil}\\
&\leq \sum_{\ell\in [n\log^{10} n]} n^{2\ell}\cdot 2^{2\ell}\cdot 
\left(\ell \phi\eps_0\right)^{\lceil \frac{a \ell}{\log^{10} n}\rceil} \cdot  (\frac{\eps}{\eps_0})^{\lceil \frac{a \ell}{\log^{10} n}\rceil}\leq \sum_{\ell\in [n\log^{10} n]} \frac{1}{n^{3}} \left(\frac{\eps}{\eps_0}\right)^{\lceil \frac{a \ell}{\log^{10} n}\rceil}\leq\frac{\eps}{c_2n\eps_0}
\end{align*}
where $c_1=2+6/a$ (letting $a>0$ be the constant hidden in (\ref{hiddenconst})) and $c_2=10^8$.
From the proof of Theorem~\ref{thm:main-high-prob}, conditionally to the event $\calF_\eps$ for any $\eps\leq \eps_0$, $\len(\calS)\leq L(\eps):= \tfrac{2n^3\log^{10}n}{\eps}$. Notice that for any $\eps(\rho):=\eps_0/\rho$,  $L(\eps(\rho))= \rho L(\eps_0)$.
Thus, the probability that $\len(\calS)$ is larger than $cL(\eps)$ for any $\rho\ge 1$ is
$$\Pr[\len(\calS)> \rho L(\eps)]\le \Pr[\lnot F_{\eps_0/\rho}]\le   \frac{1/\rho}{  c_2 \cdot n}.$$
Note that $L$ is always trivially bounded by the total number of configurations, $2^n$. Therefore, we have\vspace{0.1cm}
\begin{align*}
\mathbb{E}[\len(\calS)]&=\sum_{s=1}^{2^n} \Pr[\len(\calS)\ge s]  \le 
 \sum_{s=1}^{\lceil L(\eps_0)\rceil} \Pr[L\ge s] 
 +\sum_{s=\lceil L(\eps_0)\rceil}^{2^n} \Pr[L\ge s] \leq  L(\eps_0) +\sum_{s=\lceil L(\eps_0)\rceil}^{2^n} \Pr[L\ge s]
 \\&\leq  L(\eps_0) + \sum_{s=\lceil L(\eps_0)\rceil}^{2^n}\Pr[L\ge  \tfrac{s}{ L(\eps_0)}\cdot L(\eps_0)] \leq L(\eps_0) + \sum_{s=\lceil L(\eps_0)\rceil}^{2^n}\frac{L(\eps_0)/s}{ c_2n}
= O(n)\cdot L(\eps_0)=\phi n^{O(\log^{10} n)}.\vspace{0.1cm} 
\end{align*} 
This finishes the proof of Corollary~\ref{thm:main-exp}.
\end{proof}
}
The same results hold for the Graph Partitioning problem and the SWAP neighborhood.

\swapthm*

\begin{proof}
Every move sequence $\calS$ generated by SWAP (for any pivoting rule, any weights, and any initial balanced partition) is also a legal move sequence for Pure 2-FLIP on the same instance, except that the sequence may be incomplete for Pure 2-FLIP, that is, the final partition may not be locally optimal for Pure 2-FLIP, since there may be a  2-move 
(but not a swap) that improves the weight of the cut (the resulting partition would not be balanced), and Pure 2-FLIP would continue and produce a longer sequence.
Hence, the number of steps of SWAP is upper bounded by the number of steps of Pure 2-FLIP, and thus it is at most $\phi n^{O\left( \log^{10} n\right)}$ with probability $1-o_n(1)$,
as well as in expectation.
\end{proof}

{
\section{Windows in a Valid Sequence of 
$2$-Moves}
\label{sec:good-window}

\def\order{\textsf{order}}

We will start with the proof of Lemma \ref{maintech} for the case when $\calS$ consists of $2$-moves only in Section \ref{sec:good-window} and \ref{sec:finding-cycles}, and generalize it to deal with general move sequences in Section \ref{sec:general-case}.

We start with a combinatorial argument about sets and subsequences of $[N]$, where $N=poly(n)$ for any polynomial at $n$.
Let $I$ be a subset of $[N]$ with $|I|\ge \log^{10} n$.
  Intuitively, later in this section $I$ will be chosen to be the set $I_u$ representing the appearances of some frequently appeared active node $u\in V(\calS)$ in a move sequence $\calS$.
  %
  %
  We will write $\order(i)$ to denote the order of $i\in I$. In other words, the smallest index in $I$ has order $1$ and the largest index in $I$ has order $|I|$. To give an example if $I=\{2,5,9,11\}$ then $\order(2)=1, \order(5)=2$ and so on. Let $\delta=0.01$. We start by quantifying how much large windows centered around an index $i\in I$ should be to cover the majority of a set $I$. Afterwards, we present the combinatorial lemmas about subset $I$.



\begin{definition}Let $I\subseteq [N]$.
We say an index $i\in I$ is \emph{$\ell$-good} for some positive integer $\ell$ if
$$\Big[i-\lceil(1+2\delta)L'\rceil:i+\lceil(1+2\delta)L'\rceil\Big]\subseteq [N],\quad \text{where $ L'=\lceil(1+\delta)^{\ell-1}\rceil$}$$
and $I$ satisfies  
\begin{equation}
\Big|I\cap [i-L':i+L']\Big|\ge \log^3 n\quad\text{and}\quad
\Big|I\cap \Big[i-\lceil(1+2\delta)L'\rceil:i+\lceil(1+2\delta)L'\rceil\Big]\Big|\le \log^7 n.
\end{equation}
If there exists no such constant $\ell$, we call the corresponding index \emph{bad}.
\label{def:good-index}
\end{definition}
\begin{figure}[h!]
\centering
\tikzset{every picture/.style={line width=0.75pt}} 

\begin{tikzpicture}[x=0.75pt,y=0.75pt,yscale=-0.95,xscale=0.95]

\draw  [draw opacity=0][fill={rgb, 255:red, 126; green, 211; blue, 33 }  ,fill opacity=0.51 ] (131.58,114.89) -- (421.79,114.89) -- (421.79,141.85) -- (131.58,141.85) -- cycle ; \draw   (131.58,114.89) -- (131.58,141.85)(157.88,114.89) -- (157.88,141.85)(184.19,114.89) -- (184.19,141.85)(210.49,114.89) -- (210.49,141.85)(236.79,114.89) -- (236.79,141.85)(263.1,114.89) -- (263.1,141.85)(289.4,114.89) -- (289.4,141.85)(315.7,114.89) -- (315.7,141.85)(342.01,114.89) -- (342.01,141.85)(368.31,114.89) -- (368.31,141.85)(394.61,114.89) -- (394.61,141.85)(420.92,114.89) -- (420.92,141.85) ; \draw   (131.58,114.89) -- (421.79,114.89)(131.58,141.19) -- (421.79,141.19) ; \draw    ;
\draw  [draw opacity=0][fill={rgb, 255:red, 248; green, 231; blue, 28 }  ,fill opacity=0.64 ] (477.03,113.15) -- (767.25,113.15) -- (767.25,140.11) -- (477.03,140.11) -- cycle ; \draw   (477.03,113.15) -- (477.03,140.11)(503.33,113.15) -- (503.33,140.11)(529.64,113.15) -- (529.64,140.11)(555.94,113.15) -- (555.94,140.11)(582.25,113.15) -- (582.25,140.11)(608.55,113.15) -- (608.55,140.11)(634.85,113.15) -- (634.85,140.11)(661.16,113.15) -- (661.16,140.11)(687.46,113.15) -- (687.46,140.11)(713.76,113.15) -- (713.76,140.11)(740.07,113.15) -- (740.07,140.11)(766.37,113.15) -- (766.37,140.11) ; \draw   (477.03,113.15) -- (767.25,113.15)(477.03,139.45) -- (767.25,139.45) ; \draw    ;

\draw (299.18,119.47) node [anchor=north west][inner sep=0.75pt]    {$i$};
\draw (357.68,79.45) node [anchor=north west][inner sep=0.75pt]    {$L'$};
\draw (91.41,118.6) node [anchor=north west][inner sep=0.75pt]    {$W=$};
\draw (140.43,140.54) node [anchor=north west][inner sep=0.75pt]    {$\underbrace{\ \ \ \ \ \ \ \ \ \ \ \ \ \ \ \ \ \ \ \ \ \ \ \ \ \ \ \ \ \ \ \ }$};
\draw (420.48,110.76) node [anchor=north west][inner sep=0.75pt]  [rotate=-180.05]  {$\underbrace{\  \ \ \ \ \ \ \ \ \ \ \ \ \ \ \ \ \ \ \ \ \ \ }$};
\draw (588.52,119.76) node [anchor=north west][inner sep=0.75pt]    {$i$};
\draw (522.01,79.45) node [anchor=north west][inner sep=0.75pt]    {$L'$};
\draw (433.11,118.6) node [anchor=north west][inner sep=0.75pt]    {$W'=$};
\draw (613.27,140.54) node [anchor=north west][inner sep=0.75pt]    {$\underbrace{\ \ \ \ \ \ \ \ \ \ \ \ \ \ \ \ \ \ \ \ \ \ \ \ \ \ \ \ \ \ \ \ \ \ }$};
\draw (578.81,110.41) node [anchor=north west][inner sep=0.75pt]  [rotate=-180.05]  {$\underbrace{\ \ \ \ \ \ \ \ \ \ \ \ \ \ \ \ \ \ \ \ \ \ }$};
\draw (169,160.4) node [anchor=north west][inner sep=0.75pt]    {$\lceil ( 1+2\delta ) L'\rceil $};
\draw (645,160.4) node [anchor=north west][inner sep=0.75pt]    {$\lceil ( 1+2\delta ) L'\rceil $};

\end{tikzpicture}
\caption{Two examples of windows whose intersection in $I$ is between $[\log^3n, \log^7n]$. }
\end{figure}
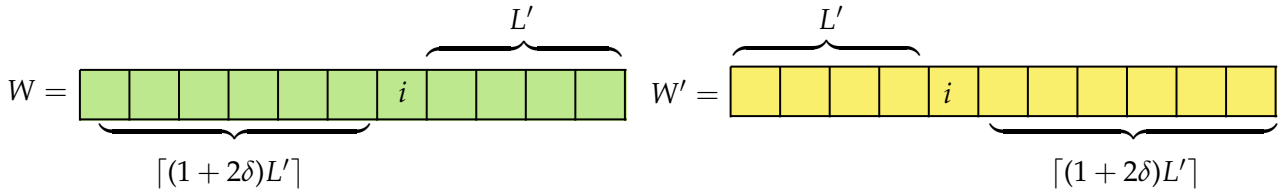

\begin{remark}\label{rem:windows}
Some motivation behind the definition \ref{def:good-index}:
When $i\in I$ is $\ell$-good (letting $L=L'+\lceil(1+2\delta)L'\rceil+1$ and $L'=\lceil(1+\delta)^{\ell-1}\rceil$), it implies that all the $\lceil(1+2\delta)L'\rceil-L'\ge 2\delta L'=\Omega(L)$\footnote{Indeed, by definition $L=L'+1+\lceil(1+2\delta)L'\rceil\leq 2((1+\delta)L'+1)\le 2((1+\delta)L'+1)\le 2(2+\delta)L'$, since $L'\ge 1$} windows $W$ of length $L$,
  i.e., those start at $i-\lceil(1+2\delta)L'\rceil,\ldots,i-L'$, satisfy
$$
i\in W,\quad
\big|I\cap W\big|\ge \log^3 n\quad\text{and}\quad
\big|I\cap W\big|\le \log^7 n.
$$
\end{remark}

\begin{remark}
By definition~\ref{def:good-index}, $\ell$ can get at most $\log_{1+\delta}N=\Theta(\log n)$, for any $N=poly(n)$.
\end{remark}
\begin{lemma}\label{lem:lgood}
Suppose $I$ is a subset of $[N]$ with $|I|\ge {\log^{8} n}$.
Then at least a $(1-O(1/\log n))$-fraction of $i\in I$ is $\ell_i$-good for some nonnegative integer $\ell_i$.
\end{lemma}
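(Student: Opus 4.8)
The plan is to show that for all but an $O(1/\log n)$-fraction of indices $i\in I$, there exists some scale $\ell$ at which the window around $i$ has the right density. First I would classify an index $i\in I$ as \emph{bad} and try to understand structurally what being bad forces. By Definition~\ref{def:good-index}, $i$ is $\ell$-good when (ignoring the boundary condition $[i-\lceil(1+2\delta)L'\rceil:i+\lceil(1+2\delta)L'\rceil]\subseteq[N]$ for the moment) we have $|I\cap[i-L':i+L']|\ge \log^3 n$ and $|I\cap[i-\lceil(1+2\delta)L'\rceil:i+\lceil(1+2\delta)L'\rceil]|\le \log^7 n$ for $L'=\lceil(1+\delta)^{\ell-1}\rceil$. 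As $\ell$ ranges over $1,\dots,\Theta(\log n)$, the radii $L'$ form a geometric sequence with ratio $1+\delta$ from $1$ up to roughly $N$. So the intuition is: at the smallest scale ($\ell=1$, radius $1$) the inner count is tiny ($\le 3$), and at the largest scale the inner count is $|I|\ge \log^8 n$, which is too big. Somewhere in between the inner count must pass through the window $[\log^3 n,\log^7 n]$ — \emph{unless} it jumps over this interval, i.e. the count goes from below $\log^3 n$ at scale $\ell$ directly to above $\log^7 n$ at scale $\ell+1$, OR the outer count at scale $\ell$ (radius $(1+2\delta)L'$) exceeds $\log^7 n$ while the inner count at that same scale is still below $\log^3 n$.

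The main step is a charging/counting argument to bound the number of bad $i$. For a bad index $i$, there must be a scale $\ell$ where the count inflates dramatically: going from radius $L'$ to radius $\lceil(1+2\delta)L'\rceil$ (a factor of only $1+2\delta$ in length) the number of elements of $I$ jumps from $<\log^3 n$ to $>\log^7 n$, a factor of more than $\log^4 n$. (The "skip between consecutive $\ell$" case reduces to this since $(1+\delta)^{\ell}\le (1+2\delta)(1+\delta)^{\ell-1}$ for the relevant range, modulo ceilings.) I would call such a scale a \emph{blow-up scale} for $i$. The key combinatorial observation: for a fixed scale $\ell$, the indices $i$ having $\ell$ as a blow-up scale cannot be too numerous, because each such $i$ "sees" $\ge\log^7 n$ elements in its outer window of length $O(L')$ but only $<\log^3 n$ in a concentric window of length $\Omega(L')$, so these elements of $I$ are heavily concentrated in an annulus; a packing argument shows that each element of $I$ can be charged by only $O(\mathrm{poly}\log n)$ such bad indices $i$ per scale (anything in the annulus of $i$ is within distance $O(L')$ of $i$, and the inner-window-sparsity forces $i$ itself to be one of few $I$-indices in a sub-window). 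Summing the bound $|\{i : \ell \text{ is a blow-up scale for } i\}| = O(|I|\cdot \mathrm{poly}\log n / \log^4 n)$ over the $\Theta(\log n)$ scales, and choosing the polylog exponents in the definition (3 and 7) with enough slack, gives a total of $O(|I|/\log n)$ bad indices from blow-ups. Finally, I would handle separately the indices killed by the boundary condition $[i-\lceil(1+2\delta)L'\rceil:i+\lceil(1+2\delta)L'\rceil]\not\subseteq[N]$: a bad $i$ might fail good-ness only because every scale at which its density is right is a scale whose window runs off the end of $[N]$. But if $i$ is within distance $D$ of an endpoint of $[N]$, then all scales $\ell$ with $(1+2\delta)L'\ge D$ are boundary-blocked; the remaining (small) scales have radius $<D$, and I would argue that among indices $i$ near an endpoint, the density at some small scale must already land in $[\log^3 n, \log^7 n]$ unless, again, there is a blow-up — so these are absorbed into the previous count, or the fraction of such near-boundary indices is itself $O(1/\log n)$ because there are few of them relative to $|I|\ge\log^8 n$ when $N$ is polynomial in $n$ and $I$ is spread out. (If $I$ is so concentrated near an endpoint that this fails, one can check directly that nearly all of $I$ lies in a single small window which then serves as a good scale for most of its members.)

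The hard part will be making the charging argument in the second paragraph fully rigorous — specifically, pinning down exactly how many bad indices $i$ a single element of $I$ can be charged to at a given scale, and ensuring the geometric sum over $\Theta(\log n)$ scales comes out to $O(1/\log n)$ rather than $O(1)$. This requires being careful that the "blow-up" for different $i$ at the same scale uses genuinely different portions of $I$ (or overlapping ones that are still bounded in multiplicity), which is where the gap between $\log^3 n$ and $\log^7 n$ — four powers of $\log n$ — is spent: one or two powers to beat the $\Theta(\log n)$ number of scales, one or two powers for the overlap multiplicity in the packing argument, and some slack. I would also need to double-check the interaction of the two conditions in Definition~\ref{def:good-index} (inner window radius $L'$ versus outer window radius $\lceil(1+2\delta)L'\rceil$) so that "not $\ell$-good for any $\ell$" genuinely implies a blow-up between consecutive relevant radii and not some subtler failure mode.
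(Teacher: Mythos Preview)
Your approach is essentially the same as the paper's: both identify, for each bad index, a scale at which an annulus around it contains $\Omega(\log^7 n)$ elements of $I$ while the inner window has fewer than $\log^3 n$, and then run a charging argument (the paper's Claim~\ref{claim:hehe}) showing each element of $I$ lies in at most $O(\log^3 n)$ such annuli at the worst scale, yielding the contradiction $\Omega(|I|\log^5 n)\le O(|I|\log^3 n)$. The paper streamlines your plan in two places you flagged as delicate: it handles boundaries by simply discarding the top and bottom $\lceil\log^7 n\rceil$ order-statistics of $I$ upfront (negligible since $|I|\ge\log^8 n$), and it defines $\ell_i$ explicitly as the largest scale for which the radius-$\lceil(1+2\delta)L'\rceil$ window around $i$ contains at most $\log^7 n$ elements of $I$ (equivalently, via the distance to the index $\lfloor\log^7 n/2\rfloor$ positions away in order), so that the outer condition holds by construction and ``bad'' reduces cleanly to failure of the inner condition alone---this sidesteps the ``subtler failure mode'' you worried about in your last paragraph.
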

\begin{proof}
We start by defining an $\ell_i$ for each $i\in I$ (except for the smallest $\lceil\log^7 n\rceil$ indices and  the largest $\lceil\log^7 n\rceil$ indices in $I$, which are negligible since $|I|\ge \log^{8} n$)
  and then show that most $i\in I$ is $\ell_i$-good.
 Let $I'$ be the subset of $I$ after removing the smallest $\lceil\log^7 n\rceil$ indices and  the largest $\lceil\log^7 n\rceil$ indices in $I$. 
For each $i\in I'$, let
\begin{itemize}
    \item 
 $j\in I'$ be the index in $I'$ of order $\order(i)-\lfloor\log^7 n/2\rfloor+1$,
\item $k\in I'$ be the index in $I'$ of order $\order(i)+\lfloor\log^7 n/2\rfloor-1$. 
\item $\Delta$ be the minimum distance between index $i$ and indices $j,k$, $\Delta=\min(i-j,k-i)$ and
\item $\ell_i$ be the largest integer such that  $\lceil(1+2\delta)\cdot (1+\delta)^{\ell_i-1} \rceil \le \Delta-2$.
\end{itemize}
Using the fact that for any real positive number $x$, it holds that  $ 
0 \leq \lceil(1+2\delta)\cdot \lceil x\rceil\rceil -
\lceil(1+2\delta)\cdot x \rceil 
\leq 2
$, we get that:
$$
\begin{cases}
\lceil(1+2\delta)\cdot \lceil(1+\delta)^{\ell_i-1}\rceil\rceil \le \Delta\\
\lceil(1+2\delta)\cdot (1+\delta)^{\ell_i}\rceil> \Delta-2
\end{cases}.
$$
For the rest of the proof, let $L_i'=\lceil(1+\delta)^{\ell_i-1}\rceil$.
It follows from the choice of $\ell_i$ that 
\begin{align} \label{hehe3}
\big|I\cap [i-\lceil (1+2\delta)L_i'\rceil :i+\lceil (1+2\delta)L_i'\rceil ]\big|&\le 2(\lfloor\log ^7 n/2\rfloor-1)+1\le \log ^7 n \\ 
\big|I\cap [i-\lceil (1+\delta)\cdot (1+2\delta)L_i'\rceil:i+\lceil (1+\delta)\cdot (1+2\delta)L_i'\rceil ]\big|&\ge (\lfloor\log ^7 n/2\rfloor-1)+1 
-4 \label{explainmanolis}
\end{align}
For (\ref{explainmanolis}), we use the observation that left-hand side is larger than $\big|I\cap [i-(\Delta-2):i+(\Delta-2)]\big|\ge \big|I\cap [i-\Delta:i+\Delta]\big| -4  $. 
Using $(1+\delta)(1+2\delta)\le 1+4\delta$ with $\delta=0.01$, the second inequality implies
\begin{equation}\label{hehe2}
\big|I\cap [i-\lceil (1+4\delta)L_i'\rceil :i+\lceil (1+4\delta)L_i'\rceil ]\big|\ge \lfloor \log^7 n/2\rfloor -4 .
\end{equation}
On the other hand, (\ref{hehe3}) implies that $i\in I'$ is $\ell_i$-good unless
\begin{equation}\label{hehe1}
\big|I\cap [i-L_i':i+L_i']\big|\le \log^3 n.
\end{equation}

 Assume now for a contradiction that the number of $i\in I'$ that are bad is at least $|I|/\log n$. 
 Additionally, for any possible exponent $\ell\in[\log_{1+\delta}N]$, let $\mathcal{R}_\ell$ be the set of the indices $i$ that are not $\ell$-good:
 $$\mathcal{R}_\ell:=\{i\in I' \text{ s.t } i \text{ is not } \ell-\text{good} \} \text{ \& }\ell^*=\displaystyle\mathop{\operatorname{argmax}}_{\ell\in [\log_{1+\delta}N]}|\mathcal{R}_\ell|$$
Then, it holds that 
$$|\mathcal{R}_{\ell^*}|\ge \frac{|I|/\log n 
}{\log_{1+\delta} N}=\Omega(|I|/\log^2 n),$$
where we use the facts that $|I|\ge {\log^{8} n}$ and $N=poly(n)$.
We define then $L^*=\lceil (1+\delta)^{{\ell^*}-1}\rceil $
and for each $\rho\in \mathcal{R}_{\ell^*}$ we let
$$
B_\rho = \big(I\cap [\rho- \lceil (1+4\delta)L^*\rceil,\rho-L^*]\big)\cup
\big(I\cap [\rho+L^*,\rho+\lceil (1+4\delta)L^*\rceil ]\big).
$$
Note now that when an index $i$ is not $\ell$-good, we have from (\ref{hehe1}) and (\ref{hehe2}) that
\begin{align*}
    \Big|\big(I\cap [i-\lceil (1+4\delta)L_i'\rceil :i-L_i']\big)\cup
\big(I\cap [i+L_i':i+\lceil (1+4\delta)L_i'\rceil ]\big)\Big|&\ge \lfloor \log^7 n/2\rfloor-4 -\log^3n= \Omega(\log^7 n).
\end{align*}
Hence, we have that $|B_\rho|\ge \Omega(\log^7 n)$ for every $\rho\in \mathcal{R}_{\ell^*}$ and thus,
$$
\sum_{\rho\in \mathcal{R}_{\ell^*}} |B_\rho|\ge \Omega\left(\frac{|I|}{\log^2 n}\right)\cdot \Omega(\log^7 n)
=\Omega\big(|I|\log^5 n\big).
$$
On the other hand, we can prove the following claim:
\begin{claim}\label{claim:hehe}
For any $i\in I$, the number of $\rho\in  \mathcal{R}_{\ell^*}$ such that $i\in B_\rho$  is at most $O(\log ^3 n)$.
\end{claim}
It follows then from the claim that
$$ \text{ For any $i\in I$, we get }
|\{\rho\in \mathcal{R}_{\ell^*}\big|i\in B_\rho\}|=O(\log^3 n)\Rightarrow \sum_{\rho\in \mathcal{R}_{\ell^*}} |B_\rho|\le |I|\cdot O(\log^3 n),
$$ 
which leads to a contradiction.
\end{proof}
\begin{proof}[Proof of Claim \ref{claim:hehe}]
Fix any $i\in I$.
Let us assume then that $\rho$ be a $\rho\in \mathcal{R}_{\ell^*}$ such that $i\in B_\rho$. We prove that the number of $\rho\in \mathcal{R}_{\ell^*}$ with $\rho>i$ and $i\in B_\rho$ is at most $O(\log^3 n)$; the case with $\rho<i$ is symmetric. If no such $\rho$ exists then  the claim is trivially true. Hence, let's assume that such one exists with $\rho>i$.
Given that $i\in B_\rho$, we have that $i\in [\rho-\lceil (1+4\delta)L^*\rceil ,\rho-L^*]$ and we also have 
\begin{equation}\label{hehe5}
\big|I\cap [\rho-L^*,\rho+L^*]\big|\le \log^3 n.
\end{equation}
On the other hand, every other $\rho'\in \mathcal{R}_{\ell^*}$ that satisfies $\rho'>i$ and $i\in B_{\rho'}$ also has the property that $i\in [\rho'-\lceil (1+4\delta)L^*\rceil ,\rho'-L^*]$ and thus,
$
\rho'\in [i+L^*,i+\lceil (1+4\delta)\rceil L^*].
$
But combining this with $i\in [\rho-(1+4\delta)L^*,\rho-L^*]$ we have
$$
\rho'\le i+\lceil (1+4\delta)L^*\rceil \le \rho+\lceil (1+4\delta)L^*\rceil -L^*
$$
and 
$$
\rho'\ge i+L^*\ge \rho-\lceil (1+4\delta)L^*\rceil +L^*.
$$
So $\rho'\in [\rho-\lceil (1+4\delta)L^*\rceil +L^*,\rho+\lceil (1+4\delta)L^*\rceil -L^*]\subseteq [\rho-L^*,\rho+L^*]$ and by
  (\ref{hehe5}) the number of such $\rho'$ is no more than $\log^3 n$.
\end{proof}



Now we return to work on our problem and an arbitrary move sequence $\calS=(\calS_1,\ldots,\calS_N)$.
Let $W$ be a window (move sequence) of $\calS$.
For each active node $u\in V(W)$, we
write $\#_{W}(u)$ to denote the number of occurrences of $u$ in $W$. 
The main result in this section is the following lemma:



\begin{lemma}\label{lem:sequence-selection}
Let $\calS$ be a move sequence of length $N= n\log^{10} n$ that consists of $2$-moves only. There exists a positive integer $L$ such that 
$\calS$ has at least $\Omega((N-L+1)/\log n)$ many windows $W=(W_1,\ldots,W_L)$ of length $L$ such that at least $\Omega(L/\log n)$ moves $W_i=\{u,v\}$ of $W$ satisfy
\begin{equation}\label{heheeq}
\log^{3} n\le \#_W(u)\le \log^{7} n\quad\text{and}\quad
\#_W(v)\ge \log^{3} n
\end{equation}
\end{lemma}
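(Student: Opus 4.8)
The plan is to route the argument through Lemma~\ref{lem:lgood}: attach to almost every move a pair of ``scales'' coming from its two endpoints, select a single scale $\ell^*$ that is assigned to many moves, let $L$ be the length associated to $\ell^*$, and then harvest many good windows by a double‑counting argument. For a node $w\in V(\calS)$ let $I_w=\{i\in[N]:w\in\calS_i\}$; since $\calS$ consists of $2$-moves, $\sum_{w}|I_w|=2N$. Call $w$ \emph{frequent} if $|I_w|\ge\log^8 n$. As there are at most $n$ non‑frequent nodes, at most $n\log^8 n=N/\log^2 n$ moves have a non‑frequent endpoint, so $(1-o_n(1))N$ moves have both endpoints frequent. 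For each frequent $w$, Lemma~\ref{lem:lgood} marks all but an $O(1/\log n)$ fraction of $I_w$ as ``$w$-good'', each $w$-good index $i$ carrying the scale $\ell_w(i)$ produced in the proof of that lemma. Call a both‑endpoints‑frequent move at position $i$ with $\calS_i=\{u,v\}$ \emph{doubly good} if $i$ is both $u$-good and $v$-good; summing the $O(|I_w|/\log n)$ failures over frequent $w$ shows that all but $O(N/\log n)$ of the both‑endpoints‑frequent moves are doubly good, so there are $\Omega(N)$ doubly good moves.

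To a doubly good move $i$ with $\calS_i=\{u,v\}$ assign $\ell(i):=\max(\ell_u(i),\ell_v(i))$, and let $v(i)$ be an endpoint attaining this maximum, $u(i)$ the other one. Since $\ell(i)$ takes only $O(\log n)$ values, some value $\ell^*$ is assigned to $\Omega(N/\log n)$ doubly good moves. Put $L:=L'+\lceil(1+2\delta)L'\rceil+1$ with $L'=\lceil(1+\delta)^{\ell^*-1}\rceil$; this is a positive integer, and $L\le N$ because $\ell^*$-good indices exist. Fix a doubly good move $i$ with $\ell(i)=\ell^*$. Since $i$ is $\ell^*$-good in $I_{v(i)}$, Remark~\ref{rem:windows} supplies $\Omega(L)$ windows $W$ of $\calS$ with $\len(W)=L$, $i\in W$, and $\log^3 n\le\#_W(v(i))\le\log^7 n$. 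The key observation is that each such $W$ contains the interval $[i-L':i+L']$ (it extends at least $L'$ on each side of $i$), and since $\ell_{u(i)}(i)\le\ell^*$ this interval contains the one witnessing that $i$ is $\ell_{u(i)}(i)$-good in $I_{u(i)}$, so $\#_W(u(i))\ge\log^3 n$ as well. Hence $\calS_i$, written as $\{v(i),u(i)\}$, satisfies~(\ref{heheeq}) inside $W$, i.e.\ it is a good move of $W$.

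It remains to count. The pairs $(i,W)$ constructed above number $\Omega(N/\log n)\cdot\Omega(L)=\Omega(NL/\log n)$, and distinct pairs sharing a window $W$ yield distinct good moves of $W$. Writing $g(W)$ for the number of good moves in $W$, this gives $\sum_{\len(W)=L}g(W)\ge\Omega(NL/\log n)$, the sum ranging over the $N-L+1$ windows of length $L$. Since $g(W)\le L$ always, a Markov‑type averaging (if only $x$ windows had $g(W)\ge cL/(2\log n)$ then $\sum g(W)\le xL+N\cdot cL/(2\log n)$) shows $x=\Omega(N/\log n)=\Omega((N-L+1)/\log n)$, and each of these windows has $g(W)=\Omega(L/\log n)$, which is the assertion.

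The main obstacle — and the only idea beyond the bucketing scheme already present in Lemma~\ref{lem:lgood} — is handling a move both of whose endpoints are frequent but whose neighborhoods around $i$ have very different densities: one must take the window length from the \emph{larger} of the two scales $\ell_u(i),\ell_v(i)$, so that the sparser endpoint $v(i)$ is pinned down in $[\log^3 n,\log^7 n]$ by $\ell^*$-goodness while the denser endpoint $u(i)$ gets its $\ge\log^3 n$ occurrences for free from interval containment. Making that containment and the various constants ($\log^3 n$, $\log^7 n$, the $1+2\delta$ slack) fit together is the delicate part; the rest is bucketing, pigeonhole, and averaging.
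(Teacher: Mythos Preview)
Your proof is correct and follows essentially the same route as the paper's: both arguments tag each move with the maximum of the two endpoint-scales supplied by Lemma~\ref{lem:lgood}, pigeonhole on this scale, use Remark~\ref{rem:windows} on the endpoint attaining the maximum to pin it in $[\log^3 n,\log^7 n]$, observe that the other endpoint's smaller-scale interval is contained in the window (so it automatically gets $\ge\log^3 n$ occurrences), and finish with an averaging argument. Your write-up spells out the ``it is easy to verify'' step that the paper leaves implicit, and your deterministic double-counting is the same as the paper's random-window expectation argument in a different dress.
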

\begin{proof}
For each node $u\in V(\calS)$ we write $I_u\subseteq [N]$ to denote the set of $i\in [N]$ with $u\in \calS_i$.
We say the $i$-th move $\calS_i=\{u,v\}$ is $\ell$-good for some positive integer $\ell$ if $i$ is $\ell_1$-good in $I_u$ and $i$ is $\ell_2$-good in $I_v$ for some positive integers $\ell_1,\ell_2$ such that $\ell=\max(\ell_1,\ell_2)$.

Let $\calS_i=\{u,v\}$. Then we consider the following cases:
\begin{flushleft}\begin{enumerate}
\item Either $|I_u|$ or $|I_v|$ is smaller than $\log^8 n$: 
Given that no more than $n\log^8n$ moves can contain a vertex that appears less than $\log^8 n$ times in the sequence, we have
the number of such $i$ is at most $$ n\log^8 n =o(N/\log n);$$
\item $|I_u|,|I_v|\ge \log^8 n$ but either $u$ is not $\ell_1$-good for any $\ell_1$ or $v$ is not $\ell_2$-good for any $\ell_2$: By Lemma \ref{lem:lgood}, the number of such $i$ is at most
(using $\sum_u |I_u|=2N$)
$$
\sum_{u: |I_u|\ge \log^8 n} \frac{|I_u|}{\log n}\le \frac{2N}{\log n}.
$$
\item Otherwise, $\calS_i$ is $\ell$-good by setting $\ell=\max(\ell_1,\ell_2)$. 
\end{enumerate}\end{flushleft}
Thus, the number of $i\in [N]$ such that the $\calS_i$ is $\ell$-good for some $\ell$ is at least $(1-3/\log n)N$.



Given that $\ell$ is at most $O(\log N)=O(\log n)$,
there exists a positive integer $\ell$ such that the number of moves in $\calS$ that are $\ell$-good is at least $\Omega(N/\log n)$. 
Let $L'=\lceil (1+\delta)^{\ell-1}\rceil$ and $$L=L'+\big\lceil(1+2\delta)L'\big\rceil+1.$$
For any move $\calS_i=\{u,v\}$ that is $\ell$-good, it is easy to verify that there are $\Omega(L)$ windows $W$ of length $L$ that contain $i$ and satisfy \eqref{heheeq} (See Remark~\ref{rem:windows}) .

Let's pick a window $W$ of $\calS$ of size $L$ uniformly at random; note that there are $N-L+1$ many such windows in total.
Let $X$ be the random variable that denotes the number of moves in $W$ that satisfy (\ref{heheeq}).
Given that the number of moves that are $\ell$-good is at least $\Omega(N/\log n)$, we have
$$\textbf{E}\big[X\big]\ge\Omega\left(\frac{N}{\log n}\right)\cdot \frac{\Omega(L)}{N}=\Omega\left(\frac{L}{\log n}\right).$$ 
Let $a$ be the constant hidden above. Given that we always have $X\le L$, we have 
\begin{equation}\label{heheeq3}
\Pr\left[
X\ge \frac{aL}{2\log n}
\right]\ge \frac{a}{2\log n}
\end{equation}
since otherwise,
$$
\textbf{E}\big[X\big]\le 
\frac{a}{2\log n}\cdot L+ \left(1-\frac{a}{2\log n}\right)\cdot \frac{aL}{2\log n}<\frac{aL}{\log n}.
$$
a contradiction.
The lemma then follows directly from (\ref{heheeq3}).
\end{proof}
}

\section{Finding Cycles}
\label{sec:finding-cycles}

Let $\calS=(\calS_1,\ldots,\calS_N)$ be a \valid move sequence of length $N=n\log^{10}n$ that consists of $2$-moves only.
By Lemma \ref{lem:sequence-selection}, $\calS$ has a window $W=(W_1,\ldots,W_L)$ of length $L$ such that the number of moves in $W$ that satisfy  (\ref{heheeq}) is at least $\Omega(L/\log n)$.
We show in this section that such a $W$ satisfies
\begin{equation}\label{hiddenconst2}
{\rank}_{{\cycles}}(W) = \Omega\left(\frac{L}{\log^{10} n}\right).
\end{equation}
This will finish the proof of Lemma \ref{maintech} when $\calS$ consists of $2$-moves only.

\def\Vl{V_{\textbf{L}}}
\def\Vh{V_{\textbf{H}}}

To this end, let $\tau_0\in \{\pm 1\}^{V(W)}$ be the configuration with $\tau_0(u)=-1$ for all $u\in V(W)$ 
  so that we can work on vectors $\imp_{\tau_0,W}(i)$ 
  and $\imp_{\tau_0,W}(C)$ for dependent cycles of $W$
(at the same time, recall from Lemma~\ref{lem:independence-init-rank-cycles} that $\rank_{\cycles}(W)$ does not depend on the choice of $\tau_0$).
Let $\tau_0,\ldots,\tau_L$ denote the sequence of configurations induced by $W$.

Next, let us construct an auxiliary graph $H=(V(W),E)$,
  where every move $W_i=\{u,v\}$ adds an edge between $u$ and $v$ in $E$.
Note that we allow parallel edges in $H$ so $|E|=L$
  and $\#_W(u)$ is exactly the degree of $u$ in $H$.
There is also a natural one-to-one correspondence between cycles of $W$ and  cycles of $H$.
The following lemma shows the existence of a nice looking bipartite graph in $H$:

\begin{lemma}\label{hehelemma51}
There are two disjoint sets of nodes $V_1,V_2\subset V(W)$ and a subset of edges $E'\subseteq E$ such that 
\begin{enumerate}
\item Every edge in $E'$ has one node in $V_1$ and the other node in $V_2$;
\item $|V_1\cup V_2|=O(L/\log^3 n)$ and $|E'|=\Omega(L/\log n)$;
\item $\#_W(u)\le \log^7 n$ for every node $u\in V_1$.
\end{enumerate}
\end{lemma}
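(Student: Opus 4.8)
The plan is to read off the bipartite structure directly from the ``good'' moves supplied by Lemma~\ref{lem:sequence-selection}, after bucketing the vertices of $H$ by degree (recall $\#_W(w)$ equals the degree of $w$ in $H$). I would set $\Vl:=\{w\in V(W):\log^3 n\le \#_W(w)\le \log^7 n\}$ and $\Vh:=\{w\in V(W):\#_W(w)>\log^7 n\}$, and let $G$ denote the set of moves $W_i=\{u,v\}$ with $\log^3 n\le \#_W(u)\le \log^7 n$ and $\#_W(v)\ge \log^3 n$, so that $|G|=\Omega(L/\log n)$ by Lemma~\ref{lem:sequence-selection}. The key observation is that for each move in $G$ the distinguished endpoint $u$ lies in $\Vl$, while the other endpoint $v$ satisfies $\#_W(v)\ge\log^3 n$ and hence lies in $\Vl\cup\Vh$ (it cannot be a vertex of degree below $\log^3 n$); viewing each move of $G$ as an edge of $H$, this means every good edge is either a $\Vl$--$\Vl$ edge or a $\Vl$--$\Vh$ edge, and by pigeonhole at least $|G|/2=\Omega(L/\log n)$ of them fall in one of these two classes.

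I would then split into the two cases. If at least $|G|/2$ good edges are $\Vl$--$\Vh$ edges, take $V_1:=\Vl$, $V_2:=\Vh$ and $E'$ equal to that set of good edges: property~(1) and disjointness are immediate, property~(3) holds because $V_1=\Vl$, and $|E'|=\Omega(L/\log n)$. If instead at least $|G|/2$ good edges are $\Vl$--$\Vl$ edges, consider the submultigraph of $H$ with vertex set $\Vl$ and edge set these $m=\Omega(L/\log n)$ good edges, and apply the standard max-cut fact that every (multi)graph has a vertex bipartition cutting at least half of its edges; letting $V_1,V_2$ be the two sides and $E'$ the cut good edges yields $|E'|\ge m/2=\Omega(L/\log n)$, property~(1), disjointness, and property~(3) since $V_1\subseteq\Vl$.

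Finally I would check property~(2): in both cases $V_1\cup V_2\subseteq\Vl\cup\Vh$, every vertex of $\Vl\cup\Vh$ has $\#_W$-value at least $\log^3 n$, and $\sum_{w\in V(W)}\#_W(w)=2|E|=2L$, so $|\Vl\cup\Vh|\le 2L/\log^3 n=O(L/\log^3 n)$. I do not expect a real obstacle here; the points that need care are ensuring the case split is exhaustive---namely that a good edge can neither touch a vertex of degree below $\log^3 n$ nor be a $\Vh$--$\Vh$ edge, which is exactly the asymmetry between $u$ and $v$ in Lemma~\ref{lem:sequence-selection}---and keeping track that the $\Omega(L/\log n)$ bound on $|E'|$ is preserved through the constant-factor losses of the pigeonhole and max-cut steps.
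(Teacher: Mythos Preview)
Your proposal is correct and follows essentially the same approach as the paper: bucket vertices by degree into $\Vl$ and $\Vh$, observe that every good edge from Lemma~\ref{lem:sequence-selection} lies in $\Vl\times\Vl$ or $\Vl\times\Vh$, then extract a bipartite structure via a max-cut argument, with the size bound $|V_1\cup V_2|=O(L/\log^3 n)$ coming from the handshake identity $\sum_w \#_W(w)=2L$. The only cosmetic difference is that the paper does the bipartition in one stroke---randomly split $\Vl$ into two parts and place all of $\Vh$ on the $V_2$ side, so that every $\Vl$--$\Vl$ and every $\Vl$--$\Vh$ edge is cut with probability $1/2$---whereas you first pigeonhole into the $\Vl$--$\Vh$ case (take the cut as is) or the $\Vl$--$\Vl$ case (apply max-cut inside $\Vl$); both routes lose only constant factors and arrive at the same conclusion.
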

 \begin{proof}
 Let $V$ be the set of vertices $v$ such that $\#_W(v)\ge \log^3n$. 
We start our proof with the size of $V$: $$|V|\log^3 n\leq |V|\min_{v\in V}\#_W(u)= |V|\min_{v\in V} \mathop{\operatorname{deg}_H}(v)\leq \sum_{v\in V} \mathop{\operatorname{deg}_H}(v)=2|E(H)|\leq 2L.$$
We further partition $V$ into $V_\ell$ and $V_h$ such that $V_\ell$ contains those in $V$
with $\#_W(v)\le \log^7 n$  and $V_h$ contains those with $\#_W(v)>\log^7 n$. 
By Lemma~\ref{lem:sequence-selection} we can assume the number of edges incident to at least one vertex in $V_\ell$ (that is edges in $V_\ell\times V_\ell\cup V_\ell\times V_h$) is at least $\Omega(L/\log n)$. 
Suppose we construct $V_1$ and $V_2$ by randomly put each node in $V_\ell$ in $V_1$ or $V_2$ and put all nodes in $V_h$ in $V_2$. Any edge in $V_\ell\times V_\ell$ or $V_\ell\times V_h$ is between $V_1$ and $V_2$ with $1/2$ probability. Thus $\mathbb{E}[\text{EdgesInCut}(V_1,V_2)]=|E|/2=\Omega(L/\log n)$. Thus, by standard probabilistic argument,
there exist at least one assignment of $V_\ell$ to $V_1$ and $V_2$ such that at least half of the edges in $V_\ell\times V_\ell\cup V_\ell\times V_h$ are included. Hence, we get a bipartite graph between $V_1$ and $V_2$ with at least $\Omega(L/\log n)$ edges, and any node $v$ in $V_1$ satisfies $\log^3 n\le \#_W(v)\le \log^7 n$. Notice that since $V=|V_1\cup V_2|=|V_\ell\cup V_h|$, we get that $|V_1\cup V_2|=O(L/\log^3 n)$.
 \end{proof}

Recall the definition of dependent cycles of $W$ (and their cancellation vectors) from Section \ref{sec:setup}.
Since we only care about the rank of vectors induced by dependent cycles, 
  we give the following definition which classify each edge of $H$ into two types and then use it to give a sufficient condition for a cycle of $W$ to be dependent:

  


\begin{definition}
We say the $i$-th move 
$W_i=\{u,v\}$ of $W$ is of the same sign if $\tau_i(u)=\tau_i(v)$, and is of different signs if $\tau_i(u)\ne \tau_i(v)$.
\end{definition}


\begin{lemma}
Let $C=(c_1,\ldots,c_t)$ be a cycle of $W$ and assume that $t$ is even. If all of $W_{c_1},\ldots,W_{c_t}$ are of different signs, then $C$ is a dependent cycle;
If all of $W_{c_1},\ldots,W_{c_t}$ are of the same sign, then $C$ is a dependent cycle of $W$.
\end{lemma}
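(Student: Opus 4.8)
The plan is to apply the Dependence Criterion of Remark~\ref{remark:criterion} directly, after a short reorganization of the product that appears in it. Recall the criterion: $C=(c_1,\dots,c_t)$ is dependent if and only if
\[
(-1)^t=\tau_{c_t}(u_1)\,\tau_{c_1}(u_1)\cdot\prod_{i=2}^{t}\tau_{c_{i-1}}(u_i)\,\tau_{c_i}(u_i).
\]
First I would rewrite the right-hand side by grouping, for each $j\in[t]$, the two factors whose configuration index is $c_j$. Reading off the factors, the index $c_j$ carries $\tau_{c_j}(u_j)$ (the explicit factor $\tau_{c_1}(u_1)$ when $j=1$, and the $\tau_{c_i}(u_i)$ factor of the $i=j$ term otherwise) together with $\tau_{c_j}(u_{j+1})$ (the explicit factor $\tau_{c_t}(u_1)$ when $j=t$, and the $\tau_{c_{i-1}}(u_i)$ factor of the $i=j+1$ term otherwise), using throughout the cyclic convention $u_{t+1}:=u_1$. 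Hence the right-hand side equals $\prod_{j=1}^{t}\tau_{c_j}(u_j)\,\tau_{c_j}(u_{j+1})$.

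Next I would observe that, by Definition~\ref{def:cycle}, $\{u_j,u_{j+1}\}$ is exactly the pair of nodes flipped by the move $W_{c_j}$ (again with $u_{t+1}=u_1$ for $j=t$). Consequently $\tau_{c_j}(u_j)\,\tau_{c_j}(u_{j+1})=-1$ precisely when $W_{c_j}$ is of different signs, and $\tau_{c_j}(u_j)\,\tau_{c_j}(u_{j+1})=+1$ precisely when $W_{c_j}$ is of the same sign. Thus each factor of the reorganized product is determined by the sign type of the corresponding move of the cycle.

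Finally I would finish with a two-case analysis. If all of $W_{c_1},\dots,W_{c_t}$ are of different signs, the reorganized product is $\prod_{j=1}^t(-1)=(-1)^t$, which matches the left-hand side, so the criterion holds and $C$ is a dependent cycle (note this case does not even use the parity of $t$). If all of $W_{c_1},\dots,W_{c_t}$ are of the same sign, the product equals $1$; since $t$ is even we also have $(-1)^t=1$, so the criterion holds again and $C$ is dependent. I do not expect any real obstacle here: the only step requiring care is the bookkeeping in the product reorganization, and the hypothesis that $t$ is even is used only to close the same-sign case.
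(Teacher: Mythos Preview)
Your proof is correct and follows essentially the same approach as the paper: both invoke the Dependence Criterion (Remark~\ref{remark:criterion}), regroup the product as $\prod_{j=1}^{t}\tau_{c_j}(u_j)\tau_{c_j}(u_{j+1})$, and then use the sign hypothesis on each move to evaluate each factor as $\pm 1$. Your write-up is in fact a bit more explicit than the paper's about the regrouping step, and your side remark that the different-signs case does not require $t$ to be even is also correct.
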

\begin{proof}
Recall the Dependence Criterion (Remark~\ref{remark:criterion})
\[\text{
$C$ is a dependent cycle of $W$ } \Leftrightarrow
(-1)^{t}=\tau_{c_t}(u_1)\tau_{c_1}(u_1)\cdot {\prod_{i=2}^{t}\tau_{c_{i-1}}(u_i)}{\tau_{c_{i}}(u_i)}
\]


If all of $W_{c_1},\ldots,W_{c_t}$ are of different signs, then $\tau_{c_j}(u_j)\tau_{c_j}(u_{j+1})=\tau_{c_t}(u_1)\tau_{c_t}(u_t)=-1$, and the above expression equals to $(-1)^{t}=(-1)^{t}$. If all of $W_{c_1},\ldots,W_{c_t}$ are of the same sign, then $\tau_{c_j}(u_j)\tau_{c_j}(u_{j+1})=\tau_{c_t}(u_1)\tau_{c_t}(u_t)=1$, the above expression is also $1=(-1)^t$, which holds since $t$ is even.
\end{proof}

We assume in the rest of the proof that at least half of edges in $E'$ are of the same sign; the case when at least half of $E'$ are of different signs can be handled similarly. 
Let $E''$ be the subset of $E'$ that consists of edges of the same sign, with $|E''|\ge |E'|/2$. In the following discussion, cycles in $E''$ always refer to cycles that do not use the same edge twice (parallel edges are counted as different edges, since they correspond to different moves in the window $W$).

The aforementioned discussion leads to the following corollary which reduces the existence of dependent cycle of $W$ to a simple cycle in auxiliary graph $H$:
\begin{corollary}\label{cor:simple-to-dependent}
Since every cycle in a bipartite graph has even length,
  every cycle in $E''$ corresponds to a dependent cycle of $W$.
For convenience, given any cycle $C$ of $E''$ we will write $\imp_{\tau_0,W}(C)$ to denote the vector of its corresponding dependent cycle of $W$.
\end{corollary}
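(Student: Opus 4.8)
The plan is to derive Corollary~\ref{cor:simple-to-dependent} directly from the lemma just established, whose hypotheses are that the cycle has even length and that all of its moves share the same sign; both of these hold automatically by the way $E''$ was constructed, so the argument is essentially bookkeeping rather than a substantive new step.

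First I would record that $E''\subseteq E'$, and by Lemma~\ref{hehelemma51}(1) every edge of $E'$ has one endpoint in $V_1$ and the other in $V_2$; hence the subgraph of $H$ with edge set $E''$ is bipartite with sides $V_1$ and $V_2$. Since a closed walk in a bipartite graph alternates between the two sides, every cycle in it has even length. Thus any cycle $C=(c_1,\ldots,c_t)$ of $W$ all of whose edges lie in $E''$ has $t$ even, which is the parity hypothesis. Here I would also flag the one mild subtlety: parallel edges are treated as distinct, since they are distinct moves of $W$, so a cycle of length $2$ using two parallel $V_1$-to-$V_2$ edges is a legitimate cycle, and it too has $t=2$, so nothing special is needed for it.

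Next, the same-sign hypothesis is tautological: by definition $E''$ consists precisely of the edges of the same sign, so for each $j$ the move $W_{c_j}=\{u_j,u_{j+1}\}$ satisfies $\tau_{c_j}(u_j)=\tau_{c_j}(u_{j+1})$. Applying the same-sign case of the preceding lemma to $C$ then gives that $C$ is a dependent cycle of $W$. (In the alternative branch of the overall argument, where at least half of $E'$ are of different signs, one uses the different-signs case of the lemma instead, verbatim.)

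Finally, for the notational claim I would invoke Remark~\ref{remark:criterion} together with the discussion following Definition~\ref{def:dependent-cycle}: a dependent cycle carries a cancellation vector $b\in\{\pm1\}^t$ that is unique once normalized by $b_1=1$, and by Lemma~\ref{lem:independence-init-dependent-cycles} this $b$ does not depend on $\tau_0$; hence $\imp_{\tau_0,W}(C):=\sum_{j\in[t]}b_j\cdot\imp_{\tau_0,W}(c_j)$ from \eqref{eq:cycle-improvement} is well-defined, and writing $\imp_{\tau_0,W}(C)$ for a cycle $C$ of $E''$ is unambiguous, the only remaining choice being the orientation of the cycle, which flips $b$ globally and is pinned down by the normalization. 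I do not anticipate a genuine obstacle here; the only care needed is the parallel-edge convention and the matching of a cycle of $E''$, viewed as a subgraph, with a cycle of $W$, viewed as an ordered tuple, both of which are handled above.
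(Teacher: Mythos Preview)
Your proposal is correct and follows exactly the reasoning the paper intends: the corollary is stated without a separate proof in the paper because it is immediate from the preceding lemma together with the construction of $E''$ (bipartite, hence even-length cycles; all edges of the same sign by definition). Your write-up is in fact more explicit than the paper's, which simply folds the justification into the statement of the corollary itself.
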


We first deal with the case when   $E''$ contains many parallel edges:

\begin{lemma}\label{lemma54}
Let $D$ be the subset of nodes in $V_1$ that have parallel edges in $E''$.
Then $\rank_{\cycles}(W)\ge |D|/2.$
\end{lemma}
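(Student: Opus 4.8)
The plan is to produce, for every node $u \in D$, an explicit length-two dependent cycle of $W$ and then show that the improvement vectors of these $|D|$ cycles already span a space of dimension at least $|D|/2$, which gives $\rank_{\cycles}(W)\ge |D|/2$.

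First I would fix $u\in D$: by definition there is a node $v_u\in V_2$ and two distinct moves $W_i=W_j=\{u,v_u\}$ (with $i<j$) that are parallel edges of $E''$, hence both of the same sign. Then $C_u:=(i,j)$ is a length-two cycle of $W$ lying inside $E''$, so by Corollary~\ref{cor:simple-to-dependent} it is a dependent cycle. Plugging the $2$-move improvement formula \eqref{eq:2-move-improvement} into the definition \eqref{eq:cycle-improvement} of $\imp_{\tau_0,W}(C_u)$, and using that flipping $\{u,v_u\}$ preserves the product of their signs together with the same-sign identity $\tau_{i-1}(u)=\tau_{i-1}(v_u)$, one gets after a short cancellation
\[
\imp_{\tau_0,W}(C_u)\;=\;\pm\sum_{w\in W_u}\epsilon_w\bigl(\mathbf{e}_{(w,u)}+\mathbf{e}_{(w,v_u)}\bigr),
\]
where $\mathbf{e}_e$ is the unit vector of edge $e$, each $\epsilon_w\in\{\pm2\}$, and $W_u$ is the set of nodes $w\notin\{u,v_u\}$ flipped an odd number of times strictly between $W_i$ and $W_j$. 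Since $W$ is a \valid move sequence, the validity property applied to the index pair $(i,j)$ (with $W_i=\{u,v_u\}$) shows $W_u\neq\emptyset$, so $\imp_{\tau_0,W}(C_u)\neq 0$ and its support consists only of edges incident to $u$ or to $v_u$.

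Next I would exploit that, for distinct $u\neq u'$ in $D\subseteq V_1$, the relevant edge sets are almost disjoint: the only edge incident to both $u$ and $u'$ is $(u,u')$, and the only edge incident to both $u$ and $v_{u'}$ is $(u,v_{u'})$; hence $\imp_{\tau_0,W}(C_u)$ and $\imp_{\tau_0,W}(C_{u'})$ can overlap only on the handful of edges joining $\{u,v_u\}$ to $\{u',v_{u'}\}$. For each $u$ I would try to choose a \emph{witness} edge $(w_u,u)$ with $w_u\in W_u$: the vector $\imp_{\tau_0,W}(C_u)$ is nonzero there, while $(w_u,u)$ lies in $\mathrm{supp}\,\imp_{\tau_0,W}(C_{u'})$ only if $w_u\in\{u',v_{u'}\}$. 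Whenever $W_u$ contains a node outside $V_1\cup V_2$ (for instance an active node appearing fewer than $\log^3 n$ times) that node gives a completely conflict-free witness; for the remaining nodes of $D$ the residual conflicts couple $u$ only to those $u'$ with $u'=w_u$ or $v_{u'}=w_u$, and I would resolve them by a matching/charging argument that discards at most one node from each conflicting pair, leaving a subcollection of size $\ge|D|/2$ whose witness edges form a nonsingular diagonal submatrix; this gives the rank bound.

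The hard part will be precisely this overlap control in the case that the flip-sets $W_u$ are entirely contained in the structured sets $V_1\cup V_2$, and especially when many nodes of $D$ are joined by parallel edges to a single common $v\in V_2$, so that all of their cycle vectors share the edges incident to $v$. Handling this forces one to also use the $v$-incident components: projecting $\{\imp_{\tau_0,W}(C_u):v_u=v\}$ off the edges at $v$ recovers the near-disjoint "$u$-parts", after which one must still show that whatever collapses under this projection contributes rank through the edges at $v$. A careful account of which pairs $C_u,C_{u'}$ can genuinely be linearly dependent — no one of them is a scalar multiple of another, but longer cancellations must be ruled out — is exactly what yields the factor $1/2$ rather than a full $|D|$.
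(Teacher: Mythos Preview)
Your setup is correct and matches the paper: for each $u\in D$ you get a length-two dependent cycle $C_u=(i,j)$ with $W_i=W_j=\{u,v_u\}$, the validity of $W$ guarantees a nonempty set $W_u$ of nodes $\notin\{u,v_u\}$ flipped an odd number of times in between, and the improvement vector is supported on edges incident to $u$ or $v_u$ with nonzero coefficient at every $(w,u)$, $(w,v_u)$ with $w\in W_u$. So far so good.

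The gap is in the second half. Your ``matching/charging argument that discards at most one node from each conflicting pair'' is not actually a proof, and the structure of the conflicts is worse than you describe. First, your claim that the supports of $\imp(C_u)$ and $\imp(C_{u'})$ overlap only on ``the handful of edges joining $\{u,v_u\}$ to $\{u',v_{u'}\}$'' is false precisely in the case you flag as hard: if $v_u=v_{u'}=v$, then \emph{every} edge $(w,v)$ with $w\in W_u\cap W_{u'}$ lies in both supports, and many nodes of $D$ can share a single $v$. Second, the conflict relation you define (``$u$ conflicts with those $u'$ with $u'=w_u$ or $v_{u'}=w_u$'') is a directed graph, not a matching, and there is no reason it should have a large independent set; the projection idea you sketch at the end does not resolve this. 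You yourself write that ``a careful account \ldots\ is exactly what yields the factor $1/2$,'' which is an acknowledgment that the account is not yet given.

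The paper supplies the missing mechanism via a clean two-phase greedy selection, and the key structural observation is a dichotomy you have not used: either the interval between the two $\{u,v_u\}$ moves contains some node of $V_1$ flipped an odd number of times, or it contains none. In Phase~1, as long as some remaining $u\in D$ has an odd $z\in V_1$, take the cycle $C_u$, use $(u,z)$ as witness, and delete \emph{both} $u$ and $z$ from $D$ (this is where the factor $1/2$ comes from). Since $z\in V_1$ and all later $u_i\in V_1$ are still in $D$, neither endpoint of any earlier witness edge can equal $u_i$ or $v_i\in V_2$, so earlier witness rows are zero in later columns. In Phase~2 every remaining cycle has \emph{no} odd node in $V_1$; pick any odd node $z\notin V_1$ as witness. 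Then for any two Phase-2 cycles $(u_i,v_i)$ and witness $\{u_j,z_j\}$ with $j\ne i$: $u_j\ne u_i$ (distinct), $u_j\in V_1\ne v_i$, $z_j\notin V_1$ so $z_j\ne u_i$, and even if $z_j=v_i$ the entry is zero because $u_j\in V_1$ moves an even number of times between the two $\{u_i,v_i\}$ moves. This gives a genuinely triangular matrix. The dichotomy ``odd node in $V_1$ vs.\ not'' is the concrete idea your proposal is missing.
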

\begin{proof}
We prove the lemma even if the sequence contains both 1-moves and 2-moves so that we can use it also in the general case in the next section. We note first that if $S_i=\{u,v\}$, $S_j=\{u,v\}, i<j$ are two moves that involve the same two nodes, then there is at least one node $z \neq u,v$ that appears an odd number of times between the two moves. This follows from the definition of a \valid move sequence. 

We will construct a set $Q$ of at least $|D|/2$ 2-cycles, where each 2-cycle consists of two parallel edges in $E''$. We use the following procedure.\\
1. While there is a 2-cycle $(u,v)$ with $u \in D$, $v \in V_2$, such that some node $z \neq u$ of $V_1$ moves an odd number of times between the two $\{u,v\}$ moves of the 2-cycle, pick any such 2-cycle $(u,v)$ and add it to our set $Q$, pick any such node $z \neq u$ that moves an odd number of times between the two  $\{u,v\}$ moves, and delete $u$ and $z$ from $D$ (if $z$ is in $D$). \\
2. Suppose now that there are no more 2-cycles as in step 1. 
While $D$ is not empty, let $u$ be any remaining node in $D$, take any two incident parallel edges $\{u,v\}$ in $E''$, add the corresponding 2-cycle to $Q$, and delete $u$ from $D$.

Firstly, notice that for every new entry at $Q$ in the procedure, we delete at most 2 nodes from $D$. Hence,
this procedure will generate clearly a set $Q$ of at least $|D|/2$ 2-cycles. 
Let $(u_1,v_1), (u_2,v_2), \ldots, (u_k,v_k)$ be the sequence of 2-cycles selected, where the first $d$ were selected in step 1, and the rest in step 2. The nodes $u_i$ are distinct, while the nodes $v_i$ may not be distinct. 
For each $i=1,\ldots, d$, let $z_i$ be the node in $V_1$ that appears an odd number of times between the two $\{u_i,v_i\}$ moves that was selected by the algorithm. Note that node $z_i \neq u_j$ for all $j \geq i$, since $z_i$ was deleted from $D$ when $u_j$ was selected. 
For each $i = d+1, \ldots, k$, let $z_i$ be any node, other than $u_i, v_i$, that appears an odd number of times between the two $\{u_i,v_i\}$ moves. Then $z_i$ is not in $V_1$ because in step 2 there are no odd nodes in $V_1$.
For each $i =1,\ldots,k$, we view the edge $\{u_i,z_i\}$ of the complete graph as a witness for the 2-cycle $(u_i,v_i)$.

Consider the matrix with columns corresponding to the selected 2-cycles $(u_i,v_i)$, $i=1,\ldots,k$, and rows corresponding to the witness edges $\{u_i,z_i\}$. The entry for the corresponding witness edge $\{u_i,z_i\}$ is nonzero. Indeed, by definition \ref{def:dependent-cycle}, $\imp_{\tau_0,W}(C=(u_i,v_i))_{\{u_i,z_i\}}=-b_1(\tau_{c_1}(u_i)\tau_{c_1}(z_i))-b_2(\tau_{c_2}(u_i)\tau_{c_2}(z_i))$ 
\begin{equation*} \text{ and }
    \begin{cases}
    \tau_{c_1}(z_i)=-\tau_{c_2}(z_i)\\
b_1=1\ \& \ b_2=-\tau_{c_1}(v_i)\tau_{c_2}(v_i)
\\ \tau_{c_m}(v_i)=\tau_{c_m}(u_i), \text{ for } m\in\{1,2\}  \
\end{cases}\text{ which yields } 
\imp_{\tau_0,W}(C=(u_i,v_i))_{\{u_i,z_i\}}=2\cdot\tau_{c_1}(z)\neq 0.
\end{equation*}

Consider the column for a 2-cycle $(u_i,v_i)$ selected in step 1. The entry for any other witness edge $\{u_j,z_j\}$ with $j<i$ is 0 because $u_j, z_j \neq u_i, v_i$. (The entries for witness edges $\{u_j,z_j\}$ with $j>i$ could be nonzero.)

 Consider the column for a 2-cycle $(u_i,v_i)$ selected in step 2. 
The entry for any witness edge $\{u_j,z_j\}$ from step 1 (i.e. with $j \leq d$) is 0 because $u_j, z_j \neq u_i, v_i$. The entry for any witness edge $\{u_j,z_j\}$ from step 2 (i.e. with $j > d$) is also 0 because (1) $u_j \neq u_i, v_i$, (2) $z_j \notin V_1$ hence $z_j \neq u_i$, and (3), even if $z_j = v_i$, all nodes of $V_1$ \textendash hence also $u_j$ \textendash occur an even number of times between the two $\{u_i,v_i\}$ moves, therefore the entry for $\{u_j,v_i\}$ is 0.
\[\footnotesize
\begin{cases}
\calM^{\text{[step 1]}}_{1\to d}=\begin{bmatrix}
\imp_{\tau_{0} ,W}( C_{1})_{\{x_{1} ,y_{1}\}} \neq 0 & 0 & 0\\
* & \ddots  & \vdots \\
* & * & \imp_{\tau_{0} ,W}( C_{d})_{\{x_{d} ,y_{d}\}} \neq 0
\end{bmatrix}\\\\
\footnotesize
\calM^{\text{[step 2]}}_{d+1\to k}=\operatorname{\mathop{diag}_{t\in (d+1)\to k}}(\footnotesize 
\imp_{\tau_{0} ,W}( C_{t})_{\{x_{t} ,y_{t}\}} \neq 0)
\end{cases}
\Rightarrow
\footnotesize 
\calM=\begin{bmatrix}
\calM^{\text{[step 1]}}_{1\to d} & \mbox{\bf 0}\\
*  & \calM^{\text{[step 2]}}_{d+1\to k}
\end{bmatrix}
\]
Thus, the matrix with columns corresponding to the selected 2-cycles $(u_i,v_i)$ and rows corresponding to their witness edges $\{u_i,z_i\}$ is a lower triangular matrix with non-zero diagonal entries.
It follows that the columns are linearly independent.
\end{proof}

As a result, it suffices to deal with the case when $|D|$ is $o(L/\log^{8} n)$.
Let $E^*$ denote the subset of edges obtained from $E''$ after deleting all nodes of $D$ and their incident edges. The remaining bipartite graph has no parallel edges.
Then we have 
$$
|E^*|\ge |E''|-|D|\cdot \log^7n
  =\Omega(L/\log n).
$$
We list all properties of the bipartite graph $H^*=(V_1\cup V_2,E^*)$ we need as follows:
\begin{enumerate}
\item $H^*$ is a bipartite graph with no parallel edges;
\item $|V_1\cup V_2|\le O(L/\log^3 n)$ and $|E^*|\ge \Omega(L/\log n)$; and
\item $\#_W(u)\le \log^7 n $ for every node $u\in V_1$.
\item Every edge $e=\{u,v\}\in E^*$ corresponds to a move $W_i=\{u,v\}$ which is of the same sign.
\end{enumerate}

Recall that $E(W)$ denotes the set of edges in $K_n$ which have both nodes in $V(W)$.
These edges are indices of $\imp_{\tau_0,\calS}(i)$ and $\imp_{\tau_0,\calS}(C)$ for a given dependent cycle $C$ of $W$. Our main lemma is the following:
\begin{lemma}\label{main}
Fix an arbitrary $s\in [0:L/\log^{10}(n)]$. Assume additionally that 
there exists a set of edges $\mathcal{E}_{s}=\{\{x_1,y_1\},\cdots \{x_s,y_s\}\}\subseteq E(W)$ 
such that $x_i\in V_1$ for all $i\in [s]$.
Then there exists a cycle $C$ in $H^*$ and an edge $\{u,v\}\in E(W)$ with $u\in V_1\setminus \{x_1,y_1,\ldots,x_s,y_s\}$ such that 
\begin{equation}\label{eq:main}
\Big(\emph{\imp}_{\tau_0,W}(C)\Big)_{\{u,v\}}\ne 0\quad\text{and}\quad
\Big(\emph{\imp}_{\tau_0,W}(C)\Big)_{\{x_i,y_i\}}=0,\quad\text{for all $i\in [s]$.}
\end{equation}
\end{lemma}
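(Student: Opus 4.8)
Lemma~\ref{main} is the single ``induction step'' of the cycle-finding procedure, so the plan is to build, out of $H^*$ and the forbidden set $\mathcal{E}_s$, a \emph{splitted} bipartite graph $H'$ in which \emph{every} cycle already annihilates all the forbidden coordinates, and then extract from $H'$ a cycle that threads a still-unused node of $V_1$ supplying a fresh nonzero witness. The conceptual engine is one identity. If $C=(c_1,\dots,c_t)$ is a dependent cycle of $W$ with cancellation vector $b$, then the defining equations of $b$ (Definition~\ref{def:dependent-cycle}) give, for every occurrence of a node $u_j$ on $C$, the relation $b_{j-1}\tau_{c_{j-1}}(u_j)+b_j\tau_{c_j}(u_j)=0$ (indices cyclic). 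Expanding $\imp_{\tau_0,W}(C)=\sum_k b_k\,\imp_{\tau_0,W}(c_k)$ via \eqref{eq:cycle-improvement} and using $\tau_{c-1}(\cdot)=-\tau_c(\cdot)$ on the flipped coordinate, one gets: for $u_j$ on $C$ and $w\notin\{u_{j-1},u_{j+1}\}$ not lying on $C$, the coordinate $\big(\imp_{\tau_0,W}(C)\big)_{\{u_j,w\}}$ equals $0$ when $w$ moves an even number of times strictly between moves $c_{j-1}$ and $c_j$, and equals $\pm 2\neq 0$ when $w$ moves an odd number of times there (when $u_j$ recurs on $C$ one sums such terms over all occurrences). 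This single computation delivers both halves of \eqref{eq:main}: to kill a coordinate we arrange that the relevant node does not move between the two cycle-moves; to certify a witness we arrange that it does.

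Construction of $H'$: delete the nodes $x_1,\dots,x_s$ from $H^*$ (this costs no edge incident to a surviving $V_1$-node, since every neighbour of a $V_1$-node lies in $V_2$), and for each $i$ split $y_i$ into one copy per maximal block of time-steps containing no occurrence of $x_i$, routing each edge of $y_i$ to the copy of the block containing its move (an edge whose move equals $\{x_i,y_i\}$ is dropped). Since $x_i\in V_1$, $\#_W(x_i)\le \log^7 n$, so $y_i$ splits into at most $\log^7 n+1$ copies; hence $H'$ gains at most $s(\log^7 n+1)=O(L/\log^3 n)$ new vertices and loses at most $s\log^7 n=O(L/\log^3 n)$ edges. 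As $|E^*|=\Omega(L/\log n)$ and $|V_1\cup V_2|=O(L/\log^3 n)$, $H'$ is bipartite with $|V(H')|=O(L/\log^3 n)$ and $|E(H')|=\Omega(L/\log n)\gg|V(H')|$. Every cycle of $H'$ is even and uses only same-sign moves, so it maps to a dependent cycle $C$ of $W$; and whenever such a $C$ passes through a copy of $y_i$, the two moves at that occurrence lie in a common $x_i$-free block, so $x_i$ does not move between them and the identity forces $\big(\imp_{\tau_0,W}(C)\big)_{\{x_i,y_i\}}=0$ (trivially $0$ if neither $x_i$ nor $y_i$ is touched by $C$). Thus the vanishing half of \eqref{eq:main} holds for \emph{any} cycle of $H'$.

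It remains to find such a cycle that also yields a fresh nonzero witness. Here I would run a tree-growing argument (e.g.\ DFS) inside $H'$: non-tree edges exist since $|E(H')|>|V(H')|$, and moreover the un-deleted, un-split nodes of $V_1$ — which lie in $V_1\setminus\{x_1,y_1,\dots,x_s,y_s\}$ because the $x_i$ are gone and the $y_i$ survive only as copies — still carry $\Omega(L/\log n)$ edges of $H'$, so one can select a simple cycle $C$ of $H'$ through such a node $u$ whose two cycle-edges at $u$ correspond to moves at times $p,q$ with $|p-q|\ge 2$. Writing $p<q$, the \valid property of $\calS$ (Definition~\ref{def:goodsequence}) furnishes a node $v\notin W_p$ occurring an odd number of times in $W_p,\dots,W_q$; once $v$ is arranged to be neither the other endpoint of $W_q$ nor any node of $C$, it moves an odd number of times strictly between $p$ and $q$, whence $\big(\imp_{\tau_0,W}(C)\big)_{\{u,v\}}=\pm 2\neq 0$ and $\{u,v\}\in E(W)$ since both endpoints are active. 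Taking this $\{u,v\}$ proves the lemma.

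The main obstacle is precisely this last extraction: producing a simple cycle of $H'$ that threads a still-unused $V_1$-node whose two cycle-moves are temporally separated, while simultaneously being able to take the validity-witness $v$ interior to the time interval and off the cycle — the degenerate alternatives ($v$ coinciding with a cycle-neighbour of $u$, or $v$ lying elsewhere on $C$) give $\{u,v\}$ additional contributions from $C$'s other edges that need not cancel, destroying the clean $\pm 2$. This is where one must exploit the density of $H'$ and the fine combinatorics of valid sequences rather than a bare ``some cycle exists'' argument. By comparison, the splitting bookkeeping (keeping the vertex count at $O(L/\log^3 n)$) and the vanishing identity are routine once the identity is in place.
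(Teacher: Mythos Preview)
Your framework matches the paper's: build a split graph in which every cycle automatically kills the forbidden coordinates, then extract a cycle carrying a fresh witness. The cancellation identity and the vertex/edge bookkeeping are correct. But the argument has a real gap at precisely the step you label ``the main obstacle'', and you do not close it.

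The paper's resolution is a specific modified tree-growing, not a bare DFS. First it prunes the split graph further to a subgraph $G^*$ of \emph{minimum} degree $\ge 100\log n$ (feasible since the average degree is $\Omega(\log^2 n)$); your appeal to ``density of $H'$'' is insufficient, since a DFS cycle could live entirely on low-degree vertices. Then it grows a binary tree of depth $2\log n$ from a root in $V_1^*$, with the following twist: whenever the tree branches at a $V_1$-node $u$ into two $V_2$-children $v^{(j)},v'^{(j')}$ (chosen so the corresponding moves are non-consecutive), it first \emph{records} a node $w^*=w^*(u,v^{(j)},v'^{(j')})$ which, by a short case analysis on validity (a single invocation only gives $w\notin W_p$; one must also rule out $w$ equalling the \emph{other} neighbour $v'$), lies outside $\{u,v,v'\}$ and moves an odd number of times between the two moves. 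All subsequent children along either branch are then required to differ from this $w^*$ and from every $w^*$ recorded at an ancestor. The minimum-degree guarantee leaves room to honour these $O(\log n)$ prohibitions at every step. When the tree eventually repeats a label, the two paths diverge at some $V_1$-node $u$, and \emph{by construction} $w^*(u,\cdot,\cdot)$ is absent from the resulting cycle, yielding the clean $\pm 2$ on $\{u,w^*\}$. Without pre-emptively planting and avoiding these witnesses during the growth, nothing keeps the validity witness off the cycle. A secondary issue: you delete only the $x_i$'s, but a $y_i\in V_1$ can then sit un-split on your cycle and pick up a nonzero $\{x_i,y_i\}$-contribution; the paper deletes every endpoint of $\mathcal{E}_s$ lying in $V_1$, and splits each $v\in V_2$ according to the moves of \emph{all} its $V_1$-partners in $\mathcal{E}_s$ at once rather than partner by partner.
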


\begin{proof}[Proof of (\ref{hiddenconst2}) Assuming Lemma \ref{main}]

Start with $\mathcal{E}_{s=0}=\emptyset$, For integer $s$ going from $0$ to $\lfloor  L/\log^{10} n\rfloor $, using Lemma \ref{main}, find  cycle $C_{s+1}$ and an edge $\{u,v\}$ satisfying \eqref{eq:main}, 
let $\mathcal{E}_{s+1}=\mathcal{E}_{s}\cup\{x_{s+1},y_{s+1}\}=\{u,v\}$ and repeat the above process.

In the end, we get a set of cycles $C_1,\cdots, C_k$ where {$k= \lfloor  L/\log^{10} n\rfloor $}. And for any $j\in [k]$, we have
\[
\Big(\emph{\imp}_{\tau_0,W}(C_j)\Big)_{\{x_j,y_j\}}\ne 0\quad\text{and}\quad
\Big(\emph{\imp}_{\tau_0,W}(C_j)\Big)_{\{x_i,y_i\}}=0,\quad\text{for all $i\in [j-1]$.}
\]
Let $\calM$ be the $k\times k$ square matrix where $\calM_{ij}=\Big(\emph{\imp}_{\tau_0,W}(C_j)\Big)_{\{x_i,y_i\}}$. \[\footnotesize\calM=
\begin{bmatrix}
\imp_{\tau_{0} ,W}( C_{1})_{\{x_{1} ,y_{1}\}} \neq 0 & 0 & 0 & \cdots  & 0\\
* & \imp_{\tau_{0} ,W}( C_{2})_{\{x_{2} ,y_{2}\}} \neq 0 & 0 & \cdots  & 0\\
\vdots  & * & \imp_{\tau_{0} ,W}( C_{3})_{\{x_{3} ,y_{3}\}} \neq 0 & \cdots  & 0\\
\vdots  & \vdots  & * & \ddots  & 0\\
* & * & * & * & \imp_{\tau_{0} ,W}( C_{k})_{\{x_{k} ,y_{k}\}} \neq 0
\end{bmatrix}\]
As we can see, the matrix is lower triangular with non-zero diagonal entries, so it has full rank $k$. Note that $\calM$ is a submatrix of the matrix formed by taking $\emph{\imp}_{\tau_0,W}(C_j)$ as column vectors, therefore we have $\rank_{\cycles}(W)\ge k\ge  L/\log^{10} n.$
\end{proof}


\subsection{Proof of Lemma~\ref{main}}
Given a cycle $C$ in $H^*$, we say $\{u,v\}\in E(W)$ is a \emph{witness} of $C$ if 
$$
\left(\imp_{\tau_0,W}(C)\right)_{\{u,v\}}\ne 0.
$$
So the goal of Lemma \ref{main} is to find a  cycle $C$ of $H^*$ such that none of $(u_i,v_i)\in \mathcal{E}_s$ are witnesses of $C$ and at the same time, $C$ has a witness edge $\{u,v\}$ with $u$ being a new node in $V_1$ not seen in $\mathcal{E}_s$ before.
The proof consists of two steps. First we introduce a so-called  \emph{split auxiliary graph} $G$ using $H^*$ and $\mathcal{E}_s$, by deleting certain nodes and creating 
extra copies of certain nodes in $H^*$.
We show in Lemma \ref{mainlemma1} that certain simple cycles in $G$ correspond to cycles in $H^*$ that don't have any edge in $\mathcal{E}_s$ as witnesses. 
Next we show in Lemma \ref{lemma58} how to find such a simple cycle in $G$ that has a new witness $(u,v)$ such that $u\in V_1$ and does not appear in $\mathcal{E}_s$.


Let $\wit_1(\mathcal{E}_s)$ be the set of $u\in V_1$ that appear in $\mathcal{E}_s$ and let $\wit_2(\mathcal{E}_s)$ be the set of $v\in V_2$ that appear in $\mathcal{E}_s$. For each $v\in \wit_2(\mathcal{E}_s)$, we write $\wit_1(v)\ne \emptyset$ to denote the set of nodes $u\in \wit_1(\mathcal{E}_s)$ such that $(u,v)\in \mathcal{E}_s$, and let $k_v$ denote the number of moves in $W$ that involve at least one node in $\wit_1(v)$.
We have $k_v\le |\wit_1(v)|\cdot \log^7 n$ since $\#_W(u)\le \log^7 n$ for all $u\in V_1$.
Below, we give an example of such an auxiliary graph:
\begin{figure}
    \centering
    \scalebox{0.75}{\input{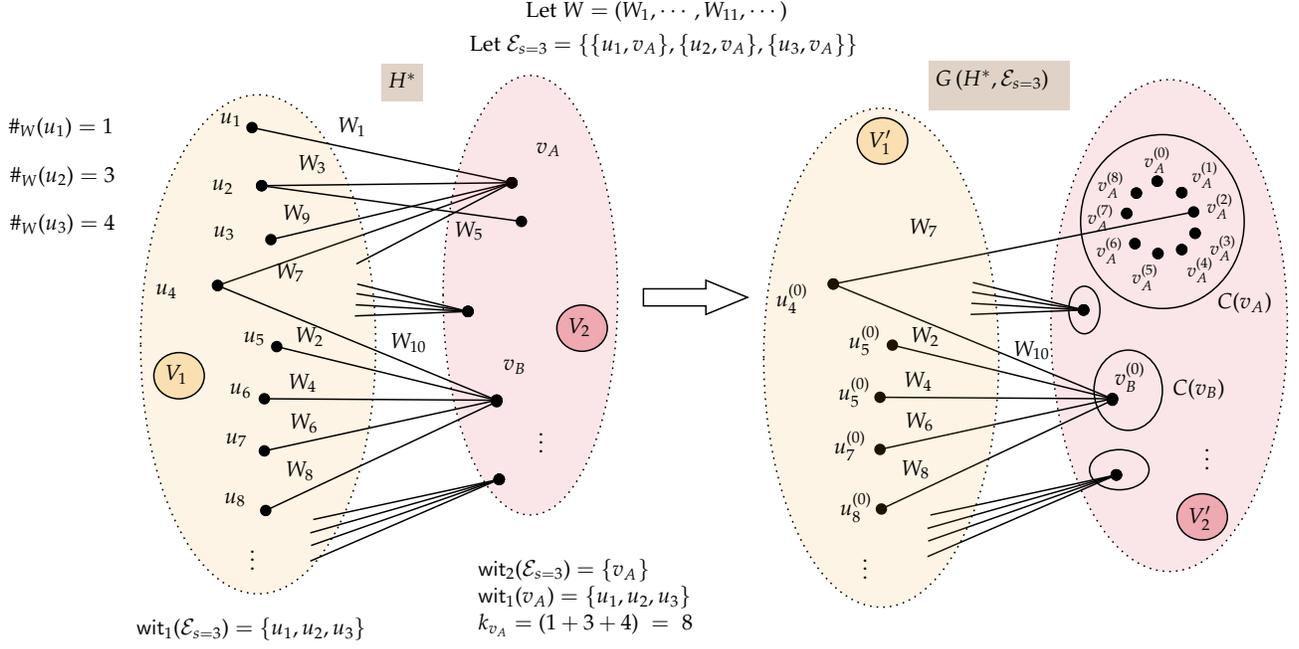}}
    \caption{An exemplifying case of an auxiliary graph $H^*$ and splitting graph $G(H^*,\mathcal{E}_s)$}
    \label{fig:HandG}
\end{figure}

We now define our split auxiliary (bipartite) graph $G$. We start with its set of nodes $V_1'\cup V_2'$:
\begin{enumerate}
\item $V_1'=V_1\setminus \wit_1(\mathcal{E}_s)$; and 
\item 
$
V_2'=
\cup_{v\in V_2}C(v),
$
where $C(v)=\{v^{(0)}\}$ if $v\notin \wit_2(\mathcal{E}_s)$ and $C(v)=\{v^{(0)},v^{(1)}, \ldots,v^{(k_v)}\}$ if  $v\in \wit_2(\mathcal{E}_s)$.
\end{enumerate}
So we deleted nodes $\wit_1(\mathcal{E}_s)$ from $V_1$ and replaced each node $v\in \wit_2(\mathcal{E}_s)$ by $k_v+1$ new nodes. 
Next we define the edge set $E(G)$ of $G$. 
Every move  $W_i=\{u,v\}$ in $W$ that corresponds to an edge $(u,v)$ in $H^*$ with $u\in V_1\setminus \wit_1(\mathcal{E}_s)$ and $v\in V_2$ will add an edge in $G$ as follows:
\begin{flushleft}\begin{enumerate}
\item If $v\notin \wit_2(\mathcal{E}_s)$, then we add $(u,v^{(0)})$ to $G$; and\item Otherwise $(v\in \wit_2(\mathcal{E}_s)$), letting $\mu_i\in [0:k_v]$ be the number of moves before $W_i$ that contain at least one node in $\wit_1(v)$ (note that $W_i$ does not contain $\wit_1(v)$; actually $W_i$ cannot contain $\wit_1(\mathcal{E}_s)$),
we add $(u,v^{(\mu_i)})$ to $G$.
\end{enumerate}\end{flushleft}
Therefore, 
every edge in $G$ corresponds to a move in $W$
  which corresponds to an edge in $H^*$ that does not contain a node in $\wit_1(\mathcal{E}_s)$.
It is clear that each simple cycle of $G$ corresponds to a cycle of $H$, which in turn corresponds to a dependent cycle of $W$ (Since we assume w.l.o.g that all edges of auxiliary graph, and its split one, correspond to moves of the same sign  ( See Corollary~\ref{cor:simple-to-dependent}) ).
So $\imp_{\tau_0,W}(C)$ is well defined for  
 simple cycles $C$ of $G$.
Our motivation for constructing and working on $G$ is because of the following lemma:
\begin{lemma}\label{mainlemma1}
Let $C$ be a simple cycle of $G$.
Then none of the edges in $\mathcal{E}_s$ is a witness of $C$.
\end{lemma}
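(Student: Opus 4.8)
The plan is to show directly that for every edge $\{x_i,y_i\}\in\mathcal{E}_{s}$ the coordinate $\big(\imp_{\tau_0,W}(C)\big)_{\{x_i,y_i\}}$ vanishes. Write $C=(c_1,\dots,c_t)$ for the dependent cycle of $W$ induced by the simple cycle of $G$ (it is dependent, as observed right after the definition of $G$ via Corollary~\ref{cor:simple-to-dependent}), let $u_1,\dots,u_t$ be its vertex sequence in $H^{*}$, and let $b\in\{\pm1\}^t$ be its cancellation vector, so that $\imp_{\tau_0,W}(C)=\sum_j b_j\,\imp_{\tau_0,W}(c_j)$. The first observation is that $x_i\in\wit_1(\mathcal{E}_{s})$, hence $x_i\notin V_1'$ and $x_i$ is \emph{not} a vertex of $G$; since every move of $C$ corresponds to an edge of $H^{*}$ whose two endpoints are vertices of $G$'s preimage (none of which is $x_i$), $x_i$ is flipped by no move of $C$. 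Therefore, by \eqref{eq:2-move-improvement}, $\big(\imp_{\tau_0,W}(c_j)\big)_{\{x_i,y_i\}}\neq0$ only if $y_i$ is one of the two flipped nodes of $W_{c_j}$, and in that case the coordinate equals $\tau_{c_j-1}(x_i)\tau_{c_j-1}(y_i)=-\tau_{c_j}(x_i)\tau_{c_j}(y_i)$ (using that $y_i$ flips at step $c_j$ while $x_i$ does not). If $y_i\notin V_2$, then $y_i\notin\wit_2(\mathcal{E}_{s})$ and, arguing as for $x_i$, no move of $C$ flips $y_i$ either, so the coordinate is trivially $0$; assume henceforth $y_i\in\wit_2(\mathcal{E}_{s})$.

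Next I would set up the pairing that causes cancellation. In $G$ the vertex $y_i$ was replaced by copies $y_i^{(0)},\dots,y_i^{(k_{y_i})}$, and a move $W_a=\{u,y_i\}$ with $u\in V_1'$ corresponds to the $G$-edge incident to $y_i^{(\mu_a)}$, where $\mu_a$ is the number of moves \emph{before} $W_a$ touching $\wit_1(y_i)$. Since $C$ is a \emph{simple} cycle of $G$ and $G$ has no parallel edges, each copy $y_i^{(\mu)}$ lying on $C$ is met by exactly two edges of $C$, i.e.\ by exactly two moves of $C$; moreover the moves of $C$ that flip $y_i$ are precisely those attached to copies of $y_i$ on $C$. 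So these moves fall into pairs, one pair $\{W_a,W_b\}$ with $a<b$ per copy on $C$, and within a pair $\mu_a=\mu_b$. The crucial consequence is that no move in positions $a,\dots,b$ touches $\wit_1(y_i)$: the intermediate positions by the very definition of $\mu$, and $W_a,W_b$ themselves because their $V_1$-endpoint lies in $V_1'=V_1\setminus\wit_1(\mathcal{E}_{s})$ and $y_i\notin\wit_1(\mathcal{E}_{s})$. In particular $x_i\in\wit_1(y_i)$ is not flipped between steps $a$ and $b$, so $\tau_{c_{a}}(x_i)=\tau_{c_{b}}(x_i)$; call this common value $\sigma$.

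Now the cancellation within a pair is a one-line computation. In the simple cycle of $G$ the copy $y_i^{(\mu)}$ sits at some position $m$, its two incident cycle-edges being $W_{c_{m-1}}=\{u_{m-1},u_m\}$ and $W_{c_{m}}=\{u_m,u_{m+1}\}$ with $u_m=y_i$ (indices mod $t$, with the wrap-around case handled by the last identity in Definition~\ref{def:dependent-cycle}); note $u_{m-1},u_{m+1}\in V_1'$ so neither equals $x_i$. The defining identity of $b$ at the vertex $u_m=y_i$ reads $b_{m-1}\tau_{c_{m-1}}(y_i)+b_{m}\tau_{c_{m}}(y_i)=0$, whence, using the coordinate formula from the first paragraph,
\[
b_{m-1}\big(\imp_{\tau_0,W}(c_{m-1})\big)_{\{x_i,y_i\}}+b_{m}\big(\imp_{\tau_0,W}(c_{m})\big)_{\{x_i,y_i\}}=-\sigma\big(b_{m-1}\tau_{c_{m-1}}(y_i)+b_{m}\tau_{c_{m}}(y_i)\big)=0.
\]
Summing this over all copies of $y_i$ on $C$ — every move of $C$ flipping $y_i$ lies in exactly one such pair, and no other move of $C$ contributes to the coordinate $\{x_i,y_i\}$ — gives $\big(\imp_{\tau_0,W}(C)\big)_{\{x_i,y_i\}}=0$. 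As $i\in[s]$ was arbitrary, no edge of $\mathcal{E}_{s}$ is a witness of $C$.

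The main obstacle I anticipate is the bookkeeping in the second and third paragraphs: verifying that the $\mu$-labeling really does partition the occurrences of $y_i$ on the simple cycle $C$ into pairs attached to the same copy, and lining up the two moves of each pair with the correct consecutive indices $b_{m-1},b_{m}$ of the cancellation vector and the correct shared cycle-vertex $u_m$, including the wrap-around edge $c_t\!\to\!c_1$. Everything else follows mechanically from \eqref{eq:2-move-improvement}, Definition~\ref{def:dependent-cycle}, and the fact that $G$ is a simple bipartite graph.
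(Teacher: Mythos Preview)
Your proof is correct and follows essentially the same approach as the paper's: delete $x_i$ from $G$ so it never occurs on the cycle, pair up the two cycle-edges incident to each copy $y_i^{(\mu)}$ that appears on $C$, observe that the equal $\mu$-label forces $x_i$ not to flip between the paired moves, and then apply the cancellation-vector identity at the shared vertex to kill each pair's contribution to the $\{x_i,y_i\}$ coordinate. Two small remarks: first, in the second paragraph you write $\tau_{c_a}(x_i)=\tau_{c_b}(x_i)$ where $a,b$ are already positions in $W$, so this should read $\tau_a(x_i)=\tau_b(x_i)$; second, your ``arguing as for $x_i$'' in the case $y_i\notin V_2$ works because $\wit_1(\mathcal{E}_s)$ is defined as \emph{all} $V_1$-nodes appearing in $\mathcal{E}_s$ (so if $y_i\in V_1$ then $y_i\in\wit_1(\mathcal{E}_s)$ too), which you might state explicitly.
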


\begin{proof}
Let $e=\{u,v\}\in \mathcal{E}_s$ and $u\in \wit_1(\mathcal{E}_s)$. By the definition of $G$, $u$ has no copy in $G$. So $u$ does not appear on the cycle. 
Let  $C$ be a cycle in $G$ with nodes $(w_1^{(i_1)}, w_2^{(i_2)}, \cdots , w_t^{(i_t)}, w_1^{(i_1)})$ (we use $w_j^{(0)}$ to denote $w_j$ for $w_j\in V_1'$ ). If the cycle does not contain any vertex in $C(v)$, then $\imp_{\tau_0,W}(C)_{e}=0$. 
Now suppose the cycle contains nodes in $C(v)$, specifically, $w_{j_1}=\cdots =w_{j_m}=v$.
Let the corresponding  cycle $C$ on $W$ be $c_1,\cdots, c_t$ where $W_{c_i} = \{w_i,w_{i+1}\}$ if $i<t$ and $W_{c_t}=\{w_t,w_1\}$, and let $b$ be the cancellation vector of $C$. 
We can write down the value of the improvement vector on edge $e$.
\begin{align}\label{eq:imp-splitted-edge}
\begin{split}
    \imp_{\tau_0,W}(C)_{e} &= \sum_{k=1}^m \left(b_{j_k-1}\imp_{\tau_0,W}(c_{j_k-1})_e+b_{j_k} \imp_{\tau_0,W}(c_{j_k})_e \right)\\
    &= -\sum_{k=1}^m\left( b_{j_k-1} \tau_{c_{j_k-1}}(v)\tau_{c_{j_k-1}-1}(u) +  b_{j_k} \tau_{c_{j_k}}(v)\tau_{c_{j_k}-1}(u)\right)
\end{split}
\end{align}
By the construction of $G$, $u$ doesn't appear in any move between $c_{j_k}$ and $c_{j_k-1}$ (otherwise, in $G$, the edge corresponding to $W_{c_{j_k}}$ and edge corresponding to $W_{c_{j_k}-1}$ wouldn't be connected to  $w_{j_k}^{(i_{j_k})}$ with the same $i_{j_k}$.(For an illustative explanation, see Figure~\ref{fig:non-cycle} \& \ref{fig:cycle}) ) So $\tau_{c_{j_k-1}-1}(u)=\tau_{c_{j_k}-1}(u)$. By definition of cancellation vector, 
\[b_{j_k-1} \tau_{c_{j_k-1}}(v) +  b_{j_k} \tau_{c_{j_k}}(v)=0.\]
So each term in \eqref{eq:imp-splitted-edge} is 0, and $\imp_{\tau_0,W}(C)_{e}=0$.

\begin{minipage}{0.45\linewidth}
\centering
\scalebox{0.75}{\input{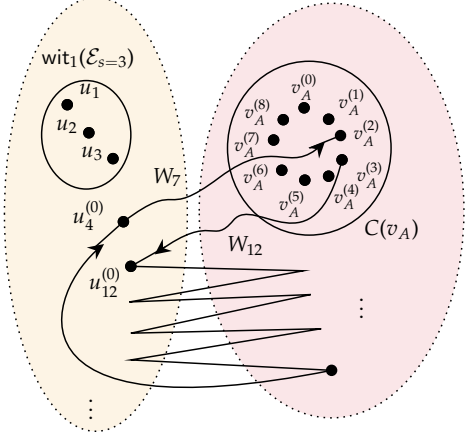}}
\captionof{figure}{Following the example of Figure~\ref{fig:HandG}, $u_4^{(0)}$ and $u_{11}^{(0)}$ can not close cycle, since $u_{3}\in W_9$ and $u_3\in\mathsf{wit}_1(v_A)$ and appears between $W_7$ and $W_{12}$. Thus, $u_4^{0}$ and $u_{11}^{0}$ fail to make a cycle due to the intervention of a node in $u_3\in\mathsf{wit}_1(v_A)$.}
                  \label{fig:non-cycle}
\end{minipage}\quad
\begin{minipage}{0.45\linewidth}
\centering
\scalebox{0.75}{\input{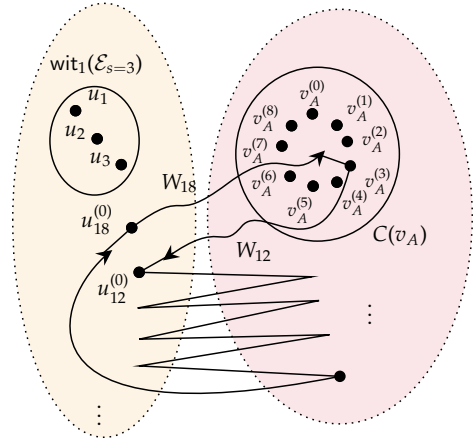}}
\captionof{figure}{Again by the example of Figure~\ref{fig:HandG}, since $u_{12}^{(0)}$ and $u_{18}^{(0)}$ achieved to form a cycle with moves $W_{12}=\{u_{12},v_A\}$ and $W_{18}=\{u_{18},v_A\}$, we know from our construction that none of $\mathsf{wit}_1(v_A)$ appear in the sub-window $W_{13}\cdots W_{17}$.}
                  \label{fig:cycle}
\end{minipage}

\end{proof}

To finish the proof, it suffices now to find a simple cycle $C$ of $G$ that has a witness $(u,v)\in E(W)$ with one of its vertices $u\in V_1'$.
We start by checking that all conditions for $H^*$ still  hold for $G$. 
It is clear that $G$ is a bipartite graph with no parallel edges.
By the definition of $\wit_1(v)$ for each $v\in \wit_2(\mathcal{E}_s)$, we have  $\sum_{v\in \wit_2(\mathcal{E}_s)}|\wit_1(v)|\le 2s$, also $|\wit_1(\mathcal{E}_s)|\le  2s$.
The number of nodes $|V_1'\cup V_2'|$ in $G$ is at most 
\begin{align*}
O(L/\log^3 n)+\sum_{v\in \wit_2(\mathcal{E}_s)}k_v
\le O(L/\log^3 n)+ \sum_{v\in \wit_2(\mathcal{E}_s)}|\wit_1(v)|\cdot \log^7 n
=O(L/\log^3 n).
\end{align*}
where the last equality used that $s\le L/\log^{10}n$.
The number of edges in $G$ is at least
\[\Omega(L/\log n)- |\wit_1(\mathcal{E}_s)|\cdot \log^7n=\Omega(L/\log n).\]


Let's work on another preprocessing of $G$ to simplify the proof.
Note that the average degree of nodes in $G$ is at least $\Omega\left({\footnotesize(L/\log n)/(L/\log^3 n)}\right)=\Omega(\log^2 n)$.
The following simple lemma shows that one can clean
  up $G$ to get a bipartite graph $G^*$ such that every node has degree at least $100\log n$ and the number of edges in $G^*$ remains to be $\Omega(L/\log n)$:
  
\begin{lemma}\label{lem:minimum-degree}
There is a bipartite graph $G^*=(V_1^*\cup V_2^*,E(G^*))$ with $V_1^* \subseteq V_{1}'$, $V_{2}^*\subseteq V_{2}'$
and $E(G^*)\subseteq E(G)$ such that every node in $G^*$ has degree at least $100\log n$
and $|E(G^*)|=\Omega(L/\log n)$.
\end{lemma}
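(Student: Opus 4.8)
The plan is to use the standard iterative ``peeling'' argument: repeatedly discard any vertex whose current degree is below $100\log n$ together with its incident edges, and stop when no such vertex remains. Concretely, I would maintain a bipartite subgraph, initialized to $G$, and while it contains a vertex of degree $<100\log n$ pick one such vertex and delete it and all its incident edges. The resulting graph $G^*=(V_1^*\cup V_2^*,E(G^*))$ then satisfies $V_1^*\subseteq V_1'$, $V_2^*\subseteq V_2'$ and $E(G^*)\subseteq E(G)$ by construction, it is still bipartite, and by the stopping condition every surviving vertex has degree at least $100\log n$, so the only thing left to check is that enough edges survive.

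For the edge count, each deletion step removes exactly one vertex and at most $100\log n$ edges, and the whole process deletes at most $|V_1'\cup V_2'|$ vertices. Hence the total number of edges removed is at most
\[
|V_1'\cup V_2'|\cdot 100\log n = O\!\left(\frac{L}{\log^3 n}\right)\cdot 100\log n = O\!\left(\frac{L}{\log^2 n}\right),
\]
which, for $n$ large enough, is smaller than half of $|E(G)|=\Omega(L/\log n)$ (property $2$ of $H^*$, which we have just argued is inherited by $G$). Therefore $|E(G^*)|\ge |E(G)| - O(L/\log^2 n) = \Omega(L/\log n)$; in particular $E(G^*)\neq\emptyset$, so the peeling does not terminate with the empty graph, and $G^*$ has both the claimed minimum-degree property and $\Omega(L/\log n)$ edges.

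Since the whole proof is a greedy peeling, there is essentially no obstacle: the one place that uses the hypotheses is the bookkeeping above, which relies precisely on the gap between the $\Omega(L/\log n)$ lower bound on $|E(G)|$ and the $O(L/\log^3 n)$ bound on the number of vertices, so that the $\log n$ factor lost per deleted vertex is comfortably absorbed by the extra $\log^2 n$ slack. I would state the ``peeling removes at most one edge per $\Theta(1/\log n)$ fraction of the budget'' estimate explicitly and otherwise keep the write-up to a few lines.
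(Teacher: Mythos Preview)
Your proposal is correct and matches the paper's proof essentially line for line: the paper also performs the greedy peeling of low-degree vertices and bounds the total number of deleted edges by $(|V_1'|+|V_2'|)\cdot 100\log n \le O(L/\log^2 n)$, leaving $\Omega(L/\log n)$ edges.
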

\begin{proof}
Keep deleting nodes in $G$ with degree less than $100\log n$ one by one (and its adjacent edges) until no such  nodes exist.
The number of edges we delete during the whole process is no more than
$$
(|V_{1}'|+|V_{2}'|)\cdot 100\log n\le O(L/\log^2 n).
$$
So the remaining graph (which trivially has minimum degree at least $100 \log n$)
  has at least $\Omega(L/\log n)$ many edges.
\end{proof}

Let us list the properties of $G^*=(V_1^*\cup V_2^*,E(G^*))$ we will use in the rest of the proof:
\begin{flushleft}
\begin{enumerate}
    \item $V_1^*\subseteq V_1'=V_1\setminus \wit_1(\mathcal{E}_s)$ and $V_2^*\subseteq V_2'$ so each node in $V_2^*$ is in $C(v)$ for some $v\in V_2$.
    
    \item The degree of any node is at least $100\log n$; and
    \item For any $u\in V_1^*$ and $v\in V_2$, the number of neighbors of $u$ in $V_2^*\cap C(v)$ is at most one.
    \item $E(G^*)$ has no parallel edges, $|E(G^*)|\ge \Omega(L/\log n)$ and w.l.o.g.  each edge in $E(G^*)$ correspond to a move of same sign. 
\end{enumerate}
\end{flushleft}
\def\nil{\mathsf{nil}}

We prove the following lemma to finish the proof:

\begin{lemma}\label{lemma58}
Let $u\in V_1^*$ and $v\ne v'\in V_2^* $
such that $(u,v^{(j)}),(u,v'^{(j')})\in E(G^*)$ for some $j$ and $j'$, and the corresponding moves
$W_i=\{u,v\}$ and $W_{i'}=\{u,v'\}$ in $W$ are not consecutive\footnote{
It is worth mentioning, that $V_2^*$ always includes at least two vertices which are copies from different initial nodes $v,v'$. Indeed, if $G^*$ was actually a star graph around $V_2^*=\{v^*\}$, then $\Omega(L/\log n)=E(G^*)=\Theta(V(G^*))=O(L/ \log^3 n)$, which leads to a contradiction. Additionally, notice that $v^{(j)}$ and $v'^{(j')}$ correspond to different nodes in the initial graph, otherwise the initial auxiliary graph $H^*$ would have parallel edges. 
}. Then, the graph $G^*$ has a simple cycle $C$ such that $C$ has a witness $e=\{u,w\}\in E(W)$ with $w\in V_1^*$.
\end{lemma}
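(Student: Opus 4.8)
I would split the proof into a short algebraic step that pins down exactly which edges of $K_n$ can witness a cycle through $u$, and a graph-theoretic step that produces a short such cycle together with a witness of the required form.

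\textbf{Algebraic step.} Let $C$ be any simple cycle of $G^*$ through $u$. Since every edge of $G^*$ is a move of the same sign (property 4), $C$ is a dependent cycle of $W$ (Corollary~\ref{cor:simple-to-dependent}) with cancellation vector $b$. Let $W_p=\{u,a\}$ and $W_q=\{u,a'\}$ with $p<q$ be the two moves corresponding to the edges of $C$ incident to $u$, and let $\beta_p,\beta_q\in\{\pm1\}$ be the entries of $b$ at them. For $w\in V_1^*$ with $w\neq u$ and $w$ not on $C$, the only moves of $C$ meeting the $K_n$-edge $\{u,w\}$ are $W_p$ and $W_q$ (in the bipartite graph $G^*$ no move of $C$ contains both endpoints of a $V_1^*\times V_1^*$ edge), and since $w\notin W_p\cup W_q$, equations \eqref{eq:2-move-improvement}--\eqref{eq:cycle-improvement} give
\[
\big(\imp_{\tau_0,W}(C)\big)_{\{u,w\}}=\beta_p\,\tau_{p-1}(u)\tau_{p-1}(w)+\beta_q\,\tau_{q-1}(u)\tau_{q-1}(w),
\]
which is nonzero iff $\tau_{p-1}(w)\tau_{q-1}(w)=\beta_p\beta_q\,\tau_{p-1}(u)\tau_{q-1}(u)$. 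The cancellation equation at the vertex $u$ yields $\beta_p\beta_q=-\tau_{p}(u)\tau_{q}(u)$, and since $u$ is flipped by both $W_p$ and $W_q$ we have $\tau_{p}(u)\tau_{p-1}(u)=\tau_{q}(u)\tau_{q-1}(u)=-1$, so $\beta_p\beta_q\,\tau_{p-1}(u)\tau_{q-1}(u)=-1$. Hence $\{u,w\}\in E(W)$ is a witness of $C$ \emph{if and only if $w$ is flipped an odd number of times by $W_p,W_{p+1},\dots,W_{q-1}$}.

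\textbf{Graph-theoretic step.} I would take $C$ to be a short cycle through $u$ built from the two given edges: grow breadth-first search trees from $v^{(j)}$ and from $v'^{(j')}$ inside $G^*\setminus\{u\}$. Since $G^*$ has minimum degree at least $100\log n$ (property 2) while $|V(G^*)|=O(L/\log^3 n)=\poly{n}$, a cycle-free BFS forest would grow by a factor $\ge 99\log n$ per level and exhaust all vertices within $O(\log n/\log\log n)$ levels, so within that depth either the two trees touch---giving a short simple path $P$ from $v^{(j)}$ to $v'^{(j')}$ avoiding $u$, hence the simple cycle $C=(u,v^{(j)},P,v'^{(j')})$ whose $u$-moves are $W_i,W_{i'}$---or some component of $G^*\setminus\{u\}$ already contains a short internal cycle, which we use instead, rerunning the algebraic step with that cycle's $V_1^*$-vertex in the role of $u$. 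In the main case, since $W$ is valid and $W_i,W_{i'}$ are non-consecutive, some node $w\notin W_i$ is flipped an odd number of times by $W_i,\dots,W_{i'-1}$; by the algebraic step, if such a $w$ lies in $V_1^*$ and off $C$ then $\{u,w\}$ is the desired witness. To force $w\in V_1^*$ and $w\notin C$ one does not use the originally given pair $W_i,W_{i'}$ verbatim: among the $\ge 100\log n$ moves of $u$ one selects a pair whose intervening window-interval straddles the first occurrence therein of a $V_1^*$-node not otherwise appearing in that interval (such a node exists because $\Omega(L/\log n)$ of the window's moves touch $V_1^*$ while each $V_1^*$-node has $\#_W\le\log^7 n$), and keeping the cycle short guarantees this $w$ can be taken off $C$.

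\textbf{Main obstacle.} The delicate point is this last sub-step. Validity by itself only produces \emph{some} odd-occurrence node, with no control over which vertex class it belongs to, so one must choose simultaneously the pair of $u$-edges carried by the cycle and the witness vertex, trading off (i) a short cycle so that most of $V_1^*$ lies off it, (ii) an intervening interval rich enough in $V_1^*$-moves to contain a $V_1^*$-vertex of odd parity, and (iii) the two chosen $u$-moves remaining non-consecutive so the interval is nonempty. Verifying that these can be satisfied together---using the $\Omega(L/\log n)$ lower bound on $|E(G^*)|$ and the $\le\log^7 n$ degree cap on $V_1^*$---is the crux; the connectivity wrinkle (when $v^{(j)}$ and $v'^{(j')}$ fall in different components of $G^*\setminus\{u\}$) is a comparatively routine fallback handled by the min-degree-forced internal cycle.
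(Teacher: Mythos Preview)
Your algebraic step is correct and matches the paper's Claims~\ref{claimhehe} and~\ref{hehelemma}: for any simple cycle $C$ of $G^*$ and a $V_1^*$-vertex $u$ on $C$, if $w\notin C$ then $\{u,w\}$ is a witness of $C$ iff $w$ is flipped an odd number of times between the two $u$-moves of $C$.

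The gap is in the graph-theoretic step, and it begins with a misreading of the conclusion. What Lemma~\ref{main} actually needs is a witness edge with one endpoint in $V_1^*$; in the paper that endpoint is a $V_1$-side vertex \emph{on the cycle} (automatically in $V_1^*$), while the odd-parity node $w^*$ may be any vertex of $V(W)$. So your ``Main obstacle''---forcing the odd-parity node into $V_1^*$---is a phantom, and the edge-pair selection heuristic you sketch for it is neither necessary nor substantiated. (The specific $u$ in the hypothesis plays no role in the paper's argument either; it only certifies non-degeneracy, cf.\ the footnote.) The real obstacle, which you only gesture at with ``keeping the cycle short,'' is guaranteeing that $w^*$ is \emph{off} the cycle. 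Shortness alone cannot do this: $w^*(u,\cdot,\cdot)$ is one fixed vertex and nothing prevents it from landing among the $O(\log n)$ vertices of a short cycle; your fallback of rerunning with another $V_1^*$-vertex faces the same difficulty. The paper's idea is to \emph{build the avoidance into the search}: grow a binary tree of depth $2\log n$ from an arbitrary root in $V_1^*$, and at every expansion pick children that avoid not only all ancestor labels but also the node $w^*(u_i,v_i,v_i')$ associated with every $V_1^*$-ancestor $u_i$ on the root-to-leaf path. Only $O(\log n)$ labels are forbidden per step while every degree is $\ge 100\log n$, so this is always possible. When a label repeats, the two paths diverge at some $u\in V_1^*$, and by construction $w^*(u,v^{(j)},v'^{(j')})$ was excluded from every vertex added below $u$---hence it is off the resulting cycle, and $(u,w^*)$ is the desired witness. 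This active avoidance during the tree growth is the missing ingredient in your plan.
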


\begin{proof}
We begin with a simple sufficient condition for a simple cycle of $G^*$  to satisfy the above condition.

First, let $u\in V_1^*$ and $v\ne v'\in V_2^* $ such that $(u,v^{(j)}),(u,v'^{(j')})\in E(G^*)$ for some $j$ and $j'$, and the corresponding moves
$W_i=\{u,v\}$ and $W_{i'}=\{u,v'\}$ in $W$ are not consecutive. Assume that $i<i'$ without loss of generality; then $i+1< i'$.
The following claim shows that there must be a node $w\notin \{u,v,v'\}$ that moves an odd number times in $W_{i+1},\ldots,W_{i'-1}$:
  
 \begin{claim}\label{claimhehe}
There is a node $w\notin \{u,v,v'\}$ that appears in an odd number of moves in $W_{i+1},\ldots,W_{i'-1}$.
 \end{claim}
 \begin{proof}
This follows from the fact that $W$ is a \valid move sequence.
We distinguish two cases.

If $v'$ appears an even number of times in $W_{i+1},\ldots,W_{i'-1}$, then
use the condition of validity on the
subsequence $W_{i},\ldots,W_{i'-1}$: there is at least one node $w \notin W_i= \{u,v\}$ that appears an odd number of times in 
$W_{i},\ldots,W_{i'-1}$. Since $w \notin W_i$, node $w$ appears an odd number of times in $W_{i+1},\ldots,W_{i'-1}$. Hence $w \neq v'$ and thus
$w\notin \{u,v,v'\}$ and the claim follows.

If $v'$ appears an odd number of times in $W_{i+1},\ldots,W_{i'-1}$, then
$v'$ appears an even number of times in 
$W_{i},\ldots,W_{i'}$.
Use the condition of validity on the
subsequence $W_{i},\ldots,W_{i'}$: there is at least one node $w \notin W_i= \{u,v\}$ that appears an odd number of times in 
$W_{i},\ldots,W_{i'}$. Then $w \neq v'$,
and since also $w \notin W_i= \{u,v\}$,
it follows that $w$ appears an odd number of times in $W_{i+1},\ldots,W_{i'-1}$ and the claim follows again.
%
 \end{proof}

We remark that Claim \ref{claimhehe} holds even when $W$ is a mixture of $1$-moves and $2$-moves. 
This will be important when we deal with the general case in Section \ref{sec:general-case}.

We write $w^*(u,v^{(j)},v'^{(j')})\in V(W)$ to
  denote such a node $w$ promised in the above claim (if more than one exist pick one arbitrarily). 
The next claim gives us a sufficient condition for a simple cycle $C$ of $G^*$ to satisfy the condition of the lemma: 
  
\begin{claim}\label{hehelemma}
Let $$C= u_1v_1^{(j_1) }u_2v_2^{(j_2)}\cdots u_kv_k^{(j_k)}u_1$$ be a simple cycle of $G^*$ for 
some nonnegative integers $j_1,\ldots,j_k$.
Suppose for some $i\in [k]$ we have that
$w:=w^*(u_i,v_{i-1}^{(j_{i-1})},v_i^{(j_i)})\in V(W)$ does not appear in $C$ (where $v_{i-1}^{(j_{i-1})}$ denotes
  $v_k^{(j_k)}$ if $i=1$), i.e., 
  $w\notin \{u_1,\ldots,u_k,v_1,\ldots,v_k\}$, then $(u_i,w)\in E(W)$ must be a witness of $C$. 
\end{claim}
\begin{proof}
Let the corresponding cycle in $W$ be $(c_1,\cdots, c_{2k})$, edge $\{u_l,v_l^{(j_l)}\}$ corresponds to move $c_{2l-1}$ and edge $(v_l^{(j_l)},u_{l+1})$ corresponds to $c_{2l}$ (when $l=k$, $u_{l+1}$ denotes $u_1$). Let $b$ be its cancellation vector.
Recall
\[\Big(\imp_{\tau_0,W}(C)\Big)_{\{w,u_i\}} = \sum_{l=1}^{2k}b_{l}\Big(\imp_{\tau_0,W}(c_l)\Big)_{\{w,u_i\}}.\]
Since $w$ does not appear in $C$,  $\Big(\imp_{\tau_0,W}(c_l)\Big)_{\{w,u_i\}}\neq 0$ only when $u_i\in W_{c_l}$, i.e., when $l=2i-2$ or $l=2i-1$ (if $i=1$, it is $l=2k$ or $l=1$). So 
\[\Big(\imp_{\tau_0,W}(C)\Big)_{\{w,u_i\}} = -b_{2i-2}\tau_{c_{2i-2}}(u_i)\tau_{c_{2i-2}}(w)-b_{2i-1}\tau_{c_{2i-1}}(u_i)\tau_{c_{2i-1}}(w).\]
By the definition of $w^*$, $w$ moved odd number of times between move $c_{2i-2}$ and $c_{2i-1}$. So $\tau_{c_{2i-2}}(w)=-\tau_{c_{2i-1}}(w)$. Also, by property $4$ of $G^*$, Corollary~\ref{cor:simple-to-dependent} and the definition of a dependent cycle, $b_{2i-2}\tau_{c_{2i-2}}(u_i)+ b_{2i-1}\tau_{c_{2i-1}}(u_i)=0$. So 
\[\Big(\imp_{\tau_0,W}(C)\Big)_{\{w,u_i\}} = -2b_{2i-2}\tau_{c_{2i-2}}(u_i)\tau_{c_{2i-2}}(w)\ne0.\]
This finishes the proof of the claim.
\end{proof}


Finally we prove the existence of a simple cycle $C$ of $G^*$ that satisfies the condition of the above claim.
To this end, we first review a simple argument which shows that any bipartite graph
  with $n$ nodes and minimum degree at least $100\log n$ must have a simple cycle. 
Later we modify it to our needs.

The argument goes by picking an arbitrary node in the graph as the root and growing a binary tree 
  of $ \log n$ levels as follows:
\begin{flushleft}
\begin{enumerate}
    \item In the first round we just add two distinct neighbors of the root in the graph as its children.
\item Then for each round, we grow the tree by one level by going through its current leaves one by one to add two children for each leaf.
For each leaf $u$ of the current tree we just pick two of its neighbors in the graph that do not appear in ancestors of $u$ and add them as children of $u$. 
Such neighbors always exist since the tree will have no more than $ \log n$ levels
  and each node has degree at least $100 \log n$ in the graph.
\end{enumerate}
\end{flushleft}
Given that there are only $n$ nodes in the graph, there must be a node that appears more than once in the tree at the end.
Let's consider the first moment when we grow a leaf by adding one child (labelled by $u$) and $u$ already appeared in the tree.
Note that the two nodes labelled by $u$ are not related in the tree since we maintain the invariant that the label of a node does not appear in its ancestors. 
Combining paths from these two nodes to the first node at which the two paths diverge, 
  we get a simple cycle of the graph.

%
  
We now adapt the above argument to prove the existence of a simple cycle $C$ of $G^*$ that satisfies the condition of Claim \ref{hehelemma} by building a binary tree of   $2\log n$ levels as follows. 
We start with an arbitrary node $u_{\textsf{root}}\in V_1^*$ as the root of the tree and expand the tree level by level, leaf by leaf, as follows:
\begin{flushleft}\begin{enumerate}
\item Case $1$: 
The leaf we would like to grow is
  labelled a node $u\in V_1^*$.
In this case we add two children as follows.
Let $u_1 v_1^{(j_1)}\cdots u_{k-1} v_{k-1}^{(j_{k-1})}u$ be the path from the root ($u_1$) to $u$ in the tree,
  where $u_1,\ldots,u_{k-1},u\in V_1^*$ and 
  $v_1^{(j_1)},\ldots,v_{k-1}^{(j_{k-1})}\in V_2^*$.
We pick two neighbors $v^{(j)},v'^{(j')}$ of $u$ in $G^*$ with distinct $v,v'\in V_2$ as its children in the tree. 
We would like $v$ and $v'$ to satisfy the following two properties:
(1) $v$ and $v'$ do not lie in $\{v_1,v_2,\cdots,v_{k-1}\}$, (2) 
$v$ and $v'$ are different from 
  $w^*(u_i,v_i^{(j_i)},v_i'^{(j_i')})$ for every $i=1,\ldots,k-1$, where $v_i'^{(j_i')}$ denotes the other child of $v_i$ in the tree and (3) the move corresponding to $\{u,v^{(j)}\}$ in $W$ and the move corresponding to $\{u,v'^{(j')}\}$ in $W$ are not consecutive moves.
The existence of $v^{(j)}$ and $v'^{(j')}$ that satisfy (1), (2) and (3) follows trivially from the fact that every node (in particular, $u$ here) has
  degree at least $100\log n$ in $G^*$. Indeed, to satisfy (2) and (3), for each time we may reject at most 2 possible leafs.
Given that the tree will only grow for $2\log n$ levels \textendash the half of times with $V_1^*$ leafs and the rest half with $V_2^*$  \textendash, we have $k\le \log n$ and there are at most $2k\le 2\log n$ edges of $u$ that need to be avoided. 
Moreover, no two edges from $u$ go two the 
  same $C(v)$ for some $v\in V_2$ (Because we don't allow parallel edges).

\item Case $2$: The leaf we would like to grow is labelled a node $v^{(j)}\in V_2^*$. In this case we just add one neighbor $u\in V_1^*$ of $v^{(j)}$ as its only child.
Let $u_1 v_1^{(j_1)}\cdots v_{k-1}^{(j_{k-1})}u_{k} v^{(j)}$ be the path from the root to  $v^{(j)}$.
We pick a neighbor $u\in V_1^*$ of $v^{(j)}$ in $G^*$ that satisfies
(1) $u\notin \{u_1,\cdots, u_k\}$ and (2) 
$u$ is different from
  $w^*(u_i,v_i^{(j_i)},v_i'^{(j_i')})$ for every $i=1,\ldots,k-1$ and $u$ is different from $w^*(u_k,v^{(j)},v'^{(j')})$, where $v_i'^{(j_i')}$ denotes the other child of $u_i$ and $v_i'^{(j_i')}$ denotes the other child of $u_k$ in the tree. The existence of such  $u$ follows from the same argument as Case 1.
\end{enumerate}\end{flushleft}

Given that the tree has $2\log n$ levels and there are only $n$ nodes, 
there must be a node that appears more than once in the tree at the end, and let's consider the first moment when we grow a leaf by adding a child and the same node already appeared in the tree.
Similarly we trace the two paths and let $u\in V_1^*$ be the node where the 
  two paths diverge; note that given the construction of the tree, this node must be a node in $V_{1}^*$,
  given that nodes in $V_{2}^*$ only have one child in the tree.
On the one hand, the way we construct the tree makes sure that combining the two paths leads to a simple cycle $C$ of $G^*.$
On the other hand,
  let $v^{(j)},v'^{(j')}\in V_2^*$ be the two children of $u$ (which are next to $u$ on the cycle).
Then it is easy to verify that $w^*(u,v^{(j)},v'^{(j')})$ does not appear on the cycle we just found.
  
This ends the proof of the lemma. \end{proof}
\begin{figure}[h!]
    \centering
    \scalebox{0.95}{\input{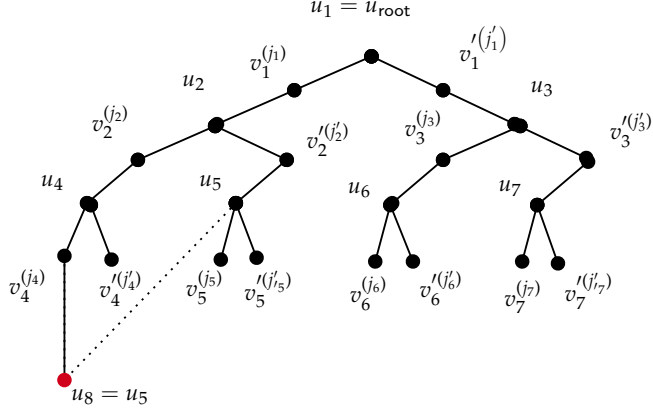}}
    \caption{Example of Finding-Cycle Process.}
    \label{fig:HandG}
\end{figure}

\section{General Case}
\label{sec:general-case}

We prove Lemma \ref{maintech} for the general case.
Let $\calS=(\calS_1,\ldots,\calS_N)$ be a \valid move sequence of length $N=n\log^{10}n$ that consists of both $1$-moves and $2$-moves.
We will consider two cases and deal with them separately: 
  (1) the number of $1$-moves in $\calS$ is at least $N/\log^5 n$; and (2) the number of $1$-moves is at most $N/\log^5 n$.

\subsection{Case $1$} 

We consider the case when there are at least $N/\log^5 n$ many $1$-moves. In this case we show that there is a window $W$ of $\calS$ such that $\rank_{\arcs}(W)$ is large.
The arguments used in this case are similar to those used in \cite{roglin2008complexity,bibak2019improving,coordination}. Given a window $W$ of $\calS$, we write $V_2(W)$ to denote the set of nodes $u\in V(W)$ such that at least two $1$-moves in $W$ are $\{u\}$. 

\begin{lemma}
There is a window $W$ of $\calS$ such that 
$$|V_2(W)|=\Omega\left(\frac{\emph{\len}(W)}{\log^6 n}\right).$$
\end{lemma}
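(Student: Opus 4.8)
The plan is to prove this purely combinatorially, by a dyadic bucketing of arcs in the style of \cite{etscheid2015smoothed}; validity of $\calS$ is \emph{not} needed for this lemma (it enters only later, when $|V_2(W)|$ is converted into a lower bound on $\rank_{\arcs}(W)$). Write $M$ for the number of $1$-moves in $\calS$, so $M\ge N/\log^5 n$ by the Case~$1$ hypothesis, and recall $N=n\log^{10}n$. An \emph{arc} of $\calS$ (Definition~\ref{def:arc}) is a pair of consecutive occurrences of some $\{u\}$ among the $1$-moves, so the total number of arcs equals $M$ minus the number of distinct nodes that appear as a $1$-move; this is at least $M-n\ge M/2=\Omega(N/\log^5 n)$, using $M\ge n\log^5 n\gg n$. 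Note that each index of $[N]$ is the left endpoint of at most one arc, a fact I will use repeatedly.

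Next I would bucket the arcs by dyadic length: an arc of length $j-i$ goes into bucket $B_k$ when $2^k\le j-i<2^{k+1}$, so $k$ ranges over $O(\log n)$ values and some bucket $B_{k^*}$ has $|B_{k^*}|\ge\Omega(N/\log^6 n)$. Set $L:=2^{k^*+1}$, so every arc in $B_{k^*}$ has length in $[L/2,L)$. A short degree count bounds $L$ from above: for each node $u$ the gaps between its consecutive $1$-moves sum to at most $N$, hence $u$ has fewer than $2N/L$ gaps of length $\ge L/2$; summing over the at most $n$ nodes gives $|B_{k^*}|<2nN/L$, so $L<2nN/|B_{k^*}|=O\!\left(nN\log^6 n/N\right)=O(N/\log^4 n)$. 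In particular $L\le N/2$ for all large $n$.

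Now I would locate the window by a pigeonhole on the left endpoints, rather than by averaging over all length‑$L$ windows. The arcs of $B_{k^*}$ have $|B_{k^*}|$ distinct left endpoints in $[N]$; partitioning $[N]$ into $\lceil N/L\rceil\le 2N/L$ consecutive blocks of length at most $L$, some block contains $t\ge |B_{k^*}|\big/(2N/L)=|B_{k^*}|L/(2N)=\Omega(L/\log^6 n)$ of these left endpoints. Since each arc of $B_{k^*}$ has length $<L$, every arc whose left endpoint lies in that block has both of its endpoints inside the window $W$ of length at most $2L$ that starts at the first index of the block (truncated at $N$). Each such arc $(i,j)$ has $\calS_i=\calS_j=\{u\}$ with $i,j\in W$, so its node $u$ lies in $V_2(W)$; moreover a single node is responsible for at most $4$ of these arcs, because the arcs of a fixed node are pairwise disjoint intervals of length $\ge L/2$ and at most $4$ such intervals fit in a window of length $\le 2L$. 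Hence $|V_2(W)|\ge t/4=\Omega(L/\log^6 n)$, and since $\len(W)\le 2L$ this is $\Omega(\len(W)/\log^6 n)$, as required.

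The one point needing care is precisely why I route the argument through a pigeonhole on left endpoints instead of the usual ``pick a uniformly random window of length $L$'' step: when $|B_{k^*}|$ and $L$ are of comparable polylogarithmic order, the boundary corrections in the double counting of $(\text{window},\text{arc})$ incidences can be of the same order as $|B_{k^*}|$ itself, so the random‑window computation does not obviously produce a positive lower bound; the pigeonhole formulation sidesteps this. Beyond that, the only thing to track is the logarithmic accounting — the dyadic bucketing costs one $\log n$ factor ($\log^5 n\to\log^6 n$), the node‑multiplicity bound costs only the constant $4$, and the bound $L=O(N/\log^4 n)$ is exactly what makes the pigeonhole quotient $|B_{k^*}|L/(2N)$ still $\Omega(L/\log^6 n)$.
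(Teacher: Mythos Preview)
Your proof is correct and follows essentially the same approach as the paper: dyadic bucketing of arcs by length, selecting the densest bucket, locating a window of length comparable to the arc length, and then using the multiplicity bound of at most $4$ arcs per node in such a window. The only difference is that you locate the window by pigeonhole on consecutive length-$L$ blocks (after first bounding $L=O(N/\log^4 n)$), whereas the paper takes a uniformly random window allowed to start at negative positions and argues via expectation; your stated worry about boundary corrections is exactly what that negative-start trick in the paper absorbs.
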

\begin{proof}
Any 1-move that is not the first 1-move of the vertex generates a new arc of $\calS$, so the total number of arcs is at least
$
|\adarcs(\calS)|\ge N/\log^5 n-n.
$
Define the length of an arc $\alpha$ $(i,j)$, $\len(\alpha)$, to be $j-i$.
Partition all arcs based on their length, for any integer $i$ that $0\le i\le \lfloor \log_2 N\rfloor$, define
\[\adarcs_i(\calS) := \Big\{\alpha: \alpha\in \adarcs(\calS),\ \len(\alpha)\in [2^i,2^{i+1})\Big\}.\]
Since $\sum_{i=0}^{\lfloor \log_2 N\rfloor} |\adarcs_i(\calS)| \ge N/\log^5 n-n$, there exists $i^*$ such that 
\[|\adarcs_{i^*}(\calS)|\ge \frac{N/\log^5 n-n}{\log_2 N+1}\ge  \frac{N}{10\log^6 n}.\]

Let $W_r'$ be a window of length $2^{i^*+2}$ starting at a uniformly random position in $\{-2^{i^*+2}+1,\cdots , N\}$, and $W_r=W_r'\cap [N]$. For any arc $\alpha\in \adarcs_{i^*}(\calS)$, there are $\len(W_r')-\len(\alpha)$ possible starting points for $W_r$ to contain $\alpha$. So
\[
\Pr\big[\alpha\in \adarcs(W_r)\big]\ge \frac{\len(W_r')-\len(\alpha)}{N+2^{i^*+2}-1}\ge \frac{2^{i^*+1}}{N+2^{i^*+2}}.
\]
From linearity of expectation, and $2^{i^*+2}\le 4N$,
\[
\E\big[|\adarcs_{i^*}(W_r)|\big]\ge |\adarcs_{i^*}(\calS)|\cdot \frac{2^{i^*+1}}{N+2^{i^*+2}}\ge \frac{2^{i^*+1}}{50\log^6 n}\ge \frac{\len(W_r)}{100\log^6 n}.
\]
We can pick $W$ so that $|\adarcs_{i^*}(W)|\ge  {\len(W)}/{100\log^6 n}$. By definition of an arc, any vertex in one of the arcs in $\adarcs_{i^*}(W)$ must be in $V_2(W)$. On the other hand, any arc $\alpha\in \adarcs_{i^*}(W)$ has length at least $2^{i^*}\ge \len(W)/4$. So any vertex can have at most 4 arcs in $\adarcs_{i^*}(W)$. We have 
\[V_2(W)\ge \text{\# vertices in }\adarcs_{i^*}(W) \ge |\adarcs_{i^*}(W)|/4\ge \frac{\len(W)}{400\log^6 n}\]
This finishes the proof of the lemma.
\end{proof}

\begin{lemma}
We have $\emph{\rank}_{\emph{\arcs}}(W)\ge \Omega(|V_2(W)|)$.
\end{lemma}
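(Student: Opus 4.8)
The plan is to produce, for at least half of the nodes $u\in V_2(W)$, an arc $\alpha_u\in\arcs(W)$ whose improvement vector $\imp_{\tau_0,W}(\alpha_u)$ is supported only on edges of $E(W)$ incident to $u$ and is nonzero on at least one such edge, and then to organize these vectors into an upper‑triangular submatrix, which forces them to be linearly independent.

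First I would fix $\tau_0$ arbitrarily (recall from Lemma~\ref{lem:independence-init-rank-arcs} that $\rank_{\arcs}(W)$ does not depend on this choice) and, for each $u\in V_2(W)$, pick two $\{u\}$‑moves of $W$ that are consecutive among the $1$‑moves of $u$, giving an arc $\alpha_u=(i,j)$ in the sense of Definition~\ref{def:arc}. A direct computation from \eqref{eq:1-move-improvement} and \eqref{eq:arc-improvement}, using only that $W_i=W_j=\{u\}$ so $\tau_i(u)=-\tau_{i-1}(u)$ and $\tau_j(u)=-\tau_{j-1}(u)$, yields $\imp_{\tau_0,W}(\alpha_u)_{(u,w)}=\tau_{j-1}(w)-\tau_{i-1}(w)$ for every active $w\neq u$, and $0$ on every edge not incident to $u$; in particular this entry is nonzero exactly when $w$ is flipped an odd number of times in $W_{i+1},\dots,W_{j-1}$. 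Now I invoke that $W$, being a window of the valid sequence $\calS$, is itself valid (Definition~\ref{def:goodsequence}): applying validity to the index pair $(i,i+1)$ shows $j\neq i+1$ (a valid sequence cannot contain two consecutive identical $1$‑moves, since between them no node $\neq u$ occurs an odd number of times), so the pair $(i,j-1)$ is legitimate, and validity on it gives a node $w_u\notin W_i=\{u\}$ occurring an odd number of times in $W_i,\dots,W_{j-1}$, hence in $W_{i+1},\dots,W_{j-1}$. Thus $w_u$ is active, $(u,w_u)\in E(W)$, and $\imp_{\tau_0,W}(\alpha_u)_{(u,w_u)}\neq0$.

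The second step deals with the possibility that the witnesses $w_u$ themselves lie in $V_2(W)$, which creates cyclic interference among the witness edges; this is the main obstacle, and it is exactly what costs the factor $\tfrac12$. Form the functional digraph $D$ on vertex set $V_2(W)$ with an arc $u\to w_u$ whenever $w_u\in V_2(W)$. Since $w_u\neq u$, $D$ has out‑degree at most $1$ and no loops, so its directed cycles are pairwise vertex‑disjoint and each has length at least $2$; deleting one vertex from each cycle discards at most $|V_2(W)|/2$ vertices and leaves an acyclic digraph on a set $A$ with $|A|\ge|V_2(W)|/2$. Let $u_1,\dots,u_{|A|}$ be a topological ordering of $A$, so that $w_u\in A$ implies $w_u$ precedes $u$, and consider the $|A|\times|A|$ matrix $M$ with $M_{pq}=\imp_{\tau_0,W}(\alpha_{u_p})_{(u_q,w_{u_q})}$. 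Its diagonal entries are nonzero by the first step; for $p>q$, having $M_{pq}\neq0$ would force $u_p$ to be an endpoint of $(u_q,w_{u_q})$, i.e.\ $u_p=w_{u_q}\in A$, contradicting the topological order. Hence $M$ is upper‑triangular with nonzero diagonal, the vectors $\{\imp_{\tau_0,W}(\alpha_{u_p})\}_p$ are linearly independent, and therefore $\rank_{\arcs}(W)\ge|A|\ge|V_2(W)|/2=\Omega(|V_2(W)|)$. I do not expect any step beyond the cycle‑breaking in $D$ to present difficulty; everything else is an unwinding of the definitions together with the validity property, which is what both guarantees the arc vectors are nonzero and rules out degenerate length‑one arcs.
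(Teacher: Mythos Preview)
Your proof is correct and follows essentially the same approach as the paper: for each $u\in V_2(W)$ pick an arc and a witness $w_u$ via validity, then extract a subset of size at least $|V_2(W)|/2$ on which the witness edges yield a triangular submatrix. The only cosmetic difference is in how the subset is selected---the paper uses a greedy scan (add $u$ to $U$ and remove $u,w_u$ from the pool), while you break cycles in the functional digraph $u\mapsto w_u$ and take a (reverse) topological order; both mechanisms produce the same ordering property $u_p\neq w_{u_q}$ for $p>q$ and the same constant $\tfrac12$.
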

\begin{proof}
Let $u_1,u_2,\cdots, u_k$ be the vertices in $V_2(W)$, and $\alpha_j=(s_j,e_j)$ be the arc of $u_j$ formed by its first and second 1-move. Since sequence $W$ is a \valid move sequence, there exists a vertex $v_j\not=u_j$ that moved odd number of times between $s_j$ and $e_j$, i.e., $\tau_{s_j-1}(v_j)=-\tau_{e_j-1}(v_j)$. Pick an arbitrary such $v_j$ for each $j$. Take a subset $U$ of $V_2(W)$ by the following process: 
\begin{itemize}
    \item $V\leftarrow V_2(W)$
    \item $U\leftarrow \emptyset$
    \item For $j$ from 1 to $k$, if $u_j\in  V$, $V\leftarrow V\backslash \{u_j,v_j\}$, $U\leftarrow U\cup\{u_j\}$.
\end{itemize}
In each step we delete at most two element from $V$ and add one element to $U$, so $|U|\ge |V_2(W)|/2$. Let $U=\{u_{i_1},u_{i_2},\cdots ,u_{i_m}\}$, ordered by the sequence they are added. By the process, for any $u_{i_j}\in U$, and any $j'>j$, $v_{i_{j'}}\not= u_{i_j}$.

Recall 
\[
\imp_{\tau_0,W}(\alpha_{i_j})_{(u_{i_j},v_{i_j})} = -\tau_{s_{i_j}-1}(v_{i_j})+\tau_{e_{i_j}-1}(v_{i_j}) = 2\tau_{e_{i_j}-1}(v_{i_j})\not=0.
\]
And for any $j'>j$, $u_{i_j}\not= u_{i_j}$, $v_{i_{j'}}\not=u_{i_j}$, so $\imp_{\tau_0,W}(\alpha_{i_{j'}})_{(u_{i_j},v_{i_j})}=0$. Consider the matrix formed by taking the $j-$th column to be $\imp_{\tau_0,W}(\alpha_{i_j})$. The row indexed by $(u_{i_j},v_{i_j})$ would be of the form
\[
(\underbrace{*,*,\cdots,*,}_\text{$j-1$ unknown numbers} 2\tau_{e_{i_j}-1}(v_{i_j})\not=0,0,\cdots,0).
\]
 This means the matrix has a lower triangular square submatrix of size at least $m\ge |V_2(W)|/2$. So we have $\rank_{\arcs}(W)\ge |V_2(W)|/2$.
\end{proof}

\subsection{Case $2$}

Let $\calS$ be a \valid move sequence of length $N$ with no more than $N/\log^5n$ 
many $1$-moves.
Let $W$ be a window of $\calS$. We write $\#_W(u)$ to denote the number of moves (including both $1$-moves and $2$-moves) that $u$ appears in $W$, and  write 
  $\#_W^2(u)$ to denote the number of $2$-moves that $u$ appears in $W$.

We start by showing a lemma similar to Lemma \ref{lem:sequence-selection} in Section \ref{sec:good-window}.

\begin{lemma}\label{lem:sequence-selection2}
Let $\calS$ be a \valid move sequence of length $N= n\log^{10} n$ with no more than $N/\log^5 n$ many $1$-moves. 
Then there exists a window $W$ of $\calS$ 
  such that at least $\Omega(\emph{\len}(W)/\log n)$ many moves of $W$ are $2$-moves $W_i=\{u,v\}$ that satisfy 
\begin{equation}\label{heheeq2}
\log^{3} n\le \#^2_W(u)\le \#_W(u)\le 2\log^{7} n\quad\text{and}\quad
\#^2_W(v)\ge \log^{3} n
\end{equation}
\end{lemma}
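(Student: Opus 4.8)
The plan is to transcribe the proof of Lemma~\ref{lem:sequence-selection} almost verbatim, running all of the frequency/scale machinery of Section~\ref{sec:good-window} on the \emph{$2$-move occurrences} of each node and treating the at most $N/\log^5 n$ many $1$-moves as a sparse perturbation. For each node $u\in V(\calS)$ let $J_u\subseteq[N]$ be the set of indices $i$ with $\calS_i$ a $2$-move containing $u$, so $\sum_u|J_u|=2\cdot(\#\text{$2$-moves})\le 2N$. Call a $2$-move $\calS_i=\{u,v\}$ \emph{$\ell$-good} if $i$ is $\ell_1$-good in $J_u$ and $\ell_2$-good in $J_v$ in the sense of Definition~\ref{def:good-index} for some $\ell_1,\ell_2$ with $\ell=\max(\ell_1,\ell_2)$; fix one endpoint attaining this maximum and call it the \emph{heavy endpoint}. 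Exactly as in Remark~\ref{rem:windows}, for windows $W$ of length $L=L'+\lceil(1+2\delta)L'\rceil+1$ with $L'=\lceil(1+\delta)^{\ell-1}\rceil$ based at the appropriate offsets around $i$, the heavy endpoint $u$ satisfies $\log^3 n\le\#^2_W(u)\le\log^7 n$ and the other endpoint $v$ satisfies $\#^2_W(v)\ge\log^3 n$ (the latter because such a window contains $[i-L'_2:i+L'_2]$ with $L'_2=\lceil(1+\delta)^{\ell_2-1}\rceil$, on which $|J_v\cap\cdot|\ge\log^3 n$).

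Next I would count the $\ell$-good $2$-moves by the same three-case split as in Lemma~\ref{lem:sequence-selection}. First, $2$-moves with an endpoint $u$ having $|J_u|<\log^8 n$ number at most $n\log^8 n=o(N/\log n)$. Second, among the remaining $2$-moves, one with an endpoint that is not $\ell_j$-good in its $J$-set for any $\ell_j$ is, by Lemma~\ref{lem:lgood}, counted at most $\sum_{u:\,|J_u|\ge\log^8 n}O(|J_u|/\log n)=O(N/\log n)$ times. Since there are at least $N(1-1/\log^5 n)$ $2$-moves, at least $N(1-O(1/\log n))$ of them are $\ell$-good for \emph{some} $\ell$; as $\ell$ ranges over only $O(\log n)$ values, pigeonhole gives a fixed $\ell^*$ for which at least $\Omega(N/\log n)$ of the $2$-moves are $\ell^*$-good. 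Fix this $\ell^*$ and the corresponding $L$.

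Now I would run the random-window argument. Pick a window $W$ of length $L$ uniformly among the $N-L+1$ choices and let $X$ count the $\ell^*$-good $2$-moves $\calS_i$ for which $W$ is one of the $\Omega(L)$ "good" windows of $i$; as before $\E[X]\ge\Omega(L/\log n)$, and since $X\le L$ we get $\Pr[X\ge aL/(2\log n)]\ge a/(2\log n)$ for the hidden constant $a$. Letting $Y$ be the number of $1$-moves in $W$, each $1$-move lies in at most $L$ windows, so $\E[Y]\le(N/\log^5 n)\cdot L/(N-L+1)\le 2L/\log^5 n$, hence $\Pr[Y\ge aL/(4\log n)]=o(1/\log n)$ by Markov; therefore some concrete $W$ has $X\ge aL/(2\log n)$ and $Y<aL/(4\log n)$. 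Every $2$-move counted by $X$ already satisfies all of \eqref{heheeq2} except possibly the bound $\#_W(u)\le 2\log^7 n$ on its heavy endpoint $u$, and since $\#^2_W(u)\le\log^7 n$ this can fail only if $u$ occurs in more than $\log^7 n$ of the $1$-moves of $W$. The $1$-moves of $W$ are attributed to unique nodes, so there are at most $Y/\log^7 n$ such "bad" heavy endpoints, and each is the heavy endpoint of at most $\#^2_W(u)\le\log^7 n$ of the moves counted by $X$; thus at most $Y<aL/(4\log n)$ of the $X$ moves are spoiled, leaving at least $aL/(2\log n)-aL/(4\log n)=\Omega(\len(W)/\log n)$ $2$-moves of $W$ satisfying \eqref{heheeq2}.

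The genuinely new point, and the main obstacle, is handling the $1$-moves without discarding all $2$-moves incident to "$1$-move-heavy" nodes — a set that can be a constant fraction of all $2$-moves (e.g. when one such node sits in almost every $2$-move), so a global deletion would be too lossy. The resolution is the pair of observations used above: (i) using only the global sparsity bound $\#\text{$1$-moves}\le N/\log^5 n$, one can choose the random window to contain asymptotically fewer $1$-moves than $\ell^*$-good $2$-moves; and (ii) a spoiled candidate is charged to a $1$-move-heavy \emph{heavy endpoint}, whose $2$-degree in $W$ is already capped at $\log^7 n$ by $\ell^*$-goodness, so each such endpoint spoils at most $\log^7 n$ candidates — hence only an $o(1)$ fraction of candidates is lost. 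Everything else (Definition~\ref{def:good-index}, Lemma~\ref{lem:lgood}, Remark~\ref{rem:windows}, and the offset/scale bookkeeping) is reused unchanged from Section~\ref{sec:good-window}.
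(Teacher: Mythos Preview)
Your proof is correct. Both you and the paper execute the same high-level plan—run the Section~\ref{sec:good-window} machinery on the $2$-move occurrences and treat the $\le N/\log^5 n$ many $1$-moves as a sparse perturbation—but the implementations differ in an instructive way.

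The paper first deletes all $1$-moves to form a compressed sequence $\calS'$, applies Lemma~\ref{lem:sequence-selection} to $\calS'$ as a black box to obtain $\Omega((N'-L+1)/\log n)$ good windows of $\calS'$, and then maps each such window back to a window of $\calS$. The key step is to show that at least one mapped-back window did not stretch much (length $\le(1+1/\log^2 n)L$); this is done by a double-counting argument—if every mapped-back window contained $\ge L/\log^2 n$ extra $1$-moves, then summing over the $\Omega((N'-L+1)/\log n)$ windows and dividing by the multiplicity $\le L$ would force more than $N/\log^5 n$ total $1$-moves. In such an unstretched window there are at most $O(L/\log^9 n)$ nodes with $\ge\log^7 n$ $1$-move occurrences, and each spoils at most $\log^7 n$ candidate moves.

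You instead work directly in $\calS$ with the sets $J_u$, re-running the proof of Lemma~\ref{lem:sequence-selection} rather than invoking it, and handle the $1$-moves inside the same random-window argument via Markov on $Y$. Your spoiling bound is then a clean charging argument: bad heavy endpoints number at most $Y/\log^7 n$ and each spoils at most $\log^7 n$ candidates, hence at most $Y$ moves are lost. Your route avoids the compress/map-back bookkeeping and the separate case split on $L\gtrless N'/2$, at the cost of re-opening Lemma~\ref{lem:sequence-selection}; the paper's route is more modular. One cosmetic point: your chain $\E[Y]\le (N/\log^5 n)\cdot L/(N-L+1)\le 2L/\log^5 n$ is literally valid only when $L\le(N+1)/2$; for larger $L$ use instead $\E[Y]\le N/\log^5 n<2L/\log^5 n$, so the conclusion stands either way.
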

\begin{proof} 
We would like to apply Lemma \ref{lem:sequence-selection} (which works on move sequences that consist of $2$-moves only).
To this end, we let
  $\calS'$ be the move sequence obtained from $\calS$ by removing all its $1$-moves.
Let $N':=\len(\calS')\ge (1-1/\log^5 n)N$.
Applying Lemma \ref{lem:sequence-selection} on $\calS'$,\footnote{Note that $\calS'$ has length not exactly $N$ but $(1-1/\log^5 n)N$ but the statement of Lemma  \ref{lem:sequence-selection}  still holds.}   there must exist a positive integer $L$ and among the $N'-L+1$ windows $W'$ of $\calS'$ of length $L$, at least $\Omega(1/\log n)$-fraction of them satisfy that $\Omega(L/\log n)$ many
  $2$-moves $\{u,v\}$ in it satisfy 
\begin{equation}\label{goalgoal}
\log^3 n\le \#^2_{W'}(u)\le   \log^{7} n\quad\text{and}\quad
\#^2_{W'}(v)\ge \log^{3} n.
\end{equation}
{Let's denote these windows of $\calS'$ by $W_1',\ldots,W_s'$ for some $s=\Omega( (N'-L+1)/\log n )$.}
For each $W_i'$ we let $\calS_{k_i}$ (or $\calS_{\ell_i}$) to denote the move in $\calS$ that corresponds to the first (or last, respectively) move in $W_i'$, and
  let $W_i$ denote the window $(\calS_{k_i},\ldots,\calS_{\ell_i})$ of $\calS$.

If $L\ge N'/2$, we can just take $W$ to be $W_1$.
We note that the number of $2$-moves in $W_1$ that satisfy (\ref{goalgoal}) is at least $\Omega(L/\log n)=\Omega(\len(W_1)/\log n)$ given that $L\ge N'/2$. 
On the other hand, the number of $u\in V(W_1)$ that appears in at least $\log^7n$ $1$-moves of $W_1$ is at most $(N/\log^5n)/\log^7n=O(N/\log^{12} n)$.
Thus, the number of $2$-moves $\{u,v\}$ in $W_1$ that satisfy (\ref{goalgoal}) but not $\#_{W_1}(u)\le 2\log^7 n$ is at most
$$
\log^7 n\cdot O\left(\frac{N}{\log^{12}n}\right)=o\left(\frac{L}{\log n}\right).
$$
So we assume below that $L<N'/2$.

We claim that $W_i$ can satisfy the condition of the lemma if $\len(W_i)\le (1+1/\log^2 n)L$.
To see this is the case, we note that the number of nodes $u\in W_i$ that appears in at least $\log^7 n$ many $1$-moves is at most $(L/\log^2 n)/\log^7 n=O(L/\log^9 n)$.
Thus, the number of $2$-moves $\{u,v\}$ in $W_i$ that satisfy (\ref{goalgoal}) but not $\#_{W_i}(u)\le 2\log^7 n$ is at most
$$
\log^7 n\cdot O\left(\frac{N}{\log^{9} n}\right)=o\left(\frac{L}{\log n}\right).
$$

So it suffices to show that $\len(W_i)\le (1+1/\log^2 n)\cdot  \len(W_i') $ for some window $W_i$.
Assume this is not the case. Then 
  the total number of $1$-moves in $W_1,\ldots,W_s$ is at least
$$
\frac{1}{\log^2 n}\cdot \sum_{i\in [s]} \len(W_i')\ge \Omega\left( \frac{(N'-L+1)L}{\log^3 n}\right)=\Omega\left(\frac{NL}{\log^3 n}\right)
$$
using $L< N'/2$.
However, each $1$-move can only appear in no more than $L$ many windows of length $L$. Given that there are only $N/\log^5 n$ many $1$-moves, the same number can be upper bounded by
$$
\frac{N}{\log^5 n}\cdot L,
$$
a contradiction.
This finishes the proof of the lemma.
\end{proof}

So we now have a \valid move sequence $W$ of length $L$, as a window of the original \valid  sequence $\calS$, such that the number of $2$-moves in $W$ that satisfy (\ref{heheeq2}) is at least $\Omega(L/\log n)$.
The rest of the proof follows the same arguments used in Section \ref{sec:finding-cycles}.
We give a sketch below.

First we define the same auxiliary graph $H=(V(W),E)$ such that there is a one-to-one correspondence between $E$ and $2$-moves in $W$.
Note that the degree of a node $u$ in $H$ is the same as $\#_W^2(u)$.

We then show that there are disjoint sets of nodes $V_1,V_2\subset V(W)$ and a subset of edges $E'\subseteq E$ that satisfy conditions similar to those of Lemma \ref{hehelemma51}: 

\begin{lemma} \label{lemma64}
There are two disjoint sets of nodes $V_1,V_2\subset V(W)$ and a subset of edges $E'\subseteq E$ such that 
\begin{enumerate}
\item Every edge in $E'$ has one node in $V_1$ and the other node in $V_2$;
\item $|V_1\cup V_2|=O(L/\log^3 n)$ and $|E'|=\Omega(L/\log n)$;
\item $\#_W (u)\le 2\log^7 n$ for every node $u\in V_1$.
\end{enumerate}
\end{lemma}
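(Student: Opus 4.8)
The plan is to mirror, almost verbatim, the proof of Lemma~\ref{hehelemma51} but carried out on the sequence $W$ obtained from Lemma~\ref{lem:sequence-selection2}, replacing everywhere ``$\#_W(\cdot)$'' by ``$\#^2_W(\cdot)$'' when counting degrees in the auxiliary graph $H$, and keeping track of the total degree (1-moves plus 2-moves) to get condition~3. Recall that in $W$ the degree of a node $u$ in $H$ is exactly $\#^2_W(u)$, and that Lemma~\ref{lem:sequence-selection2} guarantees $\Omega(L/\log n)$ many $2$-moves $\{u,v\}$ satisfying \eqref{heheeq2}, i.e.\ $\log^3 n\le \#^2_W(u)\le \#_W(u)\le 2\log^7 n$ and $\#^2_W(v)\ge \log^3 n$.

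First I would set $V$ to be the set of nodes $w$ with $\#^2_W(w)\ge \log^3 n$ (the ``high-frequency in $2$-moves'' nodes). As in Lemma~\ref{hehelemma51}, $|V|\cdot \log^3 n \le \sum_{w\in V}\deg_H(w) \le 2|E(H)|\le 2L$, so $|V|=O(L/\log^3 n)$. Then partition $V$ into $V_\ell$ and $V_h$, where $V_\ell$ consists of those $w\in V$ with the additional bound $\#_W(w)\le 2\log^7 n$ (total degree, including $1$-moves), and $V_h=V\setminus V_\ell$. The key point is that, by \eqref{heheeq2}, every $2$-move $\{u,v\}$ promised by Lemma~\ref{lem:sequence-selection2} has $u\in V_\ell$ and $v\in V$, so the number of edges of $H$ incident to at least one node of $V_\ell$ (i.e.\ edges in $V_\ell\times V_\ell \cup V_\ell\times V_h$, counting parallel edges) is $\Omega(L/\log n)$.

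Next I would apply the same random bipartition argument: put each node of $V_\ell$ independently and uniformly into $V_1$ or $V_2$, and put all of $V_h$ into $V_2$. Every edge of $H$ with at least one endpoint in $V_\ell$ lands between $V_1$ and $V_2$ with probability $\ge 1/2$, so in expectation at least half of those $\Omega(L/\log n)$ edges are cut; fix a bipartition achieving this and let $E'$ be the set of cut edges among $V_\ell\times V_\ell\cup V_\ell\times V_h$. Then $|E'|=\Omega(L/\log n)$, every edge of $E'$ goes between $V_1$ and $V_2$ (condition~1), and every $u\in V_1\subseteq V_\ell$ satisfies $\#_W(u)\le 2\log^7 n$ (condition~3). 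Finally, since $V_1\cup V_2 = V_\ell\cup V_h = V$ and $|V|=O(L/\log^3 n)$, condition~2 holds.

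I do not expect a genuine obstacle here: the lemma is the exact analogue of Lemma~\ref{hehelemma51}, and the only subtlety is the bookkeeping distinction between $\#^2_W$ (which governs $\deg_H$ and hence the cycle structure) and $\#_W$ (which must be bounded on $V_1$ so that later splitting arguments, where deleting a $V_1$-node removes at most $\log^7 n$ incident moves, still go through). The mild care needed is that the ``$2\log^7 n$'' in condition~3 — rather than ``$\log^7 n$'' as in Lemma~\ref{hehelemma51} — is precisely what Lemma~\ref{lem:sequence-selection2} delivers, and this slack is harmless for the downstream arguments of Section~\ref{sec:finding-cycles}, which only use that the bound is $\mathrm{polylog}(n)$.
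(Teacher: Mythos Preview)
Your proposal is correct and essentially identical to the paper's own proof: the paper states that the argument is exactly that of Lemma~\ref{hehelemma51} with $V$ defined via $\#_W^2(v)\ge\log^3 n$ and $V_h$ via $\#_W(v)\ge 2\log^7 n$, which is precisely the bookkeeping you spell out. Your explicit verification that the moves from \eqref{heheeq2} land in $V_\ell\times V$ and your handling of the random-bipartition step match the paper line for line.
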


The proof is exactly the same as that of Lemma \ref{hehelemma51}, except that we define $V$ to be the set of nodex $v$ with $\#_W^2(v)\ge \log^3 n$ and $V_h$ to be the set of nodes $v$ with $\#_W(v)\ge 2\log^7 n$.

Next we focus on $E''$, which contains all edges in $E'$ of the same sign (or edges in $E'$ of different signs, whichever contains more edges).
Similarly every cycle in $E''$ corresponds to a dependent cycle of $W$. 
The case when $E''$ contains many parallel edges can be handled exactly the same way as in Lemma \ref{lemma54}.
So we may delete all parallel edges from $E''$, and finish the proof using Lemma \ref{main}.

The proof of Lemma \ref{main} for the general case is very similar, with the following changes:
\begin{flushleft}\begin{enumerate}
\item In the definition of $k_v$ for each $v\in \wit_2(\mathcal{E}_s)$, we need it to be the number of moves (including both $1$-moves and $2$-moves) in $W$ that involve at least one node in $\wit_1(v)$.
This can still be bounded from above by $|\wit_1(v)|\cdot 2\log^7 n$ since we have $\#_W(u)\le \log^7 n$ for all $u\in V_1$ as promised in Lemma \ref{lemma64} above. 
\item As we commented earlier, Claim \ref{claimhehe} works even when $W$ consists of both $1$-moves and $2$-moves.
\end{enumerate}\end{flushleft}
This finishes the proof of  
  Lemma \ref{maintech} for the general case.

\clearpage
\section{Binary Max-CSP and Function Optimization Problems}\label{csp}
We recall the definition of binary maximum constraint satisfaction problems, and more generally function optimization problems.
\begin{definition}
An instance of Binary Max-CSP (Constraint Satisfaction Problem), or MAX 2-CSP, consists
of a set $V = \{ x_1, \ldots, x_n \}$ of variables that can take values over $\{0,1\}$
and a set $C = \{c_1, \ldots, c_m \}$ of constraints 
with given respective weights $w_1, \ldots, w_m$,
where each constraint is a predicate on a pair of variables.
The MAX 2-CSP problem is: given an instance, find an assignment that
maximizes the sum of the weights of the satisfied constraints.
\end{definition}
Several problems can be viewed as special cases of Binary Max-CSP where the
predicates of the constraints are restricted to belong to
a fixed family ${\cal P}$ of predicates; 
this restricted version is denoted Max-CSP(${\cal P}$).
For example, the Max Cut problem in graphs is equivalent to Max-CSP(${\cal P}$) where
${\cal P}$ contains only the ``not-equal'' predicate
($x \neq y$, where $x,y$ are the two variables).
The Max Directed Cut problem, where the input graph is directed and we seek a partition of the nodes into two parts $N_1, N_2$ that maximizes the total weight of the edges directed from $N_1$ to $N_2$, corresponds to the case that ${\cal P}$ contains only the $<$ predicate (i.e. $x < y$).
MAX 2SAT corresponds to the case that ${\cal P}$ consists of all 4 possible clauses on two variables.

A generalization of MAX 2-CSP is the class of {\em Binary function optimization problems} (BFOP) where instead of constraints (predicates) we have functions on two arguments that take values in $\{0, 1, \ldots, d \}$ instead of $\{0,1\}$, where $d$ is a fixed constant (or even is polynomially bounded).
For convenience and consistency with the notation of configurations in the Max Cut problem, we will use in the following  $\{-1,1\}$ as the domain of the variables instead of $\{0,1\}$.
That is, the problem is: Given a set $V = \{ x_1, \ldots, x_n \}$ of variables with domain $D=\{-1,1\}$,  a set $F = \{f_1, \ldots, f_m \}$ of functions, where each $f_i$ is a function of a pair $(x_{i_1}, x_{i_2})$ of variables,
and given respective weights $w_1, \ldots, w_m$, find an assignment $\tau: V \rightarrow D$ to the variables  that maximizes $\sum_{i=1}^m w_i \cdot f_i (\tau(x_{i_1}),\tau(x_{i_2}))$.

Even though a function in BFOP (or a constraint in Max-2CSP) has two arguments, its value may depend on only one of them, i.e. it may be
essentially a unary function (or constraint). More generally, it may be that the two arguments of the function can be decoupled and the function can be separated into two unary functions.
We say that a binary function $f(x,y)$ is {\em separable} if there are unary functions $f_1, f_2$ such that $f(x,y) = f_1(x)+f_2(y)$ for all values of $x,y$; otherwise $f$ is {\em nonseparable}.
For binary domains there is a simple criterion for separability: a function $f(x,y)$ is separable if and only if $f(-1,-1)+f(1,1) = f(-1,1) +f(1,-1)$ \cite{chen2020}.
If in a given BFOP instance some binary functions are separable, then we can decompose them into the equivalent unary functions. Thus, we may assume, without loss of generality, that a given BFOP instance has unary and binary functions, where all the binary functions are nonseparable.  
We say that an instance is {\em complete}, if every pair of variables appear as the arguments of  a (nonseparable) binary function in the instance.

The 2-FLIP local search algorithm can be applied to a MAX 2-CSP or BFOP problem to compute a locally optimal assignment that cannot be improved by flipping the value of any one or two variables.
We will show that the smoothed complexity of 2-FLIP for any complete MAX 2-CSP or BFOP instance is (at most) quasipolynomial.

\cspthm*

\begin{proof}
Consider a (complete) instance $I$ of a BFOP problem
with $n$ variables and $m$ functions, 
and a sequence $\calS$ of moves of 2-FLIP starting from an initial configuration.
The proof follows the same structure as the proof for Max Cut. The only thing that changes is the improvement vector in each step, which depends on the
specific functions of the instance: the vector has one coordinate for each function $f_i$ in the instance and the entry is equal to the change in the value of the function resulting from the move. Arcs and cycles of $\calS$  are defined in the same way as in Max Cut, and the improvement vectors of arcs and cycles are defined in an analogous way from the improvement vectors of the moves. 

The heart of the proof for Max Cut is Lemma \ref{maintech} which showed that there is a window $\cal{W}$ and a set of arcs or a set of cycles of $\cal{W}$ whose improvement vectors have rank $\Omega(\frac{\len (\cal{W})}{\log^{10} n})$. We will show that the lemma holds for any BFOP problem.

We associate with the BFOP instance $I$ the graph $G$
where the nodes correspond to the variables of $I$ and the edges correspond to the binary functions of $I$; since $I$ is a complete instance, the graph $G$ is the complete graph, possibly with multiple edges connecting the same pair of nodes (if there are
multiple functions with the same pair of arguments).
We will identify the variables of $I$ with the nodes of $G$ and the functions of $I$ with the edges of $G$.

In the general case of the Max Cut problem, in Case 1 where there is a large number of 1-moves, we identified a window $\cal{W}$ and a large set $A'$ of arcs in the window whose set of improvement vectors are linearly independent. The argument relied only on the zero-nonzero structure of the improvement vectors: it showed that the matrix $M$ formed by these vectors and a set of rows corresponding to a certain set $E'$ of witness edges is a lower triangular matrix with nonzero diagonal. Take a set $F'$ of functions of $I$ that contains for each edge $\{u,v\} \in E'$ a function $f_k(u,v)$ with this pair as arguments (it exists because the instance $I$ is complete), and form the matrix  $M'$ with the set $F'$ as rows and the set $A'$ of arcs as columns.
We will show that the matrix $M'$ has the same zero-nonzero structure as $M$, thus it also has full rank.

Consider an arc of the move sequence $\calS$ corresponding to two moves $\calS_i =\{u\}$, $\calS_j =\{u\}$, $i<j$, and a function $f_k$ of $I$. If $u$ is not one of the arguments of the function, then the corresponding entry of the improvement vector of the arc is obviously 0.
If $u$ is one of the argument, i.e. the $k$-th function is $f_k(u,v)$  (similarly if it is $f_k(v,u)$), then the corresponding entry of the improvement vector of the arc is
$\gamma_i(u) [f_k(\gamma_i(u), \gamma_i(v)) -f_k(-\gamma_i(u), \gamma_i(v))] -  
\gamma_j(u) [f_k(\gamma_j(u), \gamma_j(v)) -f_k(-\gamma_j(u), \gamma_j(v))]$.
If $v$ moves an even number of times between $\calS_i$ and $\calS_j$, then $\gamma_i(v)=\gamma_j(v)$ and it follows that the entry is 0, both in the case that $\gamma_i(u)=\gamma_j(u)$ and 
in the case that $\gamma_i(u)=-\gamma_j(u)$.
On the other hand, if $v$ moves an odd number of times between $\calS_i$ and $\calS_j$, then $\gamma_i(v)=-\gamma_j(v)$ and it follows that the $k$-th entry of the improvement vector is $\gamma_i(u) [f_k(\gamma_i(u), \gamma_i(v)) -f_k(-\gamma_i(u), \gamma_i(v))] -  
\gamma_j(u) [f_k(\gamma_j(u), -\gamma_i(v)) -f_k(-\gamma_j(u), -\gamma_i(v))]$.
Letting $\gamma_i(u)=a, \gamma_i(v)=b$,
the entry is  $a[f_k(a,b) + f_k(-a,-b) - f_k(-a,b) - f_k(a,-b) ]$ (both when $\gamma_i(u)=\gamma_j(u)$
and when $\gamma_i(u) = -\gamma_j(u)$);
this quantity is nonzero because $f_k$ is nonseparable.
Thus, the entry for $f_k(u,v)$ of the improvement vector of the arc is nonzero exactly when the entry of the arc in the Max Cut problem for the edge $(u,v)$ is nonzero. 
It follows that the matrix $M'$ has the same zero-nonzero structure as $M$, thus it also has full rank.

In Case 2 of the Max Cut problem, where the number of 2-moves is very large, there were two subcases.
In the first subcase, where there are many parallel edges in the graph that we associated with the window of the move sequence, we found a large set of 2-cycles whose improvement vectors were linearly independent. In the other case, where there are "few" parallel edges, we constructed a large set of cycles (of length $O(\log n))$, again with linearly independent improvement vectors. In both cases, the proof of linear independence relied again only on the zero-nonzero structure of the vectors, and not on the precise value of the entries.
We will argue that in both cases, the corresponding vectors of these cycles in the BFOP instance $I$ have the same zero-nonzero structure.

In the first subcase we found many 2-cycles $(u_1,v_1), \ldots, (u_k,v_k)$, and corresponding "witness" edges $(u_1,z_1), \ldots, (u_k,z_k)$ such that the matrix $M$ with rows corresponding to the witness edges and columns corresponding to the 2-cycles in the Max Cut problem is lower triangular with nonzero diagonal. 
The nodes $u_i$ are distinct (the $v_i$ and the $z_i$ may not be distinct) and $z_i \neq u_i, v_i$ for all $i$.
For each witness pair $(u_i,z_i)$ pick a function $f_{r_i}$ of instance $I$ with this pair of variables as arguments, in either order, say wlog the function is $f_{r_i}(u_i,z_i)$.
Consider the matrix $M'$ with rows corresponding to the functions $f_{r_i}(u_i,z_i)$ and columns corresponding to the 2-cycles $(u_i,v_i)$.
Note that the entry $M(j,i)$ is nonzero if one of the nodes $u_j, z_j$ is in $\{u_i, v_i\}$ and the other node appears an odd number of times between the two moves $\{u_i, v_i\}$, and it is 0 otherwise, i.e. if $\{u_j, z_j\} \cap \{u_i, v_i\} = \emptyset$, or if one of $u_j, z_j$ is in $\{u_i, v_i\}$ and the other node appears an even number of times between the two moves $\{u_i, v_i\}$. Importantly it cannot be that $\{u_j, z_j\} = \{u_i, v_i\}$ because $u_j \neq u_i, v_i$.
Examining the value $M'(j,i)$ in the same way as in the case of arcs above, we observe that 
if $M(j,i)=0$ then also $M'(j,i)=0$, and if $M(j,i)\neq 0$ then also $M'(j,i)\neq 0$. Thus, $M'$ has the same zero-nonzero structure as $M$ and hence it has also full rank.

In the second subcase of Case 2, we found many cycles $C_1, \ldots C_k$ and corresponding witness edges $\{u_i,v_i\}$ such that for every $i$, (1) $C_i$ does not contain any $u_j$ for $j<i$, nor $v_i$,
(2) $C_i$ has exactly two edges incident to $u_i$ and node $v_i$ appears an odd number of times between the two moves corresponding to these two edges, (3) if $C_i$ contains $v_j$ for some $j<i$ (the cycle $C_i$ may go more than once through $v_j$), then
$u_j$ does not appear between any pair of moves that correspond to consecutive edges of the cycle $C_i$ incident to $v_i$.
We used these properties in Max Cut to show that the matrix $M$ whose rows correspond to the witness edges and the columns correspond to the cycles $C_i$ is lower triangular with nonzero diagonal.
As before, for each witness pair $(u_i,v_i)$ pick a function $f_{r_i}$ of instance $I$ with this pair of variables as arguments, and let $M'$ be the matrix with these functions as rows and the cycles $C_i$ as columns. 
We can use the above properties to show that the matrix $M'$ is also lower triangular with nonzero diagonal.
Property (2) and the fact that $v_i \notin C_i$ (from property (1)) imply that $M'(i,i) \neq 0$ for all $i$.
Properties (1) and (3) can be used to show that $M'(j,i) =0$ for all $j <i$. Therefore, $M'$ has full rank.

Once we have Lemma \ref{maintech} for the BFOP instance $I$, the rest of the proof is the same as for Max Cut. 
The only difference is that, if the maximum value of a function in $I$ is $d$ (a constant, or even polynomial in $n$), then the maximum absolute value of the objective function is $md$ instead of $n^2$ that it was in Max Cut.
\end{proof}

\clearpage
\section{Conclusions}\label{conclusions}
We analyzed the smoothed complexity of the SWAP algorithm for Graph Partitioning and the 2-FLIP algorithm for Max Cut and showed that with high probability the algorithms terminate in quasi-polynomial time for any pivoting rule. The same result holds more generally for the class of maximum binary constraint satisfaction problems (like Max-2SAT, Max Directed Cut, and others). We have not made any attempt currently to optimize the exponent of $\log n$ in the bound, but we believe that with a more careful analysis the true exponent will be low.  There are several interesting open questions raised by this work. We list some of them below.

1. Can our bounds be improved to polynomial? In the case of the 1-FLIP algorithm in the full perturbation model (i.e. when all edges of $K_n$ are perturbed) a polynomial bound was proved in \cite{angel2017local}. Can a similar result be shown for 2-FLIP and SWAP?

2. Can our results be extended to the structured smoothed model, i.e., when we are given a graph $G$ and only the edges of $G$ are perturbed?  In the case of 1-FLIP we know that this holds \cite{etscheid2015smoothed,chen2020}, but 2-FLIP is much more challenging.

3. We saw in this paper how to analyze local search when one move flips simultaneously two nodes. This is a qualitative step up from the case of single flips, that creates a number of obstacles which had to be addressed. This involved the introduction of nontrivial new techniques in the analysis of the sequence of moves, going from sets to graphs. Dealing with local search that flips 3 or more nodes will require extending the methods further to deal with hypergraphs.  We hope that our techniques will form the basis for handling local search algorithms that flip multiple nodes in one move, e.g. $k$-FLIP for higher $k$, and even more ambitiously powerful methods like Kernighan-Lin that perform a deep search in each iteration and flip/swap an unbounded number of nodes.

4. Can our results be extended to Max $k$-Cut or $k$-Graph Partitioning where the graph is partitioned into $k>2$ parts? In the case of 1-FLIP for Max $k$-Cut quasi-polynomial bounds were shown in \cite{bibak2019improving}.

5. Can similar results be shown for Max-CSP with constraints of higher arities, for example Max 3SAT? 
No bounds are known even for 1-FLIP.  In fact, analyzing 1-FLIP for Max 3SAT seems to present challenges that have similarities with those encountered in the analysis of 2-FLIP for Max 2SAT and Max Cut, so it is possible that the techniques developed in this paper will be useful also in addressing this problem.

\clearpage

\appendix
\section{Missing Proofs from Section~\ref{sec:setup}}
\CancellationOfExternalEdgesArc*

\begin{proof}
Note that $\tau_i(u)=\gamma_i(u)$, $\tau_j(u)=\gamma_j(u)$, and by definition, $\imp_{\tau_0,\calS}(i)=\imp_{\gamma_0,\calS}(i)^*$ is the projection of $\imp_{\gamma_0,\calS}(i)$ on $E(\calS)$. 

Thus, for any edge $e\in E(\calS)$, $w_e$ is the same as $(\tau_i(u)\cdot \imp_{\tau_0,\calS}(i)-\tau_j(u)\cdot \imp_{\tau_0,\calS}(j))_e$. For any edge $e\notin E(\calS)$, if $e=\{v_1,v_2\}$ doesn't contain $u$, the improvement vectors are 0 on $e$ and
correspondingly $w[u,i,j]_e=0$. For the last case, let us assume $e=\{u,v\}$ where $v$ is inactive. We have that
\[
\imp_{\gamma_0,\calS}(i)_e = \gamma_{i-1}(v)\gamma_{i-1}(u) = -\gamma_{i-1}(v)\gamma_{i}(u)
\ \text{ \& }\ 
\imp_{\gamma_0,\calS}(j)_e = -\gamma_{j-1}(v)\gamma_{j}(u).\]
Since $v$ is not active, $\gamma_{i}(v) = \gamma_{0}(v)$ for any $i\in[\ell]$. So, we get that
\[(\gamma_i(u)\cdot \imp_{\gamma_0,\calS}(i)-\gamma_j(u)\cdot 
\imp_{\gamma_0,\calS} (j))_e = -\gamma_{i-1}(v) + \gamma_{j-1}(v)=0.\]
This finishes the proof of the lemma.
\end{proof}

\CancellationOfExternalEdgesCycle*\begin{proof}
Again recall that $\tau_i(u)=\gamma_i(u)$, $\tau_j(u)=\gamma_j(u)$, and by definition, $\imp_{\tau_0,\calS}(i)=\imp_{\gamma_0,\calS}(i)^*$ is the projection of $\imp_{\gamma_0,\calS}(i)$ on $E(\calS)$.

Thus, for edge $e\in E(\calS)$, $w[C]_e$ is the same as $\sum_{j\in [t]} b_j\cdot \imp_{\tau_0,\calS}(c_j)_e$. For edge $e\notin E(\calS)$, if $e$ doesn't contain $u_i$ for any $i\in [t]$, the improvement vectors are 0 on $e$ and correspondingly $w[C]_e=0$. For the last case, let us assume $e=(u ,v)$ where $v$ is inactive. Then we have (where index $0$ corresponds to $t$)
\begin{align*}
    \sum_{j\in [t]} b_j\cdot \imp_{\gamma_0,\calS}(c_j)_e = & \sum_{i\in [t]: u_i=u}\Big(b_{i-1}\imp_{\gamma_0,\calS}(c_{i-1})_e+b_{i}\imp_{\gamma_0,\calS}(c_{i})_e\Big) \\
    = & \sum_{i\in [t]:u_i=u}\Big( -b_{i-1}\gamma_{c_{i-1}-1}(v)\gamma_{c_{i-1}}(u_i)-b_{i}\gamma_{c_{i}-1}(v)\gamma_{c_i}(u_{i})\Big).
\end{align*}
Since $v$ is inactive, $\gamma_{c_{i-1}-1}(v)=\gamma_{c_{i}-1}(v)=\gamma_0(v)$. Each term above is equal to 
\[
-\gamma_0(v)(b_{i-1}\gamma_{c_{i-1}}(u_i)+ b_{i}\gamma_{c_i}(u_{i}))=0.
\]
This finishes the proof of the lemma.
%
\end{proof}
\section{Rank Invariance of Improving vectors over Initial Configuration}\label{sec:rank-invariance}
In this section we prove that the rank of the improvement vectors for the set of $\emph{1-move}(\calS)$, $\emph{2-move}(\calS)$, $\arcs(\calS)$ and $\cycles(\calS)$ is independent of the initial configuration $\gamma_0$ of the vertices.
\begin{proof}[Proof of Lemmas~\ref{lem:independence-init-rank-arcs}, \ref{lem:independence-init-dependent-cycles} \& \ref{lem:independence-init-rank-cycles}]

We start by recalling the following useful facts:

\begin{fact}\label{fact:config-invariance}
Let $\gamma_0,\gamma_0'\in \{\pm 1\}^n$ be two arbitrary initial configurations. Then, it holds that \[ \gamma_0(v)\gamma_0'(v)=\gamma_i(v)\gamma_i'(v) \quad \text{ for any } i\in[\ell],\]
where $\gamma_{i},\gamma_{i}'$ is obtained from $\gamma_{i-1},\gamma_{i-1}'$ by flipping nodes in $\calS_{i}$, for a move sequence $\calS=(\calS_1,\ldots,\calS_\ell)$. 
\end{fact}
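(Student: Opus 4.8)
The plan for Fact~\ref{fact:config-invariance} is a one-line induction on the move index $i$. Fix $v\in V_n$ and the two initial configurations $\gamma_0,\gamma_0'$, and let $\gamma_0,\dots,\gamma_\ell$ and $\gamma_0',\dots,\gamma_\ell'$ be the sequences induced by $\calS=(\calS_1,\dots,\calS_\ell)$. The base case $i=0$ is trivial. For the step, note that by definition of an induced configuration sequence, passing from $\gamma_{i-1}$ to $\gamma_i$ and from $\gamma_{i-1}'$ to $\gamma_i'$ flips exactly the \emph{same} coordinate set $\calS_i$; hence $\gamma_i(v)=\gamma_{i-1}(v)$ and $\gamma_i'(v)=\gamma_{i-1}'(v)$ when $v\notin\calS_i$, and $\gamma_i(v)=-\gamma_{i-1}(v)$ and $\gamma_i'(v)=-\gamma_{i-1}'(v)$ when $v\in\calS_i$. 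In either case $\gamma_i(v)\gamma_i'(v)=\gamma_{i-1}(v)\gamma_{i-1}'(v)$, and the induction hypothesis finishes the step. The identical argument on partial configurations gives $\tau_i(v)\tau_i'(v)=\tau_0(v)\tau_0'(v)$ for all $i$ and all $v\in V(\calS)$.

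With the Fact in hand, the three rank-invariance lemmas follow uniformly. Given $\tau_0,\tau_0'\in\{\pm1\}^{V(\calS)}$, put $\eps_v:=\tau_0(v)\tau_0'(v)\in\{\pm1\}$ for $v\in V(\calS)$, and let $R$ be the diagonal sign matrix on $\R^{E(\calS)}$ with $R_{ee}=\eps_a\eps_b$ for $e=\{a,b\}$; $R$ is invertible (indeed $R^{-1}=R$) and depends only on $\tau_0,\tau_0'$. From the improvement formulas (\ref{eq:1-move-improvement}) and (\ref{eq:2-move-improvement}), the entry of $\imp_{\tau_0,\calS}(i)$ on an edge $e=\{a,b\}\in E(\calS)$ is either $0$ or $\tau_{i-1}(a)\tau_{i-1}(b)$, with a nonzero support that is the same for every configuration; since $\tau_{i-1}(a)\tau_{i-1}(b)=\eps_a\eps_b\,\tau_{i-1}'(a)\tau_{i-1}'(b)$ by the Fact, we get $\imp_{\tau_0',\calS}(i)=R\,\imp_{\tau_0,\calS}(i)$ for every move $i$. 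For an arc $\alpha=(i,j)$ of node $u$, the Fact also gives $\tau_i'(u)=\eps_u\tau_i(u)$ and $\tau_j'(u)=\eps_u\tau_j(u)$, so by (\ref{eq:arc-improvement}), $\imp_{\tau_0',\calS}(\alpha)=\tau_i'(u)\imp_{\tau_0',\calS}(i)-\tau_j'(u)\imp_{\tau_0',\calS}(j)=\eps_u\,R\,\imp_{\tau_0,\calS}(\alpha)$. Scaling each column by the scalar $\eps_u\in\{\pm1\}$ and then applying the fixed invertible $R$ does not change the rank, so $\rank_{\arcs}(\calS)$ is independent of $\tau_0$, proving Lemma~\ref{lem:independence-init-rank-arcs}.

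For cycles I would first establish Lemma~\ref{lem:independence-init-dependent-cycles}: using the Fact's identity $\tau_m(w)=\eps_w\tau_m'(w)$ at the relevant times, each defining equation $b_j\tau_{c_j}(u_{j+1})+b_{j+1}\tau_{c_{j+1}}(u_{j+1})=0$ (and likewise the wrap-around equation $b_t\tau_{c_t}(u_1)+b_1\tau_{c_1}(u_1)=0$) acquires a common sign factor that can be divided out, so the \emph{same} vector $b$ witnesses dependence with respect to $\tau_0'$; equivalently, every product $\tau_{c_{i-1}}(u_i)\tau_{c_i}(u_i)$ in the Dependence Criterion of Remark~\ref{remark:criterion} is configuration-invariant. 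Since the cancellation vector is thereby the same under $\tau_0$ and $\tau_0'$, we get $\imp_{\tau_0',\calS}(C)=\sum_j b_j\,\imp_{\tau_0',\calS}(c_j)=R\sum_j b_j\,\imp_{\tau_0,\calS}(c_j)=R\,\imp_{\tau_0,\calS}(C)$, and the same rank argument yields Lemma~\ref{lem:independence-init-rank-cycles}. The Fact itself presents no obstacle; the only care needed is the sign bookkeeping in the reduction, namely checking that the stray scalar $\eps_u$ for arcs is consistent with the diagonal of $R$ (on edge $\{u,w\}$ one has $\eps_u\cdot\eps_u\eps_w=\eps_w$, matching the direct computation $\imp_{\tau_0',\calS}(\alpha)_{\{u,w\}}=\eps_w\,\imp_{\tau_0,\calS}(\alpha)_{\{u,w\}}$), and that ``the same $b$'' is a meaningful statement, which is precisely the uniqueness of the cancellation vector once $b_1=1$ is fixed (noted after Definition~\ref{def:dependent-cycle}).
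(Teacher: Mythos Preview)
Your induction proof of the Fact is correct and is exactly the natural argument; the paper itself states Fact~\ref{fact:config-invariance} without proof, treating it as evident.

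Your additional sketch of Lemmas~\ref{lem:independence-init-rank-arcs}, \ref{lem:independence-init-dependent-cycles}, and \ref{lem:independence-init-rank-cycles} is also correct and follows essentially the same route as the paper's Appendix~\ref{sec:rank-invariance}: both introduce the diagonal sign matrix on $\R^{E(\calS)}$ with entry $\tau_0(a)\tau_0'(a)\tau_0(b)\tau_0'(b)$ on edge $\{a,b\}$ (your $R$, the paper's $\mathcal{D}[\tau_0,\tau_0']$) and argue that changing the initial configuration corresponds to left-multiplication by this invertible matrix, together with column sign flips. The paper packages the column scalars into further auxiliary matrices $\mathcal{T}(\tau_0,\calS)$, $\mathcal{D}'[\tau_0,\tau_0']$, and $\mathcal{B}(\calS)$, whereas you handle the per-arc scalar $\eps_u$ and the configuration-invariance of the cancellation vector $b$ more directly; these are cosmetic differences and the substance is the same.
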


\begin{fact}\label{fact:rank-invariance}
Let $A$ be a  $(k_1\times k_2)$ real-valued matrix $A$ and $B,C$ are full-rank $(k_1\times k_1)$ and $(k_2\times k_2)$ squared matrices correspondingly. Then, it holds that $\rank(A)=\rank(A^\top)$
and $ \rank(A)=\rank(BAC)$.
\end{fact}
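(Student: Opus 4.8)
The plan is to reduce the statement to two completely standard facts of finite-dimensional linear algebra over $\R$: (i) the column rank of a matrix equals its row rank, and (ii) left-multiplication (resp.\ right-multiplication) by an invertible square matrix is an isomorphism of column spaces (resp.\ preserves the rank), hence leaves the rank unchanged. There is no genuine obstacle here --- the statement is textbook material --- so the only real decision is how much of the row-rank $=$ column-rank fact to prove from scratch versus to cite; below I indicate a two-line self-contained argument.

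First I would establish $\rank(A)=\rank(A^\top)$. Put $r=\rank(A)=\dim\mathrm{col}(A)$ and choose $P\in\R^{k_1\times r}$ whose columns form a basis of $\mathrm{col}(A)$; expressing each column of $A$ in this basis yields a factorization $A=PQ$ with $Q\in\R^{r\times k_2}$. Transposing gives $A^\top=Q^\top P^\top$, so $\mathrm{col}(A^\top)\subseteq\mathrm{col}(Q^\top)$ and hence $\rank(A^\top)\le r=\rank(A)$; applying the same step to $A^\top$ in place of $A$ gives $\rank(A)\le\rank(A^\top)$, and the two inequalities yield equality. (Alternatively one may simply invoke the classical row-rank $=$ column-rank theorem.)

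Next I would show $\rank(BA)=\rank(A)$ for an invertible $k_1\times k_1$ matrix $B$: the map $x\mapsto Bx$ is a linear automorphism of $\R^{k_1}$, so it carries $\mathrm{col}(A)$ bijectively onto $B\cdot\mathrm{col}(A)=\mathrm{col}(BA)$, whence $\dim\mathrm{col}(BA)=\dim\mathrm{col}(A)$. Combining this with the first part, and using that $C$ invertible implies $C^\top$ invertible, I obtain $\rank(AC)=\rank\big((AC)^\top\big)=\rank\big(C^\top A^\top\big)=\rank(A^\top)=\rank(A)$, and finally $\rank(BAC)=\rank\big(B(AC)\big)=\rank(AC)=\rank(A)$, which is exactly the claimed identity.
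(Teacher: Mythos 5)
Your proof is correct. The paper states this as a recalled textbook fact and gives no proof of its own, so there is nothing to compare against; your argument (rank factorization $A=PQ$ for row-rank $=$ column-rank, invertible left-multiplication as an automorphism of the column space, and the transpose trick to handle right-multiplication) is the standard one and fully suffices, up to the trivial remark that the case $\rank(A)=0$ is handled separately since then $A=0$.
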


\noindent Let $
\mathcal{M}_{ \emph{$1$-move}}(\gamma_0,\calS),
\mathcal{M}_{ \emph{$2$-move}}(\gamma_0,\calS)$ be the matrices whose columns are the improvement vectors  of $1$-move  for a given initial configuration $\gamma_0$ and let $\mathcal{M}_{ \emph{$1$-move}}(\gamma_0,\calS),
\mathcal{M}_{ \emph{$2$-move}}(\gamma_0,\calS)$ be the submatrices of $\mathcal{M}_{ \emph{$1$-move}}(\tau_0,\calS),
\mathcal{M}_{ \emph{$2$-move}}(\tau_0,\calS)$ including only the rows which correspond to $E(\calS)$. 

Schematically, we have that for the \emph{1-move} case:
\[
\mathcal{M}_{ \emph{$1$-move}}(\tau_0,\calS)=
\begin{bmatrix}
 & \vdots  & \\
\cdots  & (\imp_{\tau{0} ,\mathcal{S}}( \calS_k=\{u\})_{e=\{u,v\}}=\tau_{k-1}(u)\tau_{k-1}(v) & \cdots \\
 & \vdots  & \\
\cdots  & (\imp_{\tau{0} ,\mathcal{S}}( \calS_k=\{u\})_{e=\{z,w\}}=0 & \cdots \\
 & \vdots  & \\
\end{bmatrix}_{|E(\calS)|,|\emph{1-move}(\calS)|}
\]
and for the \emph{2-move} case:
\[
\mathcal{M}_{ \emph{$2$-move}}(\tau_0,\calS)=
\begin{bmatrix}
 & \vdots  & \\
\cdots  & (\imp_{\tau{0} ,\mathcal{S}}( \calS_{k'}=\{u,v\})_{e=\{u,v\}}=0 & \cdots \\
\cdots  & (\imp_{\tau{0} ,\mathcal{S}}( \calS_{k'}=\{u,v\})_{e=\{u,z\}}=\tau_{{k'}-1}(u)\tau_{{k'}-1}(z) & \cdots \\
\cdots  & (\imp_{\tau{0} ,\mathcal{S}}( \calS_{k'}=\{u,v\})_{e=\{v,w\}}=\tau_{{k'}-1}(v)\tau_{{k'}-1}(w) & \cdots \\
 & \vdots  & \\
\cdots  & (\imp_{\tau{0} ,\mathcal{S}}( \calS_{k'}=\{u,v\})_{e=\{z,w\}}=0 & \cdots \\ & \vdots  & \\
\end{bmatrix}_{|E(\calS)|,|\emph{2-move}(\calS)|}
\]
Notice that we can derive $\mathcal{M}_{ \emph{$1$-move}}(\tau_0',\calS),\mathcal{M}_{ \emph{$2$-move}}(\tau_0',\calS)$ by multiplying $\mathcal{M}_{ \emph{$1$-move}}(\tau_0,\calS),\mathcal{M}_{ \emph{$2$-move}}(\tau_0,\calS)$ from left by
the squared diagonal $|E_n|\times|E_n|$ matrix $\mathcal{D}[\tau_0,\tau_0']$ such that
\[\mathcal{D}[\tau_0,\tau_0']_{(e,e)=((u,v),(u,v))}=\tau_0(u)\tau_0'(u)\tau_0(v)\tau_0'(v)\text{ and } \mathcal{D}[\tau_0,\tau_0']_{(e,e')}=0\text{ for $e\neq e'$.}\] 
Indeed,   we have that for an entry $(e,k)$ representing an edge $e=(u,v)$ and the $k$-th $\mu$-move 
\begin{align*}
(\mathcal{D}[\tau_0,\tau_0']\mathcal{M}_{ \emph{$\mu$-move}}(\tau_0,\calS))_{e,k}&=& &
\begin{cases}
\tau_0(u)\tau_0'(u)\tau_0(v)\tau_0'(v)\tau_k(u)\tau_k(v) & \text{if }\mathcal{M}_{ \emph{$\mu$-move}}(\tau_0,\calS))_{e,k}=\tau_k(u)\tau_k(v)\\
0& \text{if }\mathcal{M}_{ \emph{$\mu$-move}}(\tau_0,\calS))_{e,k}=0
\end{cases}\\
&\overset{\footnotesize Fact~\ref{fact:config-invariance}}{=}& &
\begin{cases}
\tau_k(u)\tau_k'(u)\tau_k(v)\tau_k'(v)\tau_k(u)\tau_k(v) & \text{if }\mathcal{M}_{ \emph{$\mu$-move}}(\tau_0,\calS))_{e,k}=\tau_k(u)\tau_k(v)\\
0& \text{if }\mathcal{M}_{ \emph{$\mu$-move}}(\tau_0,\calS))_{e,k}=0
\end{cases}
\end{align*}
Since  $\tau_i(v)^2$ equals 1 for any $v\in V_n$ for every $i\in[\ell]$, we get that 
\begin{align*}
(\mathcal{D}[\tau_0,\tau_0']\mathcal{M}_{ \emph{$\mu$-move}}(\tau_0,\calS))_{e,k}&=& &
\begin{Bmatrix*}[l]
\tau_k'(u)\tau_k'(v)& \text{if }\mathcal{M}_{ \emph{$\mu$-move}}(\tau_0,\calS))_{e,k}=\tau_k(u)\tau_k(v)\\
0&  \text{if }\mathcal{M}_{ \emph{$\mu$-move}}(\tau_0,\calS))_{e,k}=0
\end{Bmatrix*} = (\mathcal{M}_{ \emph{$\mu$-move}}(\tau_0',\calS))_{e,k}
\end{align*}
 for any $\mu\in\{1,2\}$. Since $\mathcal{D}[\tau_0,\tau_0']$ is a full-rank matrix, leveraging Fact~\ref{fact:rank-invariance}, the above argument proves that $\rank(
\mathcal{M}_{ \emph{$\mu$-move}}(\tau_0,\calS))$ is independent of the initial configuration.

More interestingly, in order to prove Lemma~\ref{lem:independence-init-rank-arcs}, it suffices to prove that the rank of the column matrix with the improvement vectors of arcs is independent of the initial configuration. 
In fact, let \[
M_{\arcs(\mathcal{S})}(\tau_0,\calS)) =\begin{bmatrix}
\vdots  &  & \vdots  &  & \vdots \\
\imp_{\tau _{0} ,\mathcal{S}}( \alpha _{1}) & \cdots  & \imp_{\tau _{0} ,\mathcal{S}}( \alpha _{k}) & \cdots  & \imp_{\tau _{0} ,\mathcal{S}}( \alpha _{|\arcs(\mathcal{S}) |})\\
\vdots  &  & \vdots  &  & \vdots 
\end{bmatrix}
\]
By definition, we have that for any $\alpha\in\arcs(\calS)$:
\[\imp_{\tau_0,\calS}(\alpha)= \tau_i(u)\cdot \imp_{\tau_0,\calS}(i)-\tau_j(u)\cdot 
  \imp_{\tau_0,\calS}(j)\in \mathbb{Z}^{E(\calS)},\]
or equivalently $M_{\arcs(\mathcal{S})}(\tau_0,\calS))=\mathcal{M}_{ \emph{$1$-move}}(\tau_0,\calS))\cdot \mathcal{T}(\tau_0,\calS)$, where $\mathcal{T}(\tau_0,\calS)$ is a sparse ($|\emph{1-move}(\calS)| \times |\arcs(S)|$) rectangular matrix such that \[\mathcal{T}(\tau_0,\calS)_{k-\text{th \emph{1-move}},\alpha=(i,j)}=\begin{cases}
(-1)^{\tfrac{(k-i)}{(j-i)}}\tau_k(u) & k\in\{i,j\}\\
0 & \text{otherwise}
\end{cases},\text{ where $u$ is the corresponding node of arc $\alpha$.}\]
Schematically, we have that matrix ${\mathcal{T}(\tau_0,\calS)}$ includes the $(\tau_i(node(\alpha)),-\tau_j(node(\alpha)))_{\alpha=(i,j)\in \arcs(\calS)}$ pairs expanded to $\{0,\pm 1\}^{\emph{1-move}(\calS)}$ .
\[\footnotesize
{\mathcal{T}(\tau_0,\calS)}=
\begin{bmatrix}
0 & \tau _{i}( node( \alpha _{1} =( i,j))) & 0 & \cdots  & -\tau _{j}( node( \alpha _{1} =( i,j))) & 0 & \cdots  & 0\\
\vdots  & 0 & \vdots  & \vdots  & 0 &  &  & \vdots \\
\vdots  & \vdots  & 0 & \vdots  & \vdots  &  &  & \vdots \\
0 & 0 & \tau _{i'}( node( \alpha _{arcs(\mathcal{S})}) =( i',j'))) & 0 & \cdots  & \tau _{i'}( node( \alpha _{arcs(\mathcal{S})}) =( i',j'))) & 0 & 0
\end{bmatrix}^{\top }
\]
Again, notice that we can derive $\mathcal{T}(\tau_0')$ by multiplying $\mathcal{T}(\tau_0)$ from right by
the squared diagonal $|\arcs(S)|\times|\arcs(S)|$ matrix $\mathcal{D'}$ such that
\[\mathcal{D}'[\tau_0,\tau_0']_{(\alpha,\alpha)=((i,j),(i,j))}=\tau_0(u)\tau_0'(u)\text{ where $u$ is the node of $\alpha$ and } \mathcal{D}'[\tau_0,\tau_0']_{(\alpha,\alpha')}=0\text{ for $\alpha\neq \alpha'$.}\] 
Indeed,   we have that for an entry $(k,\alpha)$ representing an arc $\alpha=(i,j)$ and the $k$-th $1$-move 

\begin{align*}
(\mathcal{T}(\tau_0)\mathcal{D}'[\tau_0,\tau_0'])_{k,\alpha}&=& &
\begin{Bmatrix*}[l]
\tau_0(u)\tau_0'(u)\tau_i(u)=\tau_i(u)\tau_i'(u)\tau_i(u)=\tau_i'(u)  & \text{if }(\mathcal{T}(\tau_0))_{k,\alpha}=\tau_i(u)\\
-\tau_0(u)\tau_0'(u)\tau_j(u)=\tau_j(u)\tau_j'(u)\tau_j(u)=-\tau_j'(u) & \text{if }(\mathcal{T}(\tau_0))_{k,\alpha}=-\tau_j(u)\\
0& \text{if }(\mathcal{T}(\tau_0))_{k,\alpha}=0
\end{Bmatrix*}=(\mathcal{T}(\tau_0'))_{k,\alpha}
\end{align*}
where first equality leverages Fact~\ref{fact:config-invariance} and the second one uses the fact
$\tau_i(v)^2$ equals 1 for any $v\in V_n$ for every $i\in[\ell]$. To sum-up, it holds that 
\begin{align*}
    M_{\arcs(\mathcal{S})}(\tau_0',\calS))&=\mathcal{M}_{ \emph{$1$-move}}(\tau_0',\calS))\cdot \mathcal{T}(\tau_0',\calS)=
\left(\mathcal{D}[\tau_0,\tau_0']\mathcal{M}_{ \emph{$1$-move}}(\tau_0,\calS))\right)\cdot \left(\mathcal{T}(\tau_0,\calS)\mathcal{D}'[\tau_0,\tau_0']\right)\\
&=\mathcal{D}[\tau_0,\tau_0']\left(\mathcal{M}_{ \emph{$1$-move}}(\tau_0,\calS))\right)\cdot \mathcal{T}(\tau_0,\calS))\mathcal{D}'[\tau_0,\tau_0']=
\mathcal{D}[\tau_0,\tau_0']M_{\arcs(\mathcal{S})}(\tau_0',\calS))\mathcal{D}'[\tau_0,\tau_0']
\end{align*}

Given that $\mathcal{D}'[\tau_0,\tau_0']$ and $\mathcal{D}[\tau_0,\tau_0']$ are full-rank matrices, leveraging Fact~\ref{fact:rank-invariance}, we can prove that the above argument proves that $\rank(
\mathcal{M}_{\arcs(S)}(\tau_0,\calS))$ is independent of the initial configuration.

Now, in order to prove Lemma~\ref{lem:independence-init-dependent-cycles}, we recall the definition of a dependent cycle $C$ of size $t$, namely let $C=\left(c_1=\{u_1,u_2\},c_2=\{u_2,u_3\}\ldots,c_t=\{u_t,u_1\}\right)$ in a move sequence $\calS$. For an initial configuration $\tau_0\in \{\pm 1\}^{V(\calS)}$,
we say that cycle  is \emph{dependent} with respect to $\tau_0$ if there exists $b\in \{\pm 1\}^t$ such that 
\[\begin{bmatrix}
\tau _{1}( u_{1}) & 0 & \cdots  & \cdots  & 0 & \tau _{t}( u_{1})\\
\tau _{1}( u_{2}) & \tau _{2}( u_{2}) & \cdots  & \cdots  & \cdots  & 0\\
0 & \tau _{2}( u_{3}) & \tau _{3}( u_{3}) & \cdots  & \cdots  & 0\\
\vdots  & \vdots  & \ddots  & \ddots  & \ddots  & 0\\
\vdots  & \vdots  & \vdots  & \ddots  & \ddots  & 0\\
0 & 0 & 0 & 0 & \tau _{t-1}( u_{t}) & \tau _{t}( u_{t})
\end{bmatrix}\begin{bmatrix}
b_{1}\\
b_{2}\\
\vdots \\
\vdots \\
\vdots \\
b_{t}
\end{bmatrix} =\begin{bmatrix}
0\\
\vdots \\
\vdots \\
\vdots \\
\vdots \\
0
\end{bmatrix}\equiv \Delta(\tau_0)\cdot b = 0_t
\]
Again, notice that we can derive $\Delta(\tau_0')$ by multiplying $\Delta(\tau_0)$ from left by
the squared diagonal $|C|\times|C|$ matrix $\mathcal{D''}$ such that
\[\mathcal{D''}[\tau_0,\tau_0']_{(u_k,u_k')}=\tau_0(u)\tau_0'(u)\text{ for $k=k'$ and } \mathcal{D''}[\tau_0,\tau_0']_{(u_k,u_k')}=0\text{ for $k\neq k'$.}\] 
Indeed,   we have that for an entry $(k,c_i)$ representing a 2-\emph{move} $c_i=\{u_{i \mod t},u_{i+1 \mod t}\}$ and the $u_k$-th node of the cycle $C$:

\begin{align*}
(\mathcal{D}''[\tau_0,\tau_0']\Delta(\tau_0))_{k,c_i}&=& &
\begin{Bmatrix*}[l]
\tau_0(u_k)\tau_0'(u_k)\tau_i(u_k)=\tau_i(u_k)\tau_i'(u_k)\tau_i(u_k)=\tau_i'(u_k)  & \text{if }(\Delta(\tau_0))_{k,c_i}=\tau_i(u_k)\\
0& \text{if }(\Delta(\tau_0))_{k,c_i}=0
\end{Bmatrix*}=(\Delta(\tau_0'))_{k,c_i}
\end{align*}

Since $\mathcal{D}''[\tau_0,\tau_0']$ is full-rank matrix, we get that $\null(\Delta(\tau_0))=\null(\Delta(\tau_0'))$. Therefore if there exists a non-zero vector $b$ such that $\Delta(\tau_0) \cdot b = 0 $ then $\Delta(\tau_0') \cdot b = 0 $ as well, completing the proof for
Lemma~\ref{lem:independence-init-dependent-cycles}.

For the last case of Lemma~\ref{lem:independence-init-rank-cycles}, we start by the following observations:
\begin{enumerate}
    \item If $\Delta(\tau_0)\cdot b=0$, then for any $b'=\lambda b$, it also holds that $\Delta(\tau_0)\cdot b'=0$ for any non-zero $\lambda$ constant.
    \item If $\Delta(\tau_0)\cdot b=0$ and $b_k=0$ for some $k\in[t]$, then $b$ is the zero vector, $b=0$.
    \item More precisely, the vector that belong to the (right) null space of $\Delta(\tau_0)$, i.e., all the vectors $b$ such that $\Delta(\tau_0)\cdot b=0$, are of the following form:
    \[\Delta(\tau_0)\cdot b=0\Leftrightarrow b=b_1\cdot\left(1,-\frac{\tau_{c_1}(u_2)}{\tau_{c_2}(u_2)},\cdots, (-1)^{k-1}\frac{\prod_{i\in[2:k]}\tau_{c_{i-1}}(u_i)}{\prod_{i\in[2:k]}\tau_{c_{i}}(u_i)},\cdots, 
    (-1)^{t-1}\frac{\prod_{i\in[2:t]}\tau_{c_{i-1}}(u_i)}{\prod_{i\in[2:t]}\tau_{c_{i}}(u_i)}\right)^\top\]
    \item The term $\frac{\prod_{i\in[2:k]}\tau_{c_{i-1}}(u_i)}{\prod_{i\in[2:k]}\tau_{c_{i}}(u_i)}$ is independent of the initial configuration.
\end{enumerate}
Items (1)-(3) are simple linear algebras derivation. Item (4) holds since \[
\frac{\prod_{i\in[2:k]}\tau_{c_{i-1}}(u_i)}{\prod_{i\in[2:k]}\tau_{c_{i}}(u_i)}=
\frac{\prod_{i\in[2:k]}\tau_{c_{i-1}}(u_i)}{\prod_{i\in[2:k]}\tau_{c_{i}}(u_i)}\times 
\underbrace{\frac{\prod_{i\in[2:k]}\tau'_{c_{i-1}}(u_i)\tau_{c_{i-1}}(u_i)}{\prod_{i\in[2:k]}\tau'_{c_{i}}(u_i)\tau_{c_{i}}(u_i)}}_{=1 \text{ from Fact~\ref{fact:config-invariance}} }=
\frac{\prod_{i\in[2:k]}\tau'_{c_{i-1}}(u_i)}{\prod_{i\in[2:k]}\tau'_{c_{i}}(u_i)}\]
Hence, if $b$ is a cancelling vector with $b_1=1$, then from items (1),(2),(3),(4) we have that $b$ is unique for every cycle $C$ and independent initial configuration $\tau_0$.

More interestingly, in order to prove Lemma~\ref{lem:independence-init-rank-cycles}, it suffices to prove that the rank of the column matrix with the improvement vectors of cycles is independent of the initial configuration. 
In fact, let \[
M_{\cycles(\mathcal{S})}(\tau_0,\calS)) =\begin{bmatrix}
\vdots  &  & \vdots  &  & \vdots \\
\imp_{\tau _{0} ,\mathcal{S}}( C_{1}) & \cdots  & \imp_{\tau _{0} ,\mathcal{S}}( C _{k}) & \cdots  & \imp_{\tau _{0} ,\mathcal{S}}( C_{|\cycles(\mathcal{S}) |})\\
\vdots  &  & \vdots  &  & \vdots 
\end{bmatrix}
\]
By definition, we have that for any $C\in\cycles(\calS)$:
\[
\imp_{\tau_0,\calS}(C=(c_1,\cdots,c_t))=\sum_{j\in [t]} b_j(C)\cdot \imp_{\tau_0,\calS}(c_j).\]
or equivalently $M_{\cycles(\mathcal{S})}(\tau_0,\calS))=\mathcal{M}_{ \emph{$2$-move}}(\tau_0,\calS))\cdot \mathcal{B}(\calS)$ 
 where $\mathcal{T}(\tau_0,\calS)$ is a sparse ($|\emph{2-move}(\calS)| \times |\cycles(S)|$) rectangular matrix such that \[\mathcal{B}(\calS)_{\rho-\text{th \emph{$2-$move}},C}=\begin{cases}
b_\rho(C) & c_\rho\in C\\
0 & \text{otherwise}
\end{cases},\text{ where $u$ is the corresponding node of arc $\alpha$.}\]
Schematically, we have that matrix ${\mathcal{B}(\calS)}$ includes the $(b(C))_{C\in \cycles(\calS)}$ vectors expanded to $\{0,\pm 1\}^{\emph{2-move}(\calS)}$ .
\[\footnotesize
{\mathcal{B}(\calS)}=
\begin{bmatrix}
0           & b_{1}(C_1) & 0& b_{\rho}(C_1) & 0 & b_{|C_1|}(C_1) & 0 & \cdots  & 0\\
b_{1}(C_2)& 0 & \cdots & b_{\rho'}(C_2) & \vdots &0 &\cdots & b_{|C_2|}(C_2)& 0 \\
& 0 & \cdots &  & \vdots & & \cdots && 0 \\
b_{1}(C_{|\cycles(\calS)|})& b_{\rho''}(C_{|\cycles(\calS)|}) & 0 & \cdots & \cdots &0 &\cdots & b_{|C_{|\cycles(\calS)}|}(C_{|\cycles(\calS)|})& 0 \\
\end{bmatrix}^{\top }
\]
Having noticed the above ones, it is easy to see that
 $\rank(
\mathcal{M}_{\arcs(S)}(\tau_0,\calS))$ is independent of the initial configuration, since
\begin{align*}
    \mathcal{M}_{\arcs(S)}(\tau_0',\calS)&=\mathcal{M}_{ \emph{$2$-move}}(\tau_0',\calS) \cdot
{\mathcal{B}(\calS)}=(\mathcal{D}[\tau_0,\tau_0']\cdot\mathcal{M}_{ \emph{$2$-move}}(\tau_0,\calS)) {\mathcal{B}(\calS)}\\&=\mathcal{D}[\tau_0,\tau_0']\cdot(\mathcal{M}_{ \emph{$2$-move}}(\tau_0,\calS) {\mathcal{B}(\calS)})=
\mathcal{D}[\tau_0,\tau_0']\cdot \mathcal{M}_{\arcs(S)}(\tau_0,\calS).
\end{align*}
Thus, by Fact~\ref{fact:rank-invariance} we get that  $\rank(
\mathcal{M}_{\arcs(S)}(\tau_0,\calS))=\rank(
\mathcal{M}_{\arcs(S)}(\tau_0',\calS))$, concluding also the proof of Lemma~\ref{lem:independence-init-rank-cycles}.
\end{proof}

\clearpage
\begingroup
\setlength{ \bibsep} {12pt}
\bibliography{ref.bib}
\endgroup

\end{document}